\newif\iffull
\definecolor{teal}{HTML}{008080}
\newtheorem{theorem}{Theorem}
\newtheorem{lemma}{Lemma}
\newtheorem{proposition}{Proposition}
\newtheorem{corollary}{Corollary}
\theoremstyle{definition}
\newtheorem{definition}{Definition}
\newtheorem{example}{Example}
\newcommand{\descr}[1]{\vspace{0.2cm} \noindent \textbf{\sffamily #1}}
\begin{document}

\author[1]{Josh Smith}

\author*[2]{Hassan Jameel Asghar}

\author[3]{Gianpaolo Gioiosa}

\author[3]{Sirine Mrabet}

\author[4]{Serge Gaspers}

\author[3]{Paul Tyler}

\affil[1]{Work was done when Josh Smith was at Data61 (CSIRO), Australia, e-mail: j\_smith95@live.com}

\iffull
\affil[2]{Macquarie University and Data61 (CSIRO), Australia, e-mail: hassan.asghar@mq.edu.au. $^\dagger$This is the full version of the paper with the same title to appear in the proceedings on the 22nd Privacy Enhancing Technologies Symposium (PETS 2022).}
\else
\affil[2]{Macquarie University and Data61 (CSIRO), Australia, e-mail: hassan.asghar@mq.edu.au}
\fi

\affil[3]{Data61 (CSIRO), Australia, E-mail: \{gianpaolo.gioiosa, sirine.mrabet, paul.tyler\}@data61.csiro.au}

\affil[4]{University of New South Wales, Australia, e-mail: serge.gaspers@unsw.edu.au} 

\iffull
\title{Making the Most of Parallel Composition in Differential Privacy$^\dagger$}
\else
\title{Making the Most of Parallel Composition in Differential Privacy}
\fi

\runningtitle{Making the Most of Parallel Composition in Differential Privacy}


\begin{abstract}{The parallel composition theorem of differential privacy states that the privacy loss incurred by a set of queries on disjoint partitions of the data domain is only the maximum loss incurred by the queries individually. Given a set of queries, this promises significant savings in utility, if it is possible to determine how many of the queries compose in parallel. We show that this `optimal' use of the parallel composition theorem corresponds to finding the size of the largest subset of queries that `overlap' on the data domain, a quantity we call the \emph{maximum overlap} of the queries. It has previously been shown that a certain instance of this problem, formulated in terms of determining the sensitivity of the queries, is NP-hard, but also that it is possible to use graph-theoretic algorithms, such as finding the maximum clique, to approximate query sensitivity. In this paper, we consider a significant generalization of the aforementioned instance which encompasses both a wider range of differentially private mechanisms and a broader class of queries. We show that for a particular class of predicate queries, determining if they are disjoint can be done in time polynomial in the number of attributes. For this class, we show that the maximum overlap problem remains NP-hard as a function of the number of queries. However, we show that efficient approximate solutions exist by relating maximum overlap to the clique and chromatic numbers of a certain graph determined by the queries. The link to chromatic number allows us to use more efficient approximate algorithms, which cannot be done for the clique number as it may underestimate the privacy budget. Our approach is defined in the general setting of $f$-differential privacy, which subsumes standard pure differential privacy and Gaussian differential privacy. We prove the parallel composition theorem for $f$-differential privacy. We evaluate our approach on synthetic and real-world data sets of queries. We show that the approach can scale to large domain sizes (up to $10^{20000}$), and that its application can reduce the noise added to query answers by up to 60\%.
}
\end{abstract}

\keywords{differential privacy, parallel composition, graphs}

   \journalname{Proceedings on Privacy Enhancing Technologies}
 \DOI{Editor to enter DOI}
   


\maketitle

\section{Introduction}


The sequential~\citep{dp-book} and parallel~\citep{mcsherry2009privacy} composition theorems of differential privacy are tools for understanding how privacy loss accumulates when a sensitive data set is queried multiple times. For example, if the \emph{individual} privacy losses for two given queries are $\epsilon_1$ and $\epsilon_2$, respectively, then according to the simplest sequential composition theorem, the \emph{combined} privacy loss for the queries is less than or equal to $\epsilon_1 + \epsilon_2$. On the other hand, if the two queries cover disjoint subsets of the data domain, then according to the parallel composition theorem, the combined privacy loss is just the maximum of $\epsilon_1$ and $\epsilon_2$.

There has been considerable research effort dedicated to the study of sequential composition in differential privacy (see, e.g., \citep{kairouz2015composition, mironov2017renyi, murtagh2016complexity, dong2019gaussian}). In particular, much work has been devoted to deriving sequential composition theorems that give tighter bounds on combined privacy loss than the bound provided by the simplest such theorem (see, e.g., \citep{kairouz2015composition}). On the other hand, the parallel composition theorem is usually not studied for its intrinsic interest, but rather merely applied as part of an analysis of a specific data release mechanism (see, e.g., \citep{diff-gen, asghar2019differentially}).

We target the following use case of practical importance. A data custodian allows users to issue queries on a sensitive data set answered via a differentially private mechanism. 
The data set is defined over a finite, discrete domain. 
The users require the custodian to measure the privacy loss as accurately as possible, so that given a fixed bound on the privacy loss either the number of queries that can be answered is maximized or the amount of noise added to query answers is minimized. To accurately measure privacy loss, the custodian should leverage parallel composition. Given a set of queries $Q$, this amounts to determining the largest subset of $Q$ such that all the queries in the subset `overlap' --- what we call the \emph{maximum overlap} of $Q$. The naive method of finding the maximum overlap begins by evaluating each query over all domain elements and comparing set intersections. This procedure is inefficient, as the queries need to be evaluated over the entire domain, requiring time exponential in the number of attributes in the domain. Moreover, even if there is an efficient way to determine the \emph{coverage} of a query over the domain, determining the maximum overlap via the naive method of running through all possible subsets of queries requires time exponential in the number of queries.

Computing the maximum overlap of a set of queries $Q$ is related to its $l_1$-sensitivity. The $l_1$-sensitivity of $Q$ is the maximum sum of absolute differences in the answers to queries from $Q$ taken over all neighboring data sets $D$ and $D'$ (differing in a single row) from some domain $\mathbb{D}$. Taking $Q$ as the set of \emph{statistical range queries}, i.e., conjunctions of range predicates on individual attributes, the authors from \cite{xiao-np} show that the problem of computing the $l_1$-sensitivity of $Q$ is NP-hard. In~\citep{inan-sensitivity}, by representing each statistical range query as a vertex, and introducing an edge between two vertices if the ranges of the corresponding queries overlap, the authors prove that the $l_1$-sensitivity of $Q$ is lower bounded by the cardinality of the maximum clique (the clique number) of the graph. While finding the clique number is still NP-hard, it is a well-studied problem and various efficient algorithms exist to (exactly) compute it in practice, implying that $l_1$-sensitivity can be well approximated. While these are important results, there are some key shortcomings which we seek to address in this paper.
\begin{itemize}
    \item The notion of (weighted) maximum overlap\footnote{Weighted maximum overlap is the variant of maximum overlap that treats the case in which different queries may be allocated different privacy budgets. See Section~\ref{sub:max-overlap} for the exact definition.} (introduced in this paper) of a set of queries $Q$, and hence the optimal use of parallel composition, is a more general problem than finding the $l_1$-sensitivity of $Q$. For composition under $\epsilon$-differential privacy, one can show that computing the weighted maximum overlap is equivalent to finding $l_1$-sensitivity. However, $l_1$-sensitivity analysis excludes several prominent mechanisms (and hence heterogeneous composition involving such mechanisms) such as the Gaussian mechanism, which is proven differentially private in the \emph{approximate} sense~\citep{dp-book} or under concentrated differential privacy~\citep{dwork-cdp, bun-cdp} using $l_2$-sensitivity of queries. Since the $l_2$-sensitivity of $Q$ is less than or equal to its $l_1$-sensitivity, the upper bound on $l_1$-sensitivity may be loose. Focusing on the problem from the perspective of optimal parallel composition decouples it from the underlying sensitivity metric, so that solutions are applicable to other, relaxed notions of differential privacy.
    \item To accomplish this, we characterize the notion of maximum overlap in terms of $f$-differential privacy \citep{dong2019gaussian}, which is a recently proposed framework that subsumes all notions of differential privacy that admit an interpretation in terms of hypothesis testing. To this end, we prove a parallel composition theorem and an optimal composition theorem for $f$-differential privacy. This enables us to illustrate our approach in both the well-established setting of pure differential privacy~\citep{calib-noise, dp-book} and the new setting of Gaussian differential privacy \citep{dong2019gaussian}.
    \item To find instances of the targeted use case which can be solved efficiently in the number of attributes and the number of queries, we restrict our focus to queries that are conjunctions of arbitrary predicates on individual attributes. These queries properly subsume the statistical range queries studied in~\citep{xiao-np, inan-sensitivity}, and have been studied in the differential privacy literature (see, e.g., \citep{mckenna}). We show that, with such queries, one can check whether a given subset of queries has non-empty intersection in time polynomial in the number of attributes. We further show that the `optimal' parallel composition of such queries can be profitably analyzed using an intersection graph, as was done in~\citep{inan-sensitivity} for the computation of $l_1$-sensitivity.
    \item Representing the problem as an intersection graph allows us to use algorithms for computing the clique number to approximate the maximum overlap, similar to~\citep{inan-sensitivity}. However, in~\citep{inan-sensitivity} and also in our case, we are bound to use `exact' algorithms for the computation of the clique number, as any approximate clique number may underestimate the total privacy loss. This is a drawback, since approximate algorithms are often more efficient than exact algorithms. We further upper bound the maximum overlap problem by the chromatic number of the graph, and since any approximate chromatic number is always greater than or equal to the exact chromatic number, it never underestimates the privacy loss. This allows us to use algorithms for computation of the approximate chromatic number, which run for larger sets of queries and larger domain sizes than exact algorithms for computation of the clique number.
    \item We evaluate exact clique number and approximate chromatic number algorithms proposed in the literature by varying the domain size and number of queries, and show that the latter can be used to approximate the maximum overlap (and hence optimal parallel composition) for much larger sets of queries and domains. For instance, it can handle more than 1,000 queries for domains of size up to $10^{2500}$ (Section~\ref{subsec:scalingexperiments}). Through experiments on synthetic and real census query data sets, we show that there is likely to be significant overlap between queries, and hence using our approach results in significant gain in utility, resulting in noise reduction of up to 95\% for synthetic queries (Section~\ref{subsec:randomcensusqueries}) and up to 58.5\% (36.2\% on average) for real census queries
    (Section~\ref{sub:absqueryresults}). 
\end{itemize}

\section{\texorpdfstring{$f$}{f}-Differential Privacy and Optimal Parallel Composition}

\subsection{Preliminaries}

An \emph{attribute} $A$ is a finite set, whose elements are called attribute \emph{values}. A \emph{domain}, denoted $\mathbb{D}$, is the Cartesian product of $m \ge 1$ attributes: $\mathbb{D} \coloneqq A_1 \times \cdots \times A_m$. An element of a domain is called a \emph{row}. A data set $D$ is a subset of $\mathbb{N}^{|\mathbb{D}|}$, in the histogram notation~\citep{dp-book}. 
Let $\mathbb{D}'$ be a subset of $\mathbb{D}$. The intersection $D \cap \mathbb{D}'$ is the set of rows of the data set $D$ that are in $\mathbb{D}'$. Two data sets $D, D'$ on $\mathbb{D}$ are \emph{neighboring}, denoted $D \sim D'$, if they differ in a single row.





\subsection{Standard Differential Privacy: Background}

\begin{definition}[Differential privacy~\citep{calib-noise, dp-book}]
A mechanism (randomized algorithm) $M$ is \emph{$(\epsilon, \delta)$-differentially private} if for all $S \subseteq R$, where $R$ is the outcome space of $M$, and all neighboring data sets $D \sim D'$, one has $\mathrm{Pr}(M(D) \in S) \le e^{\epsilon}\:\mathrm{Pr}(M(D') \in S) + \delta$, where $\epsilon$ and $\delta$ are non-negative real numbers. If $\delta = 0$, one says that $M$ is \textit{$\epsilon$-differentially private}.
\end{definition}


When $\delta = 0$, the resulting notion is sometimes called \emph{pure} differential privacy, in contrast to the notion of \emph{approximate} differential privacy for $\delta > 0$. An important property of differential privacy is that it composes~\citep{dp-book}.

\begin{theorem}[Sequential composition]
\label{the:basic-comp}
Let $M_1, \ldots, M_k$ be a sequence of mechanisms. If all the mechanisms in the sequence are $(\epsilon, \delta)$-differentially private, then the composition of the sequence is $(k\epsilon, k\delta)$-differentially private.\qed
\end{theorem}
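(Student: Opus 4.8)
The plan is to prove the statement by induction on $k$, so that it suffices to treat the composition of two mechanisms, and to handle the two regimes $\delta = 0$ and $\delta > 0$ separately.

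First I would dispose of the pure case $\delta = 0$. Here the argument is immediate: for neighboring $D \sim D'$, the privacy loss of the composed mechanism at an outcome $(y_1, \dots, y_k)$ equals the sum of the per-mechanism privacy losses (in the adaptive setting, the sum of the conditional losses given the earlier outcomes), and each summand lies in $[-\epsilon, \epsilon]$ by $\epsilon$-differential privacy of $M_i$; hence the composed loss lies in $[-k\epsilon, k\epsilon]$, which is exactly $k\epsilon$-differential privacy.

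For general $\delta$ the key tool is a decomposition lemma: if $M$ is $(\epsilon,\delta)$-differentially private then for every $D \sim D'$ there is a distribution $\widehat{M}(D)$ within total-variation distance $\delta$ of $M(D)$ such that $\Pr[\widehat{M}(D) \in S] \le e^{\epsilon} \Pr[M(D') \in S]$ for all $S$. I would prove this constructively: cap the density of $M(D)$ pointwise at $e^{\epsilon}$ times that of $M(D')$, and reinsert the removed mass on the region where no capping occurred. The total removed mass equals $\sup_{S}\bigl(\Pr[M(D) \in S] - e^{\epsilon}\Pr[M(D') \in S]\bigr)$, which is at most $\delta$ by the definition of $(\epsilon,\delta)$-differential privacy, and a short counting check shows there is always enough room to reinsert it while preserving the pointwise domination by $e^{\epsilon}$ times the density of $M(D')$. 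Applying the lemma to each $M_i$ --- in the adaptive case, conditioned on each fixed prefix of earlier outcomes --- and replacing each $M_i(D)$ by its capped version perturbs the composed output distribution by total variation at most $k\delta$, via the standard hybrid argument. Since pointwise domination multiplies across the $k$ factors, the fully capped composite is dominated by $M_k \circ \cdots \circ M_1$ run on $D'$ with factor $e^{k\epsilon}$, so for every $S$, $\Pr[\text{composite}(D) \in S] \le \Pr[\widehat{\text{composite}}(D) \in S] + k\delta \le e^{k\epsilon}\Pr[\text{composite}(D') \in S] + k\delta$; the symmetric construction, capping the $M_i(D')$ densities instead, gives the reverse inequality.

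I expect the main obstacle to be the accounting of the $\delta$ terms. A naive argument --- condition on the first output and push the one-dimensional guarantees through the resulting integral --- yields only $\bigl(k\epsilon,\ \delta \sum_{i=0}^{k-1} e^{i\epsilon}\bigr)$-differential privacy, because the failure mass of an earlier mechanism gets multiplied by the multiplicative factor of a later one. Routing that failure mass out through total variation, which is unaffected by those multiplicative factors, before invoking the easy pure-DP composition is precisely what recovers the additive $k\delta$; stating and proving the decomposition lemma (and its per-prefix, adaptive version) correctly is the crux of the proof.
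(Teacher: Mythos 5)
The paper gives no proof of this theorem --- it is quoted as a standard background result from the differential-privacy literature (with a \verb|\qed| and a citation), so there is nothing in-paper to compare against; your argument is correct and is the canonical one: privacy-loss summation for $\delta=0$, and for $\delta>0$ the capping/decomposition lemma characterizing $(\epsilon,\delta)$-indistinguishability, a per-prefix hybrid argument costing total variation $\delta$ per mechanism, and multiplication of the pointwise $e^{\epsilon}$ dominations. Your closing remark --- that naive conditioning only yields $\delta\sum_{i=0}^{k-1}e^{i\epsilon}$ and that routing the failure mass out through total variation before composing is what recovers the additive $k\delta$ --- correctly pinpoints the one genuinely delicate step.
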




The above result is a simple example of a \emph{sequential} composition theorem for differential privacy, as opposed to a \textit{parallel} composition theorem, which is given next.

\begin{theorem}[Heterogeneous parallel composition~\citep{mcsherry2009privacy}]
\label{the:par-comp}
Let $\mathbb{D}$ be a domain, let $D \in \mathbb{N}^{|\mathbb{D}|}$ be a data set, and let $k$ be a positive integer. For each $i \in [k]$, let $\mathbb{D}_i$ be a subset of $\mathbb{D}$, let $M_i$ be a mechanism that takes $D \cap \mathbb{D}_i$ as input, and suppose $M_i$ is $\epsilon_i$-differentially private. If $\mathbb{D}_i \cap \mathbb{D}_j = \emptyset$ whenever $i \neq j$, then the composition of the sequence $M_1, \ldots, M_k$ is $\max\,\{\epsilon_i : i \in [k]\}$-differentially private.\qed
\end{theorem}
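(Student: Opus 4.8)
The plan is to reduce the whole composition to a single ``active'' component. Fix neighboring data sets $D \sim D'$; by definition they differ in exactly one row, say $r \in \mathbb{D}$ (one of $D, D'$ contains one extra occurrence of $r$). Since the sets $\mathbb{D}_1, \ldots, \mathbb{D}_k$ are pairwise disjoint, $r$ lies in at most one of them; let $i$ be that index if it exists, and otherwise let $i$ be arbitrary. First I would observe that for every $j \neq i$ we have $D \cap \mathbb{D}_j = D' \cap \mathbb{D}_j$, so $M_j$ receives identical input on $D$ and $D'$, whereas $D \cap \mathbb{D}_i$ and $D' \cap \mathbb{D}_i$ are either equal or neighboring. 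Thus only the $i$-th mechanism ``sees'' the change, and it sees at most a single-row change, on which we may invoke its $\epsilon_i$-differential privacy.

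Next I would make the independence of the mechanisms precise. Writing the composed output as the tuple $(M_1(D \cap \mathbb{D}_1), \ldots, M_k(D \cap \mathbb{D}_k))$ with the $M_j$ drawing independent internal randomness, set $X = M_i(D \cap \mathbb{D}_i)$, $X' = M_i(D' \cap \mathbb{D}_i)$, and let $Y$ collect the outputs of all $M_j$ with $j \neq i$, whose joint distribution is the same under $D$ and $D'$ and which is independent of both $X$ and $X'$. For an arbitrary measurable event $S$ in the product outcome space $R = R_1 \times \cdots \times R_k$, a Fubini/sectioning argument gives $\Pr[(X,Y) \in S] = \int \Pr[X \in S_y]\, d\mu_Y(y)$, where $S_y$ is the section of $S$ obtained by fixing the coordinates $j \neq i$ to $y$ and leaving the $i$-th coordinate free.

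Then I would apply the hypothesis: for each fixed $y$, $\Pr[X \in S_y] \le e^{\epsilon_i}\,\Pr[X' \in S_y]$ because $M_i$ is $\epsilon_i$-differentially private and $D \cap \mathbb{D}_i$, $D' \cap \mathbb{D}_i$ are neighboring (or equal). Integrating back against $\mu_Y$ yields $\Pr[(X,Y) \in S] \le e^{\epsilon_i}\,\Pr[(X',Y) \in S]$, and since $\epsilon_i \le \max\{\epsilon_\ell : \ell \in [k]\}$ this is at most $e^{\max_\ell \epsilon_\ell}\,\Pr[(X',Y) \in S]$, which is exactly the claimed inequality (the additive term is $0$ since every $M_i$ is pure $\epsilon_i$-differentially private).

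Finally, I expect the only real subtlety — the main obstacle — to be the passage from product events to arbitrary measurable $S \subseteq R$: the naive factorization $\Pr\big[\prod_j M_j \in \prod_j S_j\big] = \prod_j \Pr[M_j \in S_j]$ handles only rectangles, so one must either invoke a monotone-class / $\pi$-$\lambda$ argument or, as above, condition on $Y$ and integrate, taking care that the mechanisms' randomness is genuinely independent and that the maps in question are measurable. A smaller point worth stating explicitly is the degenerate case where $r$ belongs to no $\mathbb{D}_i$ (or some $\mathbb{D}_i$ is empty), in which case \emph{all} $M_j$ see identical inputs, the composition is trivially $0$-differentially private, and hence a fortiori $\max_\ell \epsilon_\ell$-differentially private.
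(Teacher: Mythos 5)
Your proof is correct for the theorem as literally stated. Note first that the paper does not actually prove Theorem~\ref{the:par-comp}: it is imported from McSherry with a \qed, so there is no in-paper argument to match line by line. The closest analogue is the paper's proof of Theorem~\ref{the:par-comp:gdp} (parallel composition in $f$-DP), and your key observation is the same one used there: a neighboring pair $D \sim D'$ differs in a single row $r$, which by disjointness lies in at most one $\mathbb{D}_i$, so only $M_i$ is ``active'' and every other mechanism receives identical input. The difference is purely in how that observation is cashed out. You bound likelihood ratios directly --- condition on the inactive outputs $Y$, apply $\epsilon_i$-DP to each section $S_y$, and integrate --- which is elementary and yields exactly the $\max_i \epsilon_i$ bound for pure DP. The paper instead expresses the composed trade-off function as a tensor product, replaces every inactive factor by the identity trade-off function $\mathrm{Id}$, and concludes $T(M(D),M(D')) \ge f_i$; that formulation buys generality (it covers $f$-DP and hence Gaussian DP), at the cost of relying on properties of $\otimes$ rather than a bare Fubini argument. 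One caveat on your write-up: the remark immediately following the theorem allows each $M_j$ to take its predecessors' outputs as input, in which case $Y$ is \emph{not} independent of $X$ and your sectioning step as written does not apply. The fix is routine --- factor the joint law as a product of conditionals $p_j(y_j \mid y_1,\dots,y_{j-1}, D \cap \mathbb{D}_j)$, observe that for every fixed prefix only the $i$-th factor differs between $D$ and $D'$, and bound that single factor by $e^{\epsilon_i}$ --- but it is worth stating, since the independence assumption is doing real work in your version. Your handling of the degenerate case ($r$ in no $\mathbb{D}_i$) and of measurability via sections is fine.
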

Although it is not explicitly stated in the theorem, it should be noted that each member of the sequence of mechanisms may also take the outputs of its predecessors it as input.

\begin{definition}[Laplace mechanism~\citep{calib-noise}]
\label{def:laplace}
The zero-mean Laplace distribution has the probability density function $\text{Lap}(x \mid b) \coloneqq \frac{1}{2b} e^{-\frac{|x|}{b}}$, where $b$, a non-negative real number, is a scale parameter. Let $Q$ be a set of queries each of which maps data sets to real numbers. The $l_1$-sensitivity of $Q$, denoted $\Delta Q$, is defined as 
\[
\Delta Q = \max_{\substack{D, D'\\ D \sim D'}} \lVert Q(D) - Q(D') \rVert_1
\]
Given a data set $D$ and a set of $t$ queries $Q$, the Laplace mechanism is defined as $M(Q, D) \coloneqq Q(D) + (Y_1, \ldots, Y_t)$, where $Y_i$ is a Laplace random variable of scale $\Delta Q / \epsilon$. The Laplace mechanism is $\epsilon$-differentially private~\citep{calib-noise}. 
\end{definition}


\subsection{\texorpdfstring{$f$}{f}-Differential Privacy: Background}
It is well known that the sequential composition result in Theorem~\ref{the:basic-comp} for $(\epsilon, \delta)$-differential privacy is not tight. There have been a number of successful attempts to obtain tighter composition results by adopting relaxed notions of differential privacy, including the advanced composition theorem for approximate differential privacy~\citep{dp-book}, as well as concentrated differential privacy~\citep{dwork-cdp, bun-cdp} and R\'enyi differential privacy~\citep{mironov2017renyi}. In this paper, we focus on the notion of \emph{$f$-differential privacy} recently proposed by
Dong, Roth and Su~\citep{dong2019gaussian}, which is a generalization of standard differential privacy (i.e., of $(\epsilon, \delta)$-differential privacy) based on its hypothesis testing interpretation. Let $D$ and $D'$ be neighboring data sets given as input to a mechanism $M$. Given the output of the mechanism, the goal is to distinguish between two competing hypotheses: the underlying data set being $D$ or $D'$. Let $P$ and $P'$ denote the probability distributions of $M(D)$ and $M(D')$, respectively. Given any rejection rule $0 \le \phi \le 1$, the type-I and type-II errors are defined as follows ~\citep{dong2019gaussian}: $\alpha_\phi \coloneqq \mathbb{E}_P[\phi]$ and  $\beta_\phi \coloneqq 1-  \mathbb{E}_{P'}[\phi]$.

\begin{definition}[Trade-off function~\citep{dong2019gaussian}]
\label{def:T}
For any two probability distributions $P$ and $P'$ on the same
space, the \emph{trade-off function} $T(P, P') : [0, 1] \to [0, 1]$ is defined by
\[
T(P, P')(\alpha) \coloneqq \inf \{\beta_\phi : \alpha_\phi \leq \alpha\}
\]

\noindent for all $\alpha \in [0, 1]$, where the infimum is taken over all (measurable) rejection rules.
\end{definition}

A trade-off function gives the minimum achievable type-II error at any given level of type-I error. For a function to be a trade-off function, it must satisfy the conditions specified in the following proposition.

\begin{proposition}[\citep{dong2019gaussian}]
A function $f : [0, 1] \to [0, 1]$ is a trade-off function if and only if $f$ is convex, continuous and non-increasing, and $f(x) \leq 1 - x$ for all $x \in [0, 1]$.
\end{proposition}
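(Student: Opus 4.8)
The plan is to prove the two directions separately, using convex-analytic bookkeeping for the ``only if'' direction and an explicit construction together with the Neyman--Pearson lemma for the ``if'' direction.

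\textbf{($\Rightarrow$)} Suppose $f = T(P,P')$ for some $P,P'$. Monotonicity is immediate, since enlarging $\alpha$ enlarges the feasible set $\{\phi : \alpha_\phi \le \alpha\}$ over which the infimum is taken, so the infimum can only decrease. Convexity follows by randomizing between tests: given $\lambda \in [0,1]$ and near-optimal rejection rules $\phi_1,\phi_2$ at levels $\alpha_1,\alpha_2$, the rule $\phi = \lambda\phi_1 + (1-\lambda)\phi_2$ satisfies $\alpha_\phi = \lambda\alpha_{\phi_1} + (1-\lambda)\alpha_{\phi_2}$ and $\beta_\phi = \lambda\beta_{\phi_1} + (1-\lambda)\beta_{\phi_2}$ by linearity of expectation, whence $f(\lambda\alpha_1 + (1-\lambda)\alpha_2) \le \lambda f(\alpha_1) + (1-\lambda)f(\alpha_2)$ after letting the approximation error vanish. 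The inequality $f(x) \le 1-x$ is witnessed by the constant rule $\phi \equiv x$, for which $\alpha_\phi = x$ and $\beta_\phi = 1-x$. Finally, a finite convex function is automatically continuous on the open interval $(0,1)$; continuity at $1$ follows because the constant rules $\phi \equiv 1-\varepsilon$ give $f(1-\varepsilon) \le \varepsilon$, so $f(1^-) = 0 = f(1)$; and continuity at $0$ follows from the (standard) fact that the achievable error region $\{(\alpha_\phi,\beta_\phi) : \phi\}$ is closed, so a limit of near-optimal tests at levels $\alpha \downarrow 0$ is itself an admissible test at level $0$.

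\textbf{($\Leftarrow$)} Now suppose $f$ is convex, continuous, non-increasing and $f(x) \le 1-x$; note that then $f(1) = 0$, since $0 \le f(1) \le 1-1$. Take $P$ to be the uniform (Lebesgue) distribution on $[0,1]$ and let $P'$ be the distribution on $[0,1]$ with cumulative distribution function $t \mapsto 1 - f(t)$. Continuity and monotonicity of $f$ together with $f(1)=0$ make this a genuine CDF (non-decreasing, right-continuous, running from $1-f(0) \ge 0$ up to $1$); it has an atom of mass $1-f(0)$ at $0$ and density $-f'$ on $(0,1)$. For each $\alpha \in [0,1]$, the threshold test with rejection region $[0,\alpha]$ has type-I error $P([0,\alpha]) = \alpha$ and type-II error $P'((\alpha,1]) = 1 - (1 - f(\alpha)) = f(\alpha)$. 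Moreover, since $f$ is convex the likelihood ratio $dP'/dP = -f'$ is non-increasing on $(0,1)$ (and infinite on the atom at $0$), so by the Neyman--Pearson lemma $[0,\alpha]$ is a most powerful rejection region at level $\alpha$. Combining this with the monotonicity of $f$ to handle tests with $\alpha_\phi < \alpha$ gives $T(P,P')(\alpha) = f(\alpha)$ for every $\alpha$, i.e.\ $f$ is the trade-off function of the pair $(P,P')$.

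The convexity and monotonicity bookkeeping, and the verification that $t \mapsto 1-f(t)$ is a valid CDF, are routine. I expect the main obstacle to be the optimality step in the converse: making the Neyman--Pearson argument fully rigorous in the presence of the atom at $0$, so that the infimum defining $T$ is genuinely \emph{attained} by the threshold test rather than merely approached; and, in the forward direction, the endpoint-continuity claim, which is where the closedness (compactness) of the achievable error region is really needed.
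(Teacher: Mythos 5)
The paper does not prove this proposition; it is imported verbatim from Dong, Roth and Su \citep{dong2019gaussian}, so there is no in-paper proof to compare against. Judged on its own, your argument is correct and is essentially the standard proof from that reference: the forward direction via monotonicity of the feasible set, randomization of tests for convexity, the constant rule $\phi \equiv x$ for $f(x)\le 1-x$, and closedness (weak-$*$ compactness) of the achievable error region for continuity at the endpoint $0$; the converse via $P=\mathrm{Unif}[0,1]$ and $P'$ with CDF $1-f$ together with Neyman--Pearson. The two points you flag are indeed the only places needing care, and both go through: continuity at $0$ follows because a weak-$*$ limit of near-optimal tests at levels $\alpha_n \downarrow 0$ is a level-$0$ test, and in the converse the atom of mass $1-f(0)$ at $t=0$ is precisely what realizes $T(P,P')(0)=f(0)$ (any $\phi$ with $\mathbb{E}_P[\phi]=0$ vanishes Lebesgue-a.e., so $\mathbb{E}_{P'}[\phi]\le 1-f(0)$). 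One cosmetic caveat: since $P'\not\ll P$ when $f(0)<1$, the ratio $dP'/dP$ is not literally defined, so the Neyman--Pearson step should be phrased with densities relative to a dominating measure such as $\mathrm{Leb}+\delta_0$; this does not affect the conclusion.
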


Abusing notation, let $M(D)$ denote the distribution of a mechanism $M$ when given a data set $D$ as input.
\begin{definition}[$f$-differential privacy~\citep{dong2019gaussian}]
Let $f$ be a trade-off function. A mechanism $M$ is said to
be \emph{$f$-differentially private} if $T(M(D), M(D')) \geq f$ for all neighboring data sets $D$ and $D'$.
\end{definition}

\begin{definition}[Gaussian Differential Privacy]
\label{def:gauss}
A key example of $f$-differential privacy is \emph{Gaussian differential privacy}~\citep{dong2019gaussian}, which is based on the trade-off function
\[
G_\mu \coloneqq T(\mathcal{N}(0, 1), \mathcal{N}(\mu, 1)),
\]
where $\mu \geq 0$. This trade-off function can be written explicitly as $G_\mu \coloneqq \Phi(\Phi^{-1} (1 - \alpha) - \mu)$, where $\Phi$ is the standard normal CDF. An example of a $G_{\mu}$-differentially private mechanism (or $\mu$-GDP mechanism, for short) is the \emph{Gaussian mechanism}:
$M(q, D) \coloneqq q(D) + Y$, where $q$ is a query of sensitivity $\Delta q$ and $Y \sim \mathcal{N}(0, \Delta q^2/\mu^2)$~\citep{dong2019gaussian}. 
\end{definition}

\begin{definition}[Tensor product~\citep{dong2019gaussian}]
Let $P_1, P_2, P_3$ and $P_4$ be probability distributions. Let $f$ and $g$ be the trade-off functions $T(P_1, P_2)$ and $T(P_3, P_4)$, respectively. The \emph{tensor product} of $f$ and $g$, which is denoted $f \otimes g$, is defined by
\[
f \otimes g \coloneqq T(P_1 \times P_3, P_2 \times P_4).
\]
\end{definition}

This extends to $n$-fold tensor products due to the associativity of the tensor product~\citep{dong2019gaussian}. The following theorem is the basic sequential composition result for $f$-differential privacy.

\begin{theorem}[Sequential composition~\citep{dong2019gaussian}]
Let $M_i(\cdot, y_1, \ldots , y_{i-1})$ be $f_i$-DP for all $y_1 \in Y_1, \ldots , y_{i-1} \in Y_{i-1}$. Then the $n$-fold composed mechanism $M : X \to Y_1 \times \cdots \times Y_n$ is $f_1 \otimes \cdots \otimes f_n$-differentially private.
\end{theorem}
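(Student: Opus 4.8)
The plan is to follow the standard route for $f$-DP composition: reduce to two-fold composition by induction, and then prove the two-fold case from the post-processing closure of $f$-DP together with Blackwell's theorem on the comparison of statistical experiments. Since $\otimes$ is associative, the inductive step is immediate: holding the first output $y_1$ fixed, the sequence $M_2(\cdot,y_1), \ldots, M_n(\cdot,y_1,\ldots,y_{n-1})$ is an adaptive $(n-1)$-fold composition of $f_2,\ldots,f_n$-DP mechanisms, hence $(f_2\otimes\cdots\otimes f_n)$-DP by the inductive hypothesis; composing this with the $f_1$-DP mechanism $M_1$ via the two-fold case gives $f_1\otimes(f_2\otimes\cdots\otimes f_n) = f_1\otimes\cdots\otimes f_n$. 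So the whole content is: if $M_1$ is $f_1$-DP and $M_2(\cdot,y_1)$ is $f_2$-DP for every $y_1$, then $(M_1,M_2)$ is $(f_1\otimes f_2)$-DP.

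For the two-fold case, fix neighbours $D\sim D'$, write $P=M_1(D)$, $P'=M_1(D')$, and for each $y_1$ let $Q_{y_1}=M_2(D,y_1)$, $Q'_{y_1}=M_2(D',y_1)$; the composed outputs are the laws $\tilde P$ of $(y_1,y_2)$ with $y_1\sim P$ and $y_2\mid y_1\sim Q_{y_1}$, and $\tilde P'$ defined analogously from $P'$ and $Q'_{y_1}$. Recalling that every trade-off function is realised by some pair of distributions, I would fix $R,R'$ with $T(R,R')=f_2$. Since $T(Q_{y_1},Q'_{y_1})\ge f_2=T(R,R')$ for every $y_1$ --- i.e.\ $(R,R')$ is at least as informative as $(Q_{y_1},Q'_{y_1})$ --- Blackwell's theorem yields, for each $y_1$, a Markov kernel $K_{y_1}$ such that applying $K_{y_1}$ to $R$ produces $Q_{y_1}$ and applying it to $R'$ produces $Q'_{y_1}$; a measurable-selection argument makes $y_1\mapsto K_{y_1}$ jointly measurable. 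Then the randomised map $\Psi(y_1,z)=(y_1,K_{y_1}(z))$ --- the \emph{same} map under both hypotheses, since it inspects only the observed pair --- sends the product law $P\times R$ to $\tilde P$ and $P'\times R'$ to $\tilde P'$. By the definition of the tensor product together with its representation-independence, $T(P\times R,P'\times R') = T(P,P')\otimes T(R,R') \ge f_1\otimes f_2$ (using monotonicity of $\otimes$ in each argument), and the data-processing inequality for trade-off functions --- any test of the post-processed pair pulls back to a test of the original with identical type-I and type-II errors, so $T$ can only increase --- gives $T(\tilde P,\tilde P') \ge T(P\times R,P'\times R') \ge f_1\otimes f_2$. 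As $D\sim D'$ was arbitrary, $(M_1,M_2)$ is $(f_1\otimes f_2)$-DP.

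I expect the main obstacle to be the measurability of the kernel family $y_1\mapsto K_{y_1}$ coming out of Blackwell's theorem, together with pinning down the monotonicity statement ``$f\ge\tilde f$ and $g\ge\tilde g$ imply $f\otimes g\ge\tilde f\otimes\tilde g$'' at the level of distributions, which is really the same Blackwell/post-processing idea applied one coordinate at a time and must be organised so that $\otimes$ is genuinely well defined on trade-off functions. The rest --- the induction, the elementary computation that $P\times R$ versus $P'\times R'$ realises $T(P,P')\otimes T(R,R')$, and the post-processing inequality --- is routine. A more hands-on alternative would avoid Blackwell by slicing an arbitrary rejection rule $\phi(y_1,y_2)$ in its first coordinate, bounding each slice via conditional $f_2$-DP, and then optimising over $y_1$ with $f_1$-DP and convex duality; this sidesteps measurable selection but replaces it with a fussier optimisation, so I would still prefer the post-processing proof.
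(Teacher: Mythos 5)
This theorem is quoted verbatim from Dong, Roth and Su and the paper supplies no proof of its own, so there is nothing internal to compare against. Your argument is a correct reconstruction of the proof in that reference: reduction to the two-fold case via associativity of $\otimes$, realisation of $f_2$ by a pair $(R,R')$, Blackwell's theorem to write each conditional pair $(Q_{y_1},Q'_{y_1})$ as a post-processing of $(R,R')$, and the data-processing inequality applied to $P\times R$ versus $P'\times R'$, together with monotonicity of $\otimes$ in each argument. The one loose end you flag yourself --- measurability of $y_1\mapsto K_{y_1}$ --- is a standard technicality and is treated in essentially the same way in the original source.
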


A corollary of the above is that the $n$-fold (sequential) composition of $\mu_i$-GDP mechanisms is $\sqrt{\mu_1^2 + \cdots + \mu_n^2}$-GDP~\citep{dong2019gaussian}. A mechanism $M$ is $(\epsilon, \delta)$-DP if and only if it is $f_{\epsilon, \delta}$-DP~\citep{stat-fw-dp, dong2019gaussian}, where $f_{\epsilon, \delta}$ is the trade-off function $
\max \{ 0, 1 - \delta - e^\epsilon \alpha, e^{-\epsilon}(1 - \delta - \alpha) \}$. 
\iffull
Finally, the following ties the composition of $(\epsilon, \delta)$-DP mechanisms to $f$-DP:
\begin{theorem}[Central limit theorem \citep{dong2019gaussian}]
\label{the:clt}
Let the privacy parameters of a sequence of $n$ $(\epsilon, \delta)$-DP mechanisms be arranged in a triangular array such that $\{(\epsilon_{ni}, \delta_{ni}) : 1 \leq i \leq n \}$. Assume:
\begin{equation*}
\begin{aligned}
\sum_{i=1}^{n} \epsilon_{ni}^2 \rightarrow \mu^2,  \max_{1 \leq i \leq n} \epsilon_{ni} \rightarrow 0, \; 
\sum_{i=1}^{n} \delta_{ni} \rightarrow \mu^2,  \max_{1 \leq i \leq n} \delta_{ni} \rightarrow \delta
\end{aligned}
\end{equation*}
for some non-negative constants $\mu, \delta$ as $n \rightarrow \infty$. Then, we have
\[
f_{\epsilon_{n1}, \delta_{n1}} \otimes \ldots \otimes f_{\epsilon_{nn}, \delta_{nn}} \rightarrow G_{\mu} \otimes f_{0, 1 - e^{-\delta}}
\]
uniformly over $[0,1]$ as $n \rightarrow \infty$.
\end{theorem}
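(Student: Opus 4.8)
The final statement is the Dong--Roth--Su central limit theorem (cited here from \citep{dong2019gaussian}), so the plan is to reproduce the standard argument: reduce the composition of trade-off functions to a classical central limit theorem for sums of independent log-likelihood ratios. First I would use the equivalence, stated just above, between $(\epsilon_{ni},\delta_{ni})$-DP and $f_{\epsilon_{ni},\delta_{ni}}$-DP, together with the fact that each $f_{\epsilon,\delta}$ is realized as $T(P_{\epsilon,\delta}, Q_{\epsilon,\delta})$ for an explicit, simple pair of distributions on a finite space --- a mixture carrying a point mass of weight $\delta$ to encode the ``catastrophe'' (infinitely distinguishable) part. By the definition of the tensor product, $f_{\epsilon_{n1},\delta_{n1}} \otimes \cdots \otimes f_{\epsilon_{nn},\delta_{nn}} = T\!\left(\prod_i P_{\epsilon_{ni},\delta_{ni}},\, \prod_i Q_{\epsilon_{ni},\delta_{ni}}\right)$, so it suffices to analyze the trade-off function of these two product measures as $n \to \infty$.

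By the Neyman--Pearson lemma, the optimal level-$\alpha$ test for $\prod_i P_i$ versus $\prod_i Q_i$ thresholds the likelihood ratio, equivalently the log-likelihood ratio $S_n \coloneqq \sum_i \log \tfrac{dQ_i}{dP_i}$, which is a sum of independent summands under either hypothesis. I would first isolate the catastrophe event, on which $\log\tfrac{dQ_i}{dP_i} = \pm\infty$: across $i$ these are independent events of probability $\delta_{ni}$, so under the stated conditions ($\sum_i \delta_{ni}$ convergent and $\max_i \delta_{ni} \to 0$) a Le Cam--type Poisson approximation shows the probability that no catastrophe occurs tends to $e^{-\delta}$, and this is exactly what contributes the $f_{0,\,1-e^{-\delta}}$ tensor factor. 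Conditioned on no catastrophe, each remaining summand is bounded in terms of $\epsilon_{ni}$, with mean $\approx -\tfrac12\epsilon_{ni}^2$ and variance $\approx \epsilon_{ni}^2$ under $P_i$ (and the sign-flipped mean under $Q_i$); the conditions $\sum_i \epsilon_{ni}^2 \to \mu^2$ and $\max_i \epsilon_{ni} \to 0$ then let me apply the Lindeberg--Feller CLT (with a Berry--Esseen rate if a quantitative version is wanted) to conclude that $S_n$ converges in distribution, under the two hypotheses, to $\mathcal{N}(-\mu^2/2,\,\mu^2)$ and $\mathcal{N}(\mu^2/2,\,\mu^2)$ --- precisely the log-likelihood-ratio laws of the pair $(\mathcal{N}(0,1),\mathcal{N}(\mu,1))$ that defines $G_\mu$.

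It remains to convert this distributional convergence into convergence of trade-off functions. Parametrizing the error pair through the (conditional) CDF of $S_n$, the above yields pointwise convergence $T(\prod_i P_i, \prod_i Q_i)(\alpha) \to (G_\mu \otimes f_{0,\,1-e^{-\delta}})(\alpha)$ for every $\alpha \in [0,1]$; since all of these functions are non-increasing and the limit is continuous on the compact interval $[0,1]$, pointwise convergence upgrades to uniform convergence by a standard P\'olya/Dini-type argument for monotone functions. The step I expect to be the main obstacle is the $\delta$-bookkeeping: one must simultaneously extract the $e^{-\delta}$ factor from the finitely many catastrophe probabilities and verify that, off the catastrophe event, the $\delta_{ni}$ contributions to the mean and variance of $S_n$ are negligible and the Lindeberg condition genuinely holds, so that the non-singular part indeed converges to the Gaussian pair rather than to something perturbed by the $\delta$'s. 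Everything else reduces either to the classical CLT or to elementary monotonicity and continuity properties of trade-off functions.
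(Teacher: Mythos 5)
This theorem is imported verbatim from Dong, Roth and Su and the paper offers no proof of its own, so there is nothing internal to compare against; your sketch is, in substance, a faithful reconstruction of the argument in the cited source (reduction via Neyman--Pearson to the log-likelihood ratio of explicit finite-support realizations of $f_{\epsilon,\delta}$, separation of the singular mass giving $\prod_i(1-\delta_{ni})\to e^{-\delta}$ and hence the $f_{0,\,1-e^{-\delta}}$ factor, a Lindeberg--Feller CLT on the bounded part whose summands have mean $\approx -\tfrac12\epsilon_{ni}^2$ and variance $\approx \epsilon_{ni}^2$, matching the log-likelihood-ratio law of $(\mathcal{N}(0,1),\mathcal{N}(\mu,1))$, and an upgrade from pointwise to uniform convergence via monotonicity and continuity of the limit). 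One small caution: the hypotheses as printed in this paper are garbled --- they should read $\sum_i \delta_{ni}\to\delta$ and $\max_i \delta_{ni}\to 0$, which are exactly the conditions your argument actually uses, so your proof matches the correct statement rather than the misprinted one.
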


For $(\epsilon, 0)$-DP, the authors of \citep{dong2019gaussian} show that the error of this approximation is bounded by $\mathcal{O}(1/n)$. 
\fi

\subsection{\texorpdfstring{$f$}{f}-Differential Privacy and Composition}
\label{sub:f-dp-comp}

In this section, we prove a parallel composition theorem for $f$-DP as a counterpart to Theorem~\ref{the:par-comp}, and then the optimal composition, in terms of the number of invocations of parallel and sequential compositions, of an arbitrary sequence of $f$-DP mechanisms. 
To prove these results we recall the notion of lower convex envelope (see, e.g., ~\citep[\S 2.4.2.3]{lce-phd}).
\begin{definition}
Let $f_1$ and $f_2$ be trade-off functions. The lower convex envelope $\breve{f}: [0, 1] \rightarrow [0, 1]$ of $f_1$ and $f_2$, denoted $\text{lce}\{f_1, f_2\}$, is defined as
\[
\breve{f}(x) \coloneqq \sup\,\{ f(x) \mid f \text{ is convex and } f \le \min\{f_1, f_2\} \}.
\]
\end{definition}
\begin{lemma}
\label{lem:lce}
 The lower convex envelope $\breve{f}$ of two trade-off functions $f_1$ and $f_2$ is a trade-off function.
\end{lemma}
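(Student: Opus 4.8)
The plan is to verify the four defining properties of a trade-off function from the proposition quoted above — convexity, continuity, monotonicity, and the bound $\breve{f}(x) \le 1 - x$ — for $\breve{f} = \mathrm{lce}\{f_1, f_2\}$. Write $g \coloneqq \min\{f_1, f_2\}$; the first observation is that $g$ is well-defined and inherits monotonicity, non-negativity, and the bound $g(x) \le 1 - x$ from $f_1$ and $f_2$ (being the pointwise minimum of two non-increasing functions each bounded by $1-x$), though $g$ need not be convex. The supremum defining $\breve{f}$ is taken over the non-empty family of convex functions lying below $g$ (non-empty because, e.g., the constant function $0$ qualifies, as trade-off functions are non-negative), so $\breve{f}$ is well-defined and satisfies $0 \le \breve{f} \le g$.

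First I would establish convexity: a pointwise supremum of a family of convex functions is convex, so $\breve{f}$ is convex directly from the definition. Next, monotonicity: since $g$ is non-increasing, for $x \le y$ one can check that any convex $f \le g$ can be modified (or one argues directly) so that $\breve{f}(x) \ge \breve{f}(y)$; more cleanly, I would note that $\breve{f}$ is the largest convex minorant of $g$, and the largest convex minorant of a non-increasing function is non-increasing — if it weren't, replacing $\breve{f}$ on an interval by a constant equal to its left endpoint value would yield a larger convex minorant, a contradiction. For the bound $\breve{f}(x) \le 1 - x$: since $\breve{f} \le g \le 1 - x$ pointwise, this is immediate. The remaining property is continuity on $[0,1]$. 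A finite convex function on the open interval $(0,1)$ is automatically continuous there, so the only issue is continuity at the endpoints $0$ and $1$; I would handle this by sandwiching, using $0 \le \breve{f} \le g$ together with continuity of $g$ at the endpoints (which follows from continuity of $f_1, f_2$) and the fact that a bounded convex function on $[0,1]$ has one-sided limits at the endpoints that do not exceed its boundary values — combined with the lower bound from $g$ this pins the limits to the correct values.

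The main obstacle I anticipate is the endpoint continuity argument: convexity alone guarantees continuity only on the interior, and one must rule out an upward jump of $\breve{f}$ at $x = 0$ or a downward jump at $x = 1$. The clean way around this is to invoke the sandwich $\breve{f} \le g$ (killing any upward discrepancy at $0$, since $g$ is continuous and $\breve{f} \ge 0$) and, at $x=1$, to use that $g(1) \ge 0$ forces a convex minorant to approach a value $\ge 0$ while $\breve{f} \le g$ forces it $\le g(1)$; if $g(1) = 1 - 1 = 0$ is not forced, one still gets the squeeze from $0 \le \breve f \le g$ and continuity of $g$. An alternative, slicker route that sidesteps endpoint subtleties entirely is to observe that $\breve{f} = \mathrm{lce}\{f_1,f_2\}$ coincides with the trade-off function $T(\tfrac12(P_1\times\{0\} \boxplus P_3), \ldots)$ arising from a suitable mixture construction — but the direct verification of the four axioms is more elementary and self-contained, so I would present that.
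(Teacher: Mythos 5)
Your overall plan---verify convexity, monotonicity, the bound $\breve{f}(x)\le 1-x$, and continuity directly, with interior continuity coming for free from convexity and the endpoints isolated as the delicate step---is exactly the route the paper takes. Writing $g=\min\{f_1,f_2\}$, your treatment of convexity, monotonicity and the $1-x$ bound is fine (your monotonicity argument is if anything more careful than the paper's one-line version), and your squeeze at $x=1$ genuinely works, because $f_1(1)=f_2(1)=0$ forces $g(1)=0$, so $0\le\breve{f}\le g$ pins both $\breve{f}(1)$ and the left limit to $0$.

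The gap is continuity at $x=0$. Since raising the value of a convex function at the endpoint $0$ never destroys convexity (the point $0$ is not interior to any segment of $[0,1]$), the largest convex minorant satisfies $\breve{f}(0)=g(0)$, which is typically strictly positive. For a convex function the only possible failure of continuity at $0$ is a drop, $\lim_{x\to 0^+}\breve{f}(x)<\breve{f}(0)$, and the sandwich $0\le\breve{f}\le g$ is perfectly consistent with such a drop: it only confines the limit to $[0,g(0)]$, it does not push it up to $g(0)$. So the step ``the lower bound from $g$ pins the limits'' does not go through as stated; what is needed is a lower bound on $\breve{f}$ immediately to the right of $0$. A self-contained fix: given $\epsilon>0$, continuity of $g$ at $0$ gives $\delta>0$ with $g>g(0)-\epsilon$ on $[0,\delta]$; take the affine function $q(x)=g(0)-\epsilon-Mx$ with $M$ large enough that $q(\delta)\le\min_{[0,1]}g$. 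Then $q\le g$ on all of $[0,1]$, so $\max\{\breve{f},q\}$ is a convex minorant of $g$ and hence $\breve{f}\ge q$, giving $\liminf_{x\to0^+}\breve{f}(x)\ge g(0)-\epsilon$; combined with $\limsup_{x\to0^+}\breve{f}(x)\le\breve{f}(0)=g(0)$ (convexity) this yields continuity at $0$. The paper instead imports exactly this endpoint behaviour from a structural theorem on lower convex envelopes (near an endpoint the envelope is either affine or coincides with one of the $f_i$). Either device works, but some such argument is required; the sandwich alone does not close this step.
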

\begin{proof}
\iffull
By definition $\breve{f}$ is convex. It is also non-increasing since it is less than or equal to $\min\{f_1, f_2\}$, both of which are non-increasing. Also, by definition, $\breve{f}(x)\le \min\{f_1, f_2\} \le 1 - x$ for all $x \in [0, 1]$. Since $\breve{f}$ is convex, it is continuous over $(0, 1)$. Since $f_1(1) = f_2(1) = 1$, we have $\breve{f}(1) = 1$. Then, in the half neighborhood of $(1, \breve{f}(1))$, the graph of $\breve{f}$ coincides with that of $f_1$ or $f_2$ or both~\citep[Theorem 2.5]{lce-phd}. Therefore, $\breve{f}$ is continuous at $1$, due to the continuity of both $f_1$ and $f_2$ at 1. At $x = 0$, if $f_1(0) = f_2(0)$, then the continuity of $\breve{f}$ follows due to a similar argument as above. So let us assume that is not the case, and without loss of generality, let $f_1(0) < f_2(0)$. Then the graph of $\breve{f}$ in the half neighborhood of $(0, \breve{f}(0))$ is either a straight line~\citep[Theorem 2.5]{lce-phd}, or coincides with that of $f_1$. In either case, it is continuous at 0. It follows that $\breve{f}$ is a trade-off function. 
\else
See Appendix~\ref{app:proofs}.
\fi
\end{proof}

\begin{corollary}
\label{cor:lce}
 Let $\breve{f}$ be the lower convex envelope of two trade-off functions $f_1$ and $f_2$ such that $f_1 \le f_2$ over $[0, 1]$. Then $\breve{f} = f_1$.
\end{corollary}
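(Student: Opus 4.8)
The plan is to unwind the definition of the lower convex envelope and observe that, under the hypothesis $f_1 \le f_2$, the function $f_1$ is itself the largest feasible candidate in the supremum. First I would note that $f_1 \le f_2$ on $[0,1]$ gives $\min\{f_1, f_2\} = f_1$ pointwise, so that
\[
\breve{f}(x) = \sup\,\{ f(x) \mid f \text{ is convex and } f \le f_1 \}
\]
for every $x \in [0,1]$. The one fact to invoke is that $f_1$, being a trade-off function, is convex (by the characterization proposition of \citep{dong2019gaussian} quoted above); hence $f_1$ is admissible in the set defining the supremum.

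From admissibility of $f_1$ we immediately get $\breve{f}(x) \ge f_1(x)$ for all $x$. For the reverse inequality, every $f$ appearing in the supremum satisfies $f(x) \le f_1(x)$ by definition, so taking the supremum preserves this bound and yields $\breve{f}(x) \le f_1(x)$. Combining the two inequalities gives $\breve{f} = f_1$ pointwise on $[0,1]$, which is the claim. There is no real obstacle here: the only thing to be careful about is recording why $f_1$ qualifies as a convex lower bound of $\min\{f_1,f_2\}$, and that is exactly the convexity clause of the definition of a trade-off function together with the assumption $f_1 \le f_2$.
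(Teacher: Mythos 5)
Your proposal is correct and is precisely the argument the paper intends: the paper states this as an immediate corollary without proof, and your two-line verification (that $f_1$ is convex, hence admissible in the supremum, and that every admissible candidate is bounded above by $\min\{f_1,f_2\}=f_1$) is exactly the right justification.
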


Let $D$ and $D'$ be any two neighboring data sets from $\mathbb{N}^{|\mathbb{D}|}$. Let $\mathbb{D}_1$ and $\mathbb{D}_2$ be disjoint subsets of $\mathbb{D}$. Write $D_1 = D \cap \mathbb{D}_1$ and $D_2 = D \cap \mathbb{D}_2$. Analogously define $D'_1$ and $D'_2$. 
Let $T$ be the trade-off function as defined in Definition~\ref{def:T}.

\begin{theorem}[Parallel composition in $f$-DP]
\label{the:par-comp:gdp}
Let $M_1$ and $M_2$ be $f_1$-DP and $f_2$-DP mechanisms, respectively. The joint mechanism $M$ defined by $
M(D) \coloneqq (y_1, M_2(y_1, D_2))$, where $y_1 \coloneqq M_1(D_1)$, is $\text{lce}\{f_1, f_2\}$-DP.
\end{theorem}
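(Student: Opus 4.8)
The plan is to verify the definition of $f$-DP directly, i.e., to show that $T(M(D), M(D')) \ge \text{lce}\{f_1,f_2\}$ for every pair of neighboring data sets $D \sim D'$; since $\text{lce}\{f_1,f_2\}$ is a genuine trade-off function by Lemma~\ref{lem:lce}, this is exactly what is required. First I would reduce the work by noting that $\text{lce}\{f_1,f_2\} \le \min\{f_1,f_2\}$, so it suffices to show that $T(M(D),M(D'))$ dominates $f_1$ or $f_2$ in each case. The case split is dictated by the geometry: because $\mathbb{D}_1$ and $\mathbb{D}_2$ are disjoint and $D, D'$ differ in a single row $r$, exactly one of the following holds: (i) $r \in \mathbb{D}_1$, so $D_1 \sim D_1'$ and $D_2 = D_2'$; (ii) $r \in \mathbb{D}_2$, so $D_1 = D_1'$ and $D_2 \sim D_2'$; or (iii) $r \notin \mathbb{D}_1 \cup \mathbb{D}_2$, so $D_1 = D_1'$ and $D_2 = D_2'$. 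Case (iii) is immediate, since $M(D)$ and $M(D')$ are then identically distributed, so $T(M(D),M(D'))$ is the identity trade-off function $\alpha \mapsto 1-\alpha$, which dominates every trade-off function.

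For case (i), the key observation is that, conditional on the common value $D_2 = D_2'$, the map $y_1 \mapsto (y_1, M_2(y_1, D_2))$ is a single randomized channel that does not reference the underlying data set; hence $M(D)$ and $M(D')$ are obtained from $M_1(D_1)$ and $M_1(D_1')$ by applying the same post-processing. I would then invoke the post-processing property of trade-off functions from \citep{dong2019gaussian}, or establish it inline as follows: given any rejection rule $\phi(y_1,y_2)$ for $M$, set $\psi(y_1) \coloneqq \mathbb{E}_{y_2 \sim M_2(y_1,D_2)}[\phi(y_1,y_2)]$; a short computation shows that $\alpha_\psi = \alpha_\phi$ and, using $D_2 = D_2'$, that $\beta_\psi = \beta_\phi$ when $\psi$ is evaluated against the pair $(M_1(D_1), M_1(D_1'))$. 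Consequently $T(M(D),M(D')) \ge T(M_1(D_1), M_1(D_1')) \ge f_1 \ge \text{lce}\{f_1,f_2\}$, the middle inequality being the $f_1$-DP guarantee of $M_1$ on $D_1 \sim D_1'$.

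For case (ii), I would condition on the first-stage output $y_1$, whose law is $M_1(D_1) = M_1(D_1')$ under both $D$ and $D'$ (since $D_1 = D_1'$). Fix a rejection rule $\phi(y_1,y_2)$ and write $\phi_{y_1}(y_2) \coloneqq \phi(y_1,y_2)$, with $\alpha_{\phi_{y_1}}$ and $\beta_{\phi_{y_1}}$ its errors for testing $M_2(y_1, D_2)$ against $M_2(y_1, D_2')$. For each fixed $y_1$, $M_2(y_1,\cdot)$ is $f_2$-DP and $D_2 \sim D_2'$, so $\beta_{\phi_{y_1}} \ge f_2(\alpha_{\phi_{y_1}})$. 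Since $\alpha_\phi = \mathbb{E}_{y_1}[\alpha_{\phi_{y_1}}]$ and $\beta_\phi = \mathbb{E}_{y_1}[\beta_{\phi_{y_1}}]$, convexity of the trade-off function $f_2$ together with Jensen's inequality gives $\beta_\phi \ge \mathbb{E}_{y_1}[f_2(\alpha_{\phi_{y_1}})] \ge f_2(\mathbb{E}_{y_1}[\alpha_{\phi_{y_1}}]) = f_2(\alpha_\phi)$; combined with $f_2$ being non-increasing, this yields $T(M(D),M(D'))(\alpha) \ge f_2(\alpha)$ for all $\alpha$, hence $T(M(D),M(D')) \ge f_2 \ge \text{lce}\{f_1,f_2\}$. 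Assembling the three cases completes the proof.

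I expect case (ii) to be the main obstacle: one has to recognize that convexity of trade-off functions is precisely the property that lets the averaging-over-$y_1$ step survive (a merely pointwise lower bound would not pass through Jensen), and one must take a little measure-theoretic care that $y_1 \mapsto \alpha_{\phi_{y_1}}$ and $y_1 \mapsto \beta_{\phi_{y_1}}$ are measurable so that the expectations and the Fubini-type interchange are legitimate. It is also worth stating explicitly in the proof why the conclusion is phrased with $\text{lce}\{f_1,f_2\}$ rather than $\min\{f_1,f_2\}$: the pointwise minimum of two trade-off functions need not be convex, hence need not be a trade-off function and cannot serve as a privacy parameter, whereas Corollary~\ref{cor:lce} shows that whenever $f_1 \le f_2$ the bound collapses to the expected $f_1$.
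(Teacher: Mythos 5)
Your proof is correct, and it reaches the theorem by the same case decomposition as the paper --- splitting on whether the differing row lies in $\mathbb{D}_1$, in $\mathbb{D}_2$, or in neither --- but the way you discharge each case is genuinely different. The paper factors $T(M(D),M(D'))$ as the tensor product $T(M_1(D_1),M_1(D_1')) \otimes T(M_2(y_1,D_2),M_2(y_1,D_2'))$ and then invokes the identities $f\otimes \mathrm{Id} = \mathrm{Id} \otimes f = f$ from the tensor-product calculus of \citep{dong2019gaussian}; you instead prove the two one-sided bounds from first principles, treating case (i) as post-processing of $M_1$ and case (ii) by conditioning on $y_1$ and pushing the pointwise bound $\beta_{\phi_{y_1}} \ge f_2(\alpha_{\phi_{y_1}})$ through Jensen's inequality using convexity and monotonicity of $f_2$. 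What your route buys is self-containedness --- you are effectively re-deriving the relevant special cases of the tensor-product lemma, including the measurability caveats the black-box statement hides --- and a cleaner treatment of adaptivity, since writing the joint law as a product distribution is slightly loose when $M_2$'s input depends on $y_1$, whereas your conditioning argument handles this directly. What the paper's route buys is brevity and immediate extensibility to $k>2$ blocks via associativity of $\otimes$. You also correctly handle the case where the differing row lies outside $\mathbb{D}_1\cup\mathbb{D}_2$ (the subsets are disjoint but need not cover $\mathbb{D}$), which the paper's dichotomy ``either $D_1 \sim D_1'$ or $D_2 \sim D_2'$, but not both'' silently omits; there $M(D)$ and $M(D')$ are identically distributed and the bound is trivial. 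Your closing remark on why the conclusion must be phrased with $\mathrm{lce}\{f_1,f_2\}$ rather than the possibly non-convex pointwise minimum matches exactly the role played by Lemma~\ref{lem:lce} and Corollary~\ref{cor:lce} in the paper.
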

\begin{proof}
We have
\begin{align}
    & T(M(D), M(D')) \nonumber\\
    &=T(M_1(D_1) \times M_2(y_1, D_2), M_1(D'_1) \times M_2(y_1, D'_2)) \nonumber\\
    &= T(M_1(D_1), M_1(D'_1)) \otimes T(M_2(y_1, D_2), M_2(y_1, D'_2))\text{.}\label{eq:part}
\end{align}
Since $D \sim D'$, either $D_1 \sim D'_1$ or $D_2 \sim D'_2$, but not both. Assume $D_1 \sim D'_1$. Then $D_2 = D'_2$, and Eq.~\ref{eq:part} becomes
\begin{align}
    &T(M(D), M(D')) \nonumber\\
    &= T(M_1(D_1), M_1(D'_1)) \otimes T(M_2(y_1, D_2), M_2(y_1, D_2)) \nonumber\\
                   &= T(M_1(D_1), M_1(D'_1)) \otimes \text{Id} \nonumber\\
                   &= T(M_1(D_1), M_1(D'_1)) \nonumber\\
                   &\geq f_1 \label{eq:f1},
\end{align}
where $\text{Id}$ is the trade-off function of two identical distributions, and the third step follows from the properties of the tensor product of trade-off functions~\citep[
\S 3.1]{dong2019gaussian}. Next assume $D_2 \sim D'_2$, which means $D_1 = D_1'$, and in this case Eq.~\ref{eq:part} becomes
\begin{align}
    & T(M(D), M(D')) \nonumber\\
    &= T(M_1(D_1), M_1(D_1)) \otimes T(M_2(y_1, D_2), M_2(y_1, D'_2))  \nonumber\\
                   &= \text{Id} \otimes T(M_2(y_1, D_2), M_2(y_1, D'_2)) \nonumber\\
                   &= T(M_2(y_1, D_2), M_2(y_1, D'_2)) \nonumber\\
                   &\geq f_2 \label{eq:f2}.
\end{align}
Combining Eqs.~\ref{eq:f1} and \ref{eq:f2}, for the unconditional distributions $M(D)$ and $M(D')$, we get

\[
T(M(D), M(D')) \ge \min\{f_1, f_2\} \ge \text{lce}\{f_1, f_2\}.
\]

\end{proof}
The above extends to any countable number of disjoint subsets of the data domain. In particular, for $k \ge 2$, we have that $M$ is $\text{lce}\{f_1, f_2, \ldots, f_k\}$-DP.

\begin{corollary}
\label{cor:parallelcompositiongdp}
Let a sequence of $k$ mechanisms $M_i$ each be $\mu_i$-GDP. Let $\mathbb{D}_i$ be disjoint subsets of $\mathbb{D}$. The joint mechanism defined as the sequence of $M_i(D \cap \mathbb{D}_i)$ (given also the output of the previous $i-1$ mechanisms) is $\max\{\mu_1, \mu_2, \ldots, \mu_k\}$-GDP.
\end{corollary}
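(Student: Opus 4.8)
The plan is to instantiate the $k$-fold version of the parallel composition theorem for $f$-DP --- the extension of Theorem~\ref{the:par-comp:gdp} recorded in the remark immediately following its proof --- with the trade-off functions $f_i = G_{\mu_i}$, and then to identify the resulting lower convex envelope $\text{lce}\{G_{\mu_1}, \ldots, G_{\mu_k}\}$ with $G_{\mu^*}$, where $\mu^* \coloneqq \max\{\mu_1, \ldots, \mu_k\}$. By hypothesis each $M_i$ restricted to $D \cap \mathbb{D}_i$ is $G_{\mu_i}$-DP, and the $\mathbb{D}_i$ are pairwise disjoint, so that extension gives directly that the joint mechanism is $\text{lce}\{G_{\mu_1}, \ldots, G_{\mu_k}\}$-DP. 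Hence the entire content of the corollary reduces to computing this envelope.

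The key observation is that the family $\{G_\mu : \mu \ge 0\}$ is totally ordered by the pointwise order: from the explicit form $G_\mu(\alpha) = \Phi(\Phi^{-1}(1 - \alpha) - \mu)$ and the monotonicity of $\Phi$, one sees that $\mu \le \mu'$ implies $G_\mu(\alpha) \ge G_{\mu'}(\alpha)$ for every $\alpha \in [0,1]$; that is, $\mu \mapsto G_\mu$ is pointwise non-increasing. Consequently $\min\{G_{\mu_1}, \ldots, G_{\mu_k}\} = G_{\mu^*}$ pointwise.

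It then remains to show $\text{lce}\{G_{\mu_1}, \ldots, G_{\mu_k}\} = G_{\mu^*}$. This is the $k$-ary generalization of Corollary~\ref{cor:lce}, which I would obtain by induction on $k$: at each stage the running pointwise minimum of the first $j$ functions is again some $G_{\mu_\ell}$, which is itself a trade-off function and therefore convex, so Corollary~\ref{cor:lce} applies. Concretely, the lower convex envelope is the supremum of all convex functions dominated by the pointwise minimum $G_{\mu^*}$; since $G_{\mu^*}$ is convex it is one of the competitors in that supremum, and it also dominates every competitor, so the supremum equals $G_{\mu^*}$. Therefore the joint mechanism is $G_{\mu^*}$-DP, which is by definition $\mu^*$-GDP, i.e.\ $\max\{\mu_1, \ldots, \mu_k\}$-GDP.

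I do not anticipate a genuine obstacle here: the argument is essentially bookkeeping on top of Theorem~\ref{the:par-comp:gdp}. The only points requiring a little care are invoking the monotonicity of $\mu \mapsto G_\mu$ with the correct direction of the inequality, and stating the finite-family version of the lower convex envelope cleanly so that Corollary~\ref{cor:lce} can be chained; both are routine.
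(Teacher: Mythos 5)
Your proposal is correct and follows essentially the same route as the paper: apply the $k$-fold version of Theorem~\ref{the:par-comp:gdp} to reduce to computing $\text{lce}\{G_{\mu_1},\ldots,G_{\mu_k}\}$, then use the pointwise total ordering of the family $\{G_\mu\}$ together with Corollary~\ref{cor:lce} to identify this envelope with $G_{\max_i \mu_i}$. Your direct observation that $\mu \mapsto G_\mu$ is pointwise non-increasing is a slightly cleaner way of establishing the same ordering that the paper derives via a non-intersection-plus-contradiction argument, but the substance is identical.
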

\begin{proof}
\iffull
From Theorem~\ref{the:par-comp:gdp}, $M$ is $\text{lce}\{G_{\mu_1}, G_{\mu_2}, \ldots, G_{\mu_k}\}$-DP. From the definition of $G_\mu$~\citep{dong2019gaussian}, $G_{\mu} = \Phi(\Phi^{-1}(1 - \alpha) - \mu)$, where $\Phi$ is the standard normal CDF, $\mu \ge 0$, and $0 \leq \alpha \leq 1$. Fix any $\mu_i$ and $ \mu_j$ such that $\mu_i \neq \mu_j$. Equating $G_{\mu_i}$ and $G_{\mu_j}$, and noting that $\Phi$ is a strictly increasing function, we get
\[
\Phi^{-1}(1 - \alpha) - \mu_i = \Phi^{-1}(1 - \alpha) - \mu_j,
\]
which implies $\mu_i = \mu_j$, a contradiction. Thus, $G_{\mu_i}$ and $G_{\mu_j}$ do not intersect for all real numbers in $[0, 1]$. Assume that $G_{\mu_i} < G_{\mu_j}$. From Corollary~\ref{cor:lce}, $\text{lce}\{G_{\mu_i}, G_{\mu_j}\} = G_{\mu_i}$, and
\[
\Phi^{-1}(1 - \alpha) - \mu_i < \Phi^{-1}(1 - \alpha) - \mu_j,
\]
implies that $\mu_i > \mu_j$. The result follows. 
\else
See Appendix~\ref{app:proofs}.
\fi
\end{proof}

The parallel composition theorem considers mechanisms that operate on disjoint subsets of the domain. However, we are interested in the more general case where mechanisms operate on arbitrary subsets of the domain. To address this, we introduce the concept of maximum overlap, which we explore in depth in subsequent sections.

\descr{Maximum overlap.} Let $\mathcal{M}_i$ be a sequence of $k$ mechanisms, each providing $f_i$-differential privacy. Let $\mathbb{D}_i$ be arbitrary subsets of the domain $\mathbb{D}$. Let $D \cap \mathbb{D}_i$ denote the input to the mechanism $M_i$, where $D$ is a data set. The mechanism is also given as input the outputs of the previous $i-1$ mechanisms. The maximum overlap $f_{\gamma}$ for the sequence of mechanisms is defined by
\[
f_{\gamma} \coloneqq \underset{I \subseteq \{1, \ldots, k\}}{\text{lce}}\left\{ \bigotimes_{i \in I} f_i  :  \bigcap_{i \in I} \mathbb{D}_{i} \neq \emptyset \right\}.
\]

The name `maximum' may be a bit confusing given the lower convex envelope and its relation to the minimum of the trade-off functions in the definition. This is because $f_{\gamma}$ is a trade-off function, and for specific definitions of privacy (e.g., $\epsilon$-DP or $\mu$-GDP), the minimization of the trade-off function corresponds to a maximization of the parameters (e.g., $\epsilon$ or $\mu$).

For Gaussian differential privacy, where each mechanism provides $\mu_i$-GDP, we can exactly characterize the maximum overlap as $G_{\gamma}$, where
\[
\gamma \coloneqq \max_{I \subseteq \{1, \ldots, k\}} \left\{ \sqrt{\sum_{i \in I} \mu_i^2}  :  \bigcap_{i \in I} \mathbb{D}_{i} \neq \emptyset \right\}.
\]

For $\epsilon$-differential privacy, where each mechanism provides $\epsilon_i$-DP, 
\iffull
we need to invoke the central limit theorem (\ref{the:clt}). It follows that $f_{\gamma} \rightarrow G_{\gamma}$, where
\[
\gamma \coloneqq \max_{I \subseteq \{1, \ldots, k\}} \left\{ \sqrt{\sum_{i \in I} \epsilon_i^2}  :  \bigcap_{i \in I} \mathbb{D}_{i} \neq \emptyset \right\}.
\]

The above serves as an approximation. We can also give a (loose) lower bound using the simple sequential composition theorem for $\epsilon$-DP. In this case 
\fi
we have that $f_{\gamma} \geq f_{\epsilon', 0}$, where $f_{\epsilon', 0}$ is the trade-off function of an $\epsilon'$-DP mechanism and
\[
\epsilon' \coloneqq \max_{I \subseteq \{1, \ldots, k\}} \left\{ \sum_{i \in I} \epsilon_i  :  \bigcap_{i \in I} \mathbb{D}_{i} \neq \emptyset \right\}.
\]

Or, in the special case where all mechanisms provide $\epsilon$-DP, we have

\[
\epsilon' \coloneqq \epsilon \max_{I \subseteq \{1, \ldots, k\}} \left\{ |I|  :  \bigcap_{i \in I} \mathbb{D}_{i} \neq \emptyset \right\}.
\]


This definition of maximum overlap leads to our theorem for composition of mechanisms operating on arbitrary subsets of the domain.

\begin{theorem}[Composition of arbitrary mechanisms]
\label{thm:arbitrarycomposition}
Let $\mathcal{M} \coloneqq \{M_i(D \cap \mathbb{D}_i)\}$, for $1 \le i \le k$, be a set of mechanisms, where $\mathbb{D}_i$ are subsets of the domain $\mathbb{D}$. Suppose that $M_i$ is $f_i$-differentially private. Then the composition of $\mathcal{M}$ is $f_{\gamma}$-differentially private, where $f_\gamma$ is the maximum overlap of $\{f_i \colon i \in [k]\}$.
\end{theorem}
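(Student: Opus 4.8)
The plan is to reduce the general statement to the two-subset parallel composition theorem (Theorem~\ref{the:par-comp:gdp}) combined with the sequential composition theorem for $f$-DP, by grouping the mechanisms according to which subsets of indices have a common intersection point. The starting observation is that the composed mechanism $M$ maps a data set $D$ to the tuple $(y_1, \ldots, y_k)$ where $y_i = M_i(y_1, \ldots, y_{i-1}, D \cap \mathbb{D}_i)$, so by the same argument as in Eq.~\ref{eq:part}, the trade-off function $T(M(D), M(D'))$ factors (via the tensor product) according to the chain rule into contributions from the individual mechanisms. For a fixed pair of neighboring data sets $D \sim D'$ differing in the row $r \in \mathbb{D}$, a mechanism $M_i$ "sees" a change only if $r \in \mathbb{D}_i$; for all other $i$ the inputs $D \cap \mathbb{D}_i$ and $D' \cap \mathbb{D}_i$ coincide and contribute the identity trade-off function $\mathrm{Id}$, which is neutral for $\otimes$.

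First I would make this precise: fix $D \sim D'$ with differing row $r$, and let $I_r \coloneqq \{ i \in [k] : r \in \mathbb{D}_i \}$. Then, iterating the computation in the proof of Theorem~\ref{the:par-comp:gdp} (peeling off one mechanism at a time and using that $M_i(D \cap \mathbb{D}_i) = M_i(D' \cap \mathbb{D}_i)$ whenever $i \notin I_r$, together with $f \otimes \mathrm{Id} = f$ and the associativity/commutativity-in-effect of the tensor product as used in \citep[\S 3.1]{dong2019gaussian}), we obtain
\[
T(M(D), M(D')) \;\ge\; \bigotimes_{i \in I_r} f_i .
\]
Crucially, $r \in \mathbb{D}_i$ for every $i \in I_r$ means $r \in \bigcap_{i \in I_r} \mathbb{D}_i$, so $\bigcap_{i \in I_r} \mathbb{D}_i \neq \emptyset$; hence $I_r$ is one of the index sets $I$ appearing in the definition of $f_\gamma$. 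Therefore $\bigotimes_{i \in I_r} f_i \ge \mathrm{lce}\{\bigotimes_{i \in I} f_i : \bigcap_{i\in I}\mathbb{D}_i \neq \emptyset\} = f_\gamma$, using that the lower convex envelope is a lower bound for each of the functions it is taken over (and that it is itself a trade-off function by Lemma~\ref{lem:lce}, so the inequality $T(M(D),M(D')) \ge f_\gamma$ is between bona fide trade-off functions). Since this holds for every pair of neighbors $D \sim D'$ (the particular $I_r$ depends on the differing row, but in all cases the bound $f_\gamma$ applies), $M$ is $f_\gamma$-DP by definition.

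I expect the main obstacle to be the bookkeeping in the "peeling" step: justifying rigorously that the $n$-fold tensor factorization of $T(M(D),M(D'))$ restricted to the indices in $I_r$ is valid when the later mechanisms depend on the outputs of earlier ones. The subtlety is the same one already handled (for $k=2$) in Theorem~\ref{the:par-comp:gdp} — one conditions on the outputs $y_j$ for $j \notin I_r$ and uses that, conditionally, those coordinates contribute $\mathrm{Id}$ — but stating it cleanly for a general interleaving of "changed" and "unchanged" mechanisms requires either an explicit induction on $k$ or an appeal to the $f$-DP sequential composition theorem applied to the subsequence indexed by $I_r$ after marginalizing out the rest. A secondary, purely routine point is to note that only finitely many distinct index sets $I_r$ arise (one per possible differing row, and in any case the supremum/envelope in the definition of $f_\gamma$ ranges over all subsets $I \subseteq [k]$ with nonempty common intersection), so taking the lower convex envelope over this collection is well-defined and yields a trade-off function, which is exactly what Lemma~\ref{lem:lce} (extended to $k$ functions, as remarked after Theorem~\ref{the:par-comp:gdp}) guarantees.
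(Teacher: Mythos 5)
Your proof is correct, but it takes a genuinely different route from the paper's. The paper reduces the theorem to the parallel composition theorem (Theorem~\ref{the:par-comp:gdp}): it forms the collection $I_1$ of index sets with non-empty common intersection, passes to the subcollection $I_2$ of maximal such sets, proves $\min F_1 = \min F_2$ via monotonicity of the tensor product, observes that the intersections $\bigcap_{i\in I'}\mathbb{D}_i$ for distinct $I'\in I_2$ are pairwise disjoint, and then invokes Theorem~\ref{the:par-comp:gdp} on these groups. You instead generalize the \emph{proof} of Theorem~\ref{the:par-comp:gdp} directly: fix a neighboring pair differing in row $r$, note that only the mechanisms with $r\in\mathbb{D}_i$ contribute non-trivially (the rest contribute $\mathrm{Id}$), obtain $T(M(D),M(D'))\ge\bigotimes_{i\in I_r}f_i$, and observe that $I_r$ has non-empty intersection so this is at least $f_\gamma$. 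Your route is more direct and sidesteps both the $\min F_1=\min F_2$ argument and the disjointness claim for maximal intersections; what the paper's route buys is an explicit identification of the maximal overlapping sets (the analogue of $O_2$ in Proposition~\ref{prop:max-overlap}) and a visible reduction to the parallel composition theorem it has just proved. The bookkeeping you flag as the main obstacle --- peeling off adaptively composed mechanisms and discarding the $\mathrm{Id}$ factors --- is exactly the same induction the paper already relies on when it asserts that Theorem~\ref{the:par-comp:gdp} extends to $k\ge 2$ disjoint subsets, so it is not a gap; the edge case $I_r=\emptyset$ (the differing row lies in none of the $\mathbb{D}_i$) is trivial since then $T(M(D),M(D'))=\mathrm{Id}\ge f_\gamma$.
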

\begin{proof}
Consider $I_1 \coloneqq \left\{I \subseteq [k] : \bigcap_{i \in I} \mathbb{D}_i \neq \emptyset\right\}$, 
and let $F_1 \coloneqq \left\{ \bigotimes_{i \in I} f_i : I \in I_1 \right\}$. Also, let $I_2 \coloneqq \left\{ I \in I_1 : \text{ for all } I' \in I_1, I \not\subset I' \right\}$, i.e., the set of all elements of $I_1$ which are not proper subsets of any other element in $I_1$. Finally, let $F_2 \coloneqq \left\{ \bigotimes_{i \in I} f_i : I \in I_2 \right\}$. We claim that $\min F_1 = \min F_2$. Since $I_2$ is a subset of $I_1$, we immediately have that $\min F_1 \leq \min F_2$. Next, consider $\min F_2$. Let $I' \in I_1$, and let $I'' \supseteq I'$ be a set (which is guaranteed to be in $I_2$ by construction). We see that
\begin{equation*}
\min F_2 \leq \bigotimes_{i \in I''} f_i = \bigotimes_{i \in I'} f_i \bigotimes_{i \notin I'} f_i
                          \leq \bigotimes_{i \in I'} f_i \bigotimes_{i \notin I'} \text{Id} = \bigotimes_{i \in I'} f_i,
\end{equation*}
where $\text{Id}$ is the trade-off function of two identical distributions~\citep{dong2019gaussian}. Above, we have used the fact that $\text{Id} \geq f$ for all trade-off functions $f$, and other properties of the tensor product~\citep[Section 3.1]{dong2019gaussian}. Therefore, $\min F_2 \leq \min F_1$. Thus, $\min F_1 = \min F_2$. Next, we claim that for all $I', I'' \in I_2$, the intersected domains $\bigcap_{i \in I'} \mathbb{D}_i$ and $\bigcap_{i \in I''} \mathbb{D}_i$ are disjoint. Assume to the contrary that they are not. Then $\left(\bigcap_{i \in I'} \mathbb{D}_i\right) \cap  \left(\bigcap_{i \in I''} \mathbb{D}_i\right) = \bigcap_{i \in I' \cup I''} \mathbb{D}_i \neq \emptyset $. This implies that $I' \cup I'' \in I_2$, a contradiction. Thus, the set of mechanisms $\{M_i(D \cap \mathbb{D}_i)\}$ is $\text{lce}\{ F_2\}$-DP according to Theorem~\ref{the:par-comp:gdp}. Since, $\min F_2 = \min F_1$, this is exactly the maximum overlap of $\{f_i : i \in [k]\}$.
\end{proof}

We note that if no subsets of the domain are disjoint, $f_{\gamma}$ is exactly the sequential composition of all mechanisms, and if all subsets of the domain are disjoint, $f_{\gamma}$ is given exactly the parallel composition of all mechanisms (Theorem~\ref{the:par-comp:gdp}).

\section{Predicate Queries and Maximum Overlap}
\label{sec:preliminaries}

In Section~\ref{sub:f-dp-comp}, we defined maximum overlap in terms of $f$-differential privacy. According to the definition, maximum overlap is determined by identifying mechanisms whose sub-domains overlap. In general, there are different $f$-differentially private mechanisms answering different types of queries, e.g., predicate and sum queries. In practice, however, there is often a single fixed mechanism, e.g., the Gaussian mechanism (Definition~\ref{def:gauss}), and a single class of queries, e.g., the predicate queries. In this case, one can determine maximum overlap using only information about the queries, i.e., by checking the subsets of the domain \emph{covered} by the queries.

In this section, we will show how maximum overlap relates to a given set $Q$ of $t$ queries. The data custodian could optimize the overall privacy budget usage using Theorem~\ref{thm:arbitrarycomposition}. Unfortunately, this 
procedure is exponential in $m$ (the number of attributes in the domain) as it requires checking each element of the domain to see if it satisfies the query or not. We introduce a class of queries, which we call \emph{predicate queries}, also presented in~\citep{mckenna}, for which we can efficiently determine if the domains overlap. We then show how Theorem~\ref{thm:arbitrarycomposition} relates to this query class.


\subsection{Predicates and Predicate Queries}
\label{sub:queries}

A \emph{predicate} on an attribute $A$ is a boolean function $\phi: A \rightarrow \{0, 1\}$. An attribute value $a \in A$ is said to \emph{satisfy} a predicate $\phi$ if $\phi(a) = 1$. The \emph{coverage} $C_\phi(A)$ of a predicate $\phi$ on the attribute $A$ is the set of all attribute values of $A$ that satisfy $\phi$, i.e.,
\[
C_\phi(A) \coloneqq \{a \in A : \phi(a) = 1\}.
\]
Two predicates $\phi_1$ and $\phi_2$ are \emph{disjoint} on attribute $A$ if $C_{\phi_1}(A) \cap C_{\phi_2}(A) = \emptyset$. Otherwise they are said to \emph{overlap}. 
\iffull
A \emph{tautology} on the attribute $A$, denoted $I$, is the predicate whose coverage on $A$ is $A$ itself. A \emph{contradiction} on the attribute $A$, denoted $I^c$, is the predicate whose coverage on $A$ is empty. We say that a predicate is \emph{non-trivial} if it is neither a tautology nor a contradiction. The following proposition is straightforward.
\begin{proposition}
\label{prop:trivial}
Let $I$ be a tautology, $I^c$ be a contradiction and $\phi$ be an arbitrary predicate on $A$. Let $A$ be an attribute. Then
\begin{enumerate}
    \item $I$ and $\phi$ overlap on $A$, except if $\phi$ is a contradiction.
    \item $I^c$ and $\phi$ are disjoint on $A$. 
\end{enumerate}
\end{proposition}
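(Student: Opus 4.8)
The plan is to unwind the definitions of coverage, tautology, and contradiction, and reduce both claims to elementary facts about set intersection; no real ingenuity is required.

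For claim (1), I would start from the definition of overlap: $I$ and $\phi$ overlap on $A$ iff $C_I(A) \cap C_\phi(A) \neq \emptyset$. Since $I$ is a tautology, $C_I(A) = A$, so $C_I(A) \cap C_\phi(A) = A \cap C_\phi(A) = C_\phi(A)$, using that $C_\phi(A) \subseteq A$ by definition of coverage. Hence $I$ and $\phi$ overlap precisely when $C_\phi(A) \neq \emptyset$. By the definition of a contradiction, $C_\phi(A) = \emptyset$ holds exactly when $\phi$ is a contradiction, so $I$ and $\phi$ overlap unless $\phi$ is a contradiction, which is the claim.

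For claim (2), since $I^c$ is a contradiction we have $C_{I^c}(A) = \emptyset$, and therefore $C_{I^c}(A) \cap C_\phi(A) = \emptyset \cap C_\phi(A) = \emptyset$ for any predicate $\phi$, which is by definition the statement that $I^c$ and $\phi$ are disjoint on $A$.

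There is essentially no obstacle here; the only point worth a remark is the degenerate case in which the attribute $A$ is empty, where the tautology and the contradiction on $A$ coincide and every predicate is simultaneously both. Under the standing assumption that attributes are nonempty this does not arise, and in any case the set-theoretic identities above remain valid, so the statement holds as written.
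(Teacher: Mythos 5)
Your proof is correct and follows essentially the same route as the paper's: both reduce each claim to the observation that $C_I(A)=A$ and $C_{I^c}(A)=\emptyset$ and then apply elementary facts about set intersection. The only cosmetic difference is that you argue directly with the coverage sets where the paper briefly exhibits a witnessing element $a$ with $\phi(a)=1$ for part (1); the content is identical.
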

\begin{proof}
See Appendix~\ref{app:proofs}.
\end{proof}

\begin{example}
Consider the binary attribute $A = \{\texttt{Child}, \texttt{Adult}\}$, where $\texttt{Child}$ is anyone with age less than or equal to $18$, and an $\texttt{Adult}$, otherwise. Following are examples of predicates:
\begin{itemize}
    \item $\phi_1: A\texttt{ == Child}$ is a non-trivial predicate.
    \item $\phi_2: A\texttt{ == Child or Adult}$ is a tautology.
    \item $\phi_3: \texttt{any value of}\:A$ is a tautology.
    \item $\phi_4: \texttt{neither Child nor Adult}$ is a contradiction.
    \item $\phi_5: A\texttt{ == Dinosaur}$ is a contradiction.
\end{itemize}
\qed
\end{example}

It may seem rather pedantic to consider contradictions, as they necessarily include predicates with conditions outside the data domain (e.g., predicate $\phi_4$ in the example above). But including them in the framework frees us from having to restrict the query interface. As long as a query is defined as a conjunction of predicates on individual attributes (see the following), it is a valid query in our framework.
\fi

\descr{Predicate queries.} Following~\citep{mckenna}, we define a \emph{predicate query} $q$ on a row as a conjunction of $m$ predicates where the $i$\textsuperscript{th} predicate is evaluated on the $i$\textsuperscript{th} attribute value of the row. That is, given $x \in \mathbb{D}$,
\begin{equation}
\label{eq:query}
q(x) \coloneqq \phi_1(x_1) \wedge \phi_2(x_2) \wedge \cdots \wedge \phi_m(x_m)\text{.}
\end{equation}

Overloading notation, the query $q$ on a data set $D$ is defined as $q(D) \coloneqq \sum_{x \in D} q(x)$. One may write a query in terms of its constituent predicates: $q \coloneqq (\phi_1, \phi_2, \ldots, \phi_m)$. 
\iffull
More often than not, one may only be interested in a few attributes of the data set, and hence predicates may only be defined for those specific attributes. In this case, one can represent it in the above fashion by introducing tautologies for the remaining attributes in a straightforward manner.
\fi

\descr{Query coverage.} Since a conjunction of predicates is itself a predicate, one can view a query $q$ on a domain $\mathbb{D}$ as a predicate. With this, one can extend the notion of coverage to the domain. A row $x \in \mathbb{D}$ is said to \emph{satisfy} a query $q$ if $q(x) = 1$. It follows that a row $x \in \mathbb{D}$ satisfies a query $q \coloneqq (\phi_1, \phi_2, \ldots, \phi_m)$ if and only if for each $i$, $x_i$ satisfies $\phi_i$. The coverage $C_q(\mathbb{D})$ of a query $q$ on the domain $\mathbb{D}$ is the set of all rows that satisfy $q$, i.e., 

\[
C_q(\mathbb{D}) \coloneqq \{x \in \mathbb{D} : q(x) = 1\}.
\]

Our definition of predicate queries is broader than the set of statistical range queries considered in~\citep{xiao-np, inan-sensitivity}, and includes them as a special case, i.e., when each non-trivial predicate on an attribute is a range of values of the attribute. 
Two queries $q_1$ and $q_2$ are said to be \emph{disjoint} if $C_{q_1}(\mathbb{D}) \cap C_{q_2}(\mathbb{D}) = \emptyset$. Otherwise they are said to \emph{overlap}. 

\begin{proposition}
\label{prop:disjoint}
Two queries $q_1 \coloneqq (\phi_{1,1}, \phi_{2, 1}, \ldots, \phi_{m, 1})$ and $q_2 \coloneqq (\phi_{1,2}, \phi_{2, 2}, \ldots, \phi_{m, 2})$ are disjoint if and only if there exists at least one attribute $A_i$ such that $\phi_{i, 1}$ and $\phi_{i, 2}$ are disjoint on $A_i$.
\end{proposition}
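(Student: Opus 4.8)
The plan is to prove the biconditional by unwinding the definition of query coverage in terms of the coverages of the constituent predicates, and then using the fact that a Cartesian product is empty if and only if one of its factors is empty.

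First I would establish the key structural fact that for a predicate query $q \coloneqq (\phi_1, \ldots, \phi_m)$, the coverage on the domain factors as a Cartesian product of the predicate coverages:
\[
C_q(\mathbb{D}) = C_{\phi_1}(A_1) \times C_{\phi_2}(A_2) \times \cdots \times C_{\phi_m}(A_m).
\]
This follows directly from the definition of query satisfaction: $x \in C_q(\mathbb{D})$ iff $q(x) = 1$ iff $\phi_i(x_i) = 1$ for every $i$ iff $x_i \in C_{\phi_i}(A_i)$ for every $i$, which is exactly the statement that $x$ lies in the indicated product. Applying this to both $q_1$ and $q_2$, and using the fact that the intersection of two product sets is the product of the componentwise intersections, I get
\[
C_{q_1}(\mathbb{D}) \cap C_{q_2}(\mathbb{D}) = \prod_{i=1}^{m} \bigl( C_{\phi_{i,1}}(A_i) \cap C_{\phi_{i,2}}(A_i) \bigr).
\]

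With this identity in hand, the proposition reduces to the elementary set-theoretic fact that a finite Cartesian product is empty if and only if at least one of its factors is empty. For the ``if'' direction: if some $A_i$ has $\phi_{i,1}$ and $\phi_{i,2}$ disjoint on it, then the $i$-th factor $C_{\phi_{i,1}}(A_i) \cap C_{\phi_{i,2}}(A_i)$ is empty, so the whole product is empty, hence $q_1$ and $q_2$ are disjoint. For the ``only if'' direction: if $q_1$ and $q_2$ are disjoint, the product above is empty, so some factor must be empty, i.e., $C_{\phi_{i,1}}(A_i) \cap C_{\phi_{i,2}}(A_i) = \emptyset$ for some $i$, which is precisely the statement that $\phi_{i,1}$ and $\phi_{i,2}$ are disjoint on $A_i$.

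There is no real obstacle here; the argument is routine once the product decomposition of query coverage is noted. The only point requiring a modicum of care is the contrapositive form of the ``product is empty iff a factor is empty'' fact — namely that if every $C_{\phi_{i,1}}(A_i) \cap C_{\phi_{i,2}}(A_i)$ is nonempty then one can pick a witness in each and assemble them into a single row $x$ satisfying both queries, certifying overlap; this uses that $m$ is finite (a choice over finitely many indices). I would state the product decomposition as a brief displayed step, invoke the emptiness criterion, and conclude.
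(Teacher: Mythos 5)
Your proof is correct and is essentially the paper's argument in a slightly more structural packaging: the paper proves the ``only if'' direction by assembling a witness row from nonempty componentwise intersections and the ``if'' direction by a case analysis on $x_i$, which is exactly the content of your Cartesian-product identity $C_{q_1}(\mathbb{D}) \cap C_{q_2}(\mathbb{D}) = \prod_{i=1}^{m} \bigl( C_{\phi_{i,1}}(A_i) \cap C_{\phi_{i,2}}(A_i) \bigr)$ together with the fact that a finite product is empty iff some factor is. The product decomposition is a clean way to state it, but it is the same underlying reasoning.
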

\begin{proof}
See Appendix~\ref{app:proofs}. 
\end{proof}

Due to the set-theoretic nature of the notion of coverage, the results extend to any finite set of queries $Q$. In particular, we define $C_{Q}(\mathbb{D}) \coloneqq \bigcap_{q \in Q} C_q(\mathbb{D})$ to be the joint coverage of all queries in $Q$.

\descr{Computational Efficiency.}
\iffull
Propositions~\ref{prop:trivial} and~\ref{prop:disjoint} give
\else 
Proposition~\ref{prop:disjoint} gives
\fi
one an efficient way to decide whether two queries $q_1$ and $q_2$ are disjoint: for each attribute where the corresponding predicates of both queries are non-trivial, one checks if the two predicates are disjoint; if they are, the queries are disjoint, otherwise they overlap.
Assuming that the evaluation of a predicate on an attribute is efficient, the above procedure takes $\mathcal{O}(m)$ time only, as compared to the naive way of evaluating queries on each element of the domain, which takes time $\mathcal{O}(|\mathbb{D}|)$, which is exponential in $m$.

\descr{Generalized Query Coverage.} The above defined notion of query coverage is specific to the class of predicate queries, which is the main focus of this paper. However, one can define the notion more generally for other types of queries. Let $Q$ be a set of queries, where each $q \in Q$ is an arbitrary function $q: \mathbb{N}^{|\mathbb{D}|} \rightarrow \mathbb{R}$. Given a data set $D$ and a row $x \in D$, let $D_{\neg{x}}$ denote the neighboring data set of $D$ with one instance of $x$ removed from $D$. Given a row $x \in \mathbb{D}$, we say that $q$ \emph{covers} $x$ if there exists at least one data set $D$ such that $x \in D$ and $q(D) \neq q(D_{\neg{x}})$. The coverage, $C_q(\mathbb{D})$, of $q$ is defined as
\[
C_q(\mathbb{D}) \coloneqq \{x \in \mathbb{D} : q \text{ covers } x\}.
\]
The coverage of $Q$ is then defined as intersection of the coverage of all its queries, as before. Note that in the above, the amount of change in the answers is not specified. For an example where the generalized notion of query coverage deviates from query coverage for predicate (or count) queries, consider \emph{sum} queries, i.e., queries that sum the values of an attribute satisfying a given criterion (e.g., the salaries of all female managers in a company). Clearly, the absolute difference of the answers to any given sum query on two neighboring data sets (i.e., data sets that differ only in the inclusion/exclusion of a single row) depends on the row being removed. This is not the case with the predicate or counting queries; for such queries, if there is a change in answer, then the absolute difference is always $1$.

\subsection{Maximum Overlap}
\label{sub:max-overlap}
Let $Q \coloneqq \{q_1, q_2, \ldots, q_t\}$ be a set of $t$ queries. The \emph{maximum overlap} of $Q$, denoted $\gamma(Q)$, is defined by

\[
\gamma(Q) \coloneqq \max_{Q' \subseteq Q} \{|Q'| : C_{Q'}(\mathbb{D}) \neq \emptyset\}.
\]

It is easy to see that $1 \le \gamma(Q) \le t$. 

\begin{example}
Consider the set of queries $Q \coloneqq \{q_1, q_2, q_3\}$ defined by
\begin{align*}
    q_1 &: \texttt{Postcode == A, Native == Y} \\
    q_2 &: \texttt{Postcode == A OR B} \\
    q_3 &: \texttt{Postcode == B, Native == N}\text{.}
\end{align*}
Then
\begin{itemize}
    \item $q_1$ and $q_2$ are overlapping because the predicate \texttt{Native == Y} overlaps with the tautology \texttt{Native == Any} (not explicit) in $q_2$, and the predicates \texttt{Postcode == A} and \texttt{Postcode == A OR B} also overlap;
    \item $q_2$ and $q_3$ are overlapping because the predicate \texttt{Native == N} overlaps with the tautology \texttt{Native == Any} (not explicit) in $q_2$, and the predicates \texttt{Postcode == A OR B} and \texttt{Postcode == B} also overlap; and
    \item $q_1$ and $q_3$ are disjoint because the predicates \texttt{Native == Y} and \texttt{Native == N} are disjoint. 
\end{itemize}
We immediately have that $\gamma(Q) = 2$.
\qed
\end{example}

As mentioned earlier, we seek an efficient way to determine $\gamma(Q)$ as a function of the number of queries $t$ in $Q$. The naive way is to go through all subsets of $Q$ to determine $\gamma(Q)$, which takes time $\mathcal{O}(2^t)$. The following proposition sheds light on the difficulty of the problem.

\begin{proposition}
\label{prop:pairwise}
Let $Q$ be a set of queries. Then
\begin{enumerate}
    \item if $C_Q(\mathbb{D}) \neq \emptyset$, then all queries in $Q$ pairwise overlap; and
    \item it is possible that all queries in $Q$ pairwise overlap while $C_Q(\mathbb{D}) = \emptyset$.
\end{enumerate}
\end{proposition}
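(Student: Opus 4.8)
The plan is to prove the two parts separately, both by elementary set-theoretic reasoning about query coverage, with part~2 handled by exhibiting a concrete counterexample.

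For part~1, I would argue by contraposition on the coverage sets. Suppose some pair $q_i, q_j \in Q$ is disjoint, i.e.\ $C_{q_i}(\mathbb{D}) \cap C_{q_j}(\mathbb{D}) = \emptyset$. Since $C_Q(\mathbb{D}) = \bigcap_{q \in Q} C_q(\mathbb{D})$ is contained in $C_{q_i}(\mathbb{D}) \cap C_{q_j}(\mathbb{D})$ (intersections only shrink), we immediately get $C_Q(\mathbb{D}) = \emptyset$. Contrapositively, if $C_Q(\mathbb{D}) \neq \emptyset$ then no pair is disjoint, i.e.\ all queries pairwise overlap. This is essentially a one-line monotonicity-of-intersection argument, so I do not expect any difficulty here; one could equivalently phrase it using Proposition~\ref{prop:disjoint} to say that a witness row $x \in C_Q(\mathbb{D})$ satisfies every query, hence satisfies $q_i$ and $q_j$ simultaneously, so they overlap.

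For part~2, I would construct an explicit set of three queries over a small domain that pairwise overlap but have empty joint coverage. The natural choice mimics the classical fact that pairwise intersection does not imply a common element. Take two attributes, say $A_1$ with values $\{a, a'\}$ and $A_2$ with values $\{b, b'\}$ (or reuse the \texttt{Postcode}/\texttt{Native}-style setup from the preceding example), and define three predicate queries whose coverage sets are, respectively, the three $2$-element subsets of a suitable $3$- or $4$-element domain arranged so that each pair shares a row but no row lies in all three. Concretely, over a domain with rows $\{x_1, x_2, x_3\}$ one wants $C_{q_1} = \{x_1, x_2\}$, $C_{q_2} = \{x_2, x_3\}$, $C_{q_3} = \{x_1, x_3\}$; then $q_1 \cap q_2 = \{x_2\}$, $q_1 \cap q_3 = \{x_1\}$, $q_2 \cap q_3 = \{x_3\}$, all nonempty, while $C_Q = \emptyset$. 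I would then check that each such coverage set is realizable as a conjunction of predicates on individual attributes over a genuine Cartesian-product domain (this may require three attributes, one ``responsible'' for excluding each row, or a careful two-attribute encoding).

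The main obstacle is the bookkeeping in part~2: ensuring the chosen coverage sets are actually achievable by \emph{predicate queries} in the sense of Equation~\ref{eq:query} — conjunctions of per-attribute predicates — rather than arbitrary subsets of $\mathbb{D}$, since not every subset of a Cartesian product is a ``combinatorial box'' cut out by coordinatewise conditions. I would resolve this by choosing the domain and the three queries so that each $q_i$ rules out exactly one ``bad'' coordinate pattern via a single attribute, which guarantees the required overlaps and the empty triple intersection while keeping every query a legitimate conjunction of predicates.
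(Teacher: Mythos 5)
Your proposal is correct and matches the paper's proof essentially verbatim: part~1 is the same monotonicity-of-intersection observation, and part~2 uses the same counterexample of three queries whose coverages are the three $2$-element subsets of a $3$-row domain. The realizability concern you raise for part~2 dissolves immediately if you take a single attribute with three values, since a predicate is an \emph{arbitrary} boolean function on an attribute and hence any subset of such a domain is the coverage of a legitimate predicate query — no multi-attribute encoding is needed.
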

\begin{proof}
See Appendix~\ref{app:proofs}.
\end{proof}
Thus, we cannot determine $\gamma(Q)$ by simply checking pairs of queries to see if they overlap.

\descr{Maximum weight overlap.} We also consider the general case where each query $q \in Q$ has an associated weight $w: Q \rightarrow \mathbb{R}^+$. The weight corresponds to the privacy budget allocated to the query when a differentially private mechanism is used to answer $q$. To handle the case when multiple queries are answered by the mechanism, we need to define how their weights compose. The exact form of composition depends on the type of differential privacy used, e.g., $\epsilon$-DP or $f$-DP. However, there are common properties.
\begin{definition}[Composition Function]
\label{def:comp-func}
Let $Q$ be a set of queries, where each query $q \in Q$ has weight $w(q)$, for some function $w: Q \rightarrow \mathbb{R}^+$. A \emph{composition function} is a function $\mathrm{comp}: \mathcal{P}(Q) \rightarrow \mathbb{R}^+$ such that
\begin{itemize}
    \item $\mathrm{comp}(\{q\}) \geq w(q)$ for any $q \in Q$; and
    \item $\mathrm{comp}(Q') \geq \mathrm{comp}(Q'')$ if $Q' \supseteq Q''$, for any $Q', Q'' \subseteq Q$ (\emph{Monotonicity}).
\end{itemize}
\end{definition}

Examples of the function $\mathrm{comp}$ include a simple sum of weights (sequential composition under $\epsilon$-DP), a sum of squares of weights, or the square root of a sum of squares of weights (sequential composition under Gaussian DP). We define the \emph{maximum weight overlap} of $Q$, denoted $\gamma_w(Q)$, by

\[
\gamma_w(Q) \coloneqq \max_{Q' \in \mathcal{P}(Q)} \{\mathrm{comp}(Q') : C_{Q'}(\mathbb{D}) \neq \emptyset\}.
\]

Define the set $O_1$ as the \emph{set of overlapping queries} in $\mathcal{P}(Q)$, i.e., $O_1 \coloneqq \{Q' \in \mathcal{P}(Q) : C_{Q'}(\mathbb{D}) \neq \emptyset\}$. Also, define the set $O_2$, the \emph{set of maximal overlapping subsets of $O_1$}, by $O_2 \coloneqq \{Q' \in O_1 : \text{for all } Q'' \in O_1, Q' \not\subset Q''\}$, i.e., the set of elements of $O_1$ which are not proper subsets of any other elements in $O_1$. We have the following proposition.
\begin{proposition}
\label{prop:max-overlap}
$\gamma_w(Q) = \max\,\{\mathrm{comp}(Q') : Q' \in O_1\} = \max\,\{\mathrm{comp}(Q') : Q' \in O_2\}$. 
\end{proposition}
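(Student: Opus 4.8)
The plan is to prove the two equalities in Proposition~\ref{prop:max-overlap} by showing both are actually restatements of the optimization over $O_1$, with the second equality reducing the search space from $O_1$ to $O_2$ using monotonicity of $\mathrm{comp}$.

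First, observe that the first equality is essentially definitional: $\gamma_w(Q)$ is defined as $\max_{Q' \in \mathcal{P}(Q)}\{\mathrm{comp}(Q') : C_{Q'}(\mathbb{D}) \neq \emptyset\}$, and $O_1$ is by definition exactly the set of $Q' \in \mathcal{P}(Q)$ satisfying $C_{Q'}(\mathbb{D}) \neq \emptyset$. So $\gamma_w(Q) = \max\{\mathrm{comp}(Q') : Q' \in O_1\}$ holds immediately (noting $O_1$ is nonempty since singletons always have nonempty coverage, so the max is well-defined). I would state this in one or two sentences.

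For the second equality, I would argue both inequalities. Since $O_2 \subseteq O_1$, we get $\max\{\mathrm{comp}(Q') : Q' \in O_2\} \le \max\{\mathrm{comp}(Q') : Q' \in O_1\}$ trivially. For the reverse, take any $Q' \in O_1$. Because $\mathcal{P}(Q)$ is finite, there exists a maximal element $Q'' \in O_1$ with $Q' \subseteq Q''$ (i.e., $Q''$ is not a proper subset of any other element of $O_1$), hence $Q'' \in O_2$. By the Monotonicity property of the composition function (Definition~\ref{def:comp-func}), $\mathrm{comp}(Q'') \ge \mathrm{comp}(Q')$. Therefore $\max\{\mathrm{comp}(Q'') : Q'' \in O_2\} \ge \mathrm{comp}(Q')$ for every $Q' \in O_1$, giving $\max\{\mathrm{comp}(Q') : Q' \in O_2\} \ge \max\{\mathrm{comp}(Q') : Q' \in O_1\}$. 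Combining the two inequalities closes the argument.

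The only mildly delicate point — and the one I would be careful to state explicitly — is the existence of the maximal superset $Q''$: this needs $O_1$ to be finite (which it is, being a subset of the finite power set $\mathcal{P}(Q)$) so that an ascending chain $Q' \subseteq Q'_1 \subseteq Q'_2 \subseteq \cdots$ in $O_1$ must terminate at an element that is not a proper subset of anything in $O_1$. I don't expect any real obstacle here; it is a routine finiteness-plus-monotonicity argument, and the substantive content was already front-loaded into the definitions of $O_1$, $O_2$, and the composition function.
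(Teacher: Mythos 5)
Your proof is correct and follows essentially the same route as the paper's: the first equality is definitional, and the second follows from $O_2 \subseteq O_1$ in one direction and from monotonicity of $\mathrm{comp}$ applied to a maximal superset in $O_2$ in the other. Your explicit finiteness argument for the existence of that maximal superset $Q'' \in O_2$ containing a given $Q' \in O_1$ is in fact slightly more careful than the paper's phrasing, which merely asserts it ``by construction.''
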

\begin{proof}
\iffull
The first equality follows immediately from the definitions of maximum weight overlap and the set $O_1$. We consider the second equality. Let $A_1 \coloneqq \{\mathrm{comp}(Q') : Q' \in O_1\}$, and let $A_2 \coloneqq \{\mathrm{comp}(Q') : Q' \in O_2\}$. Since $A_2 \subseteq A_1$, we have that $\max A_1 \geq \max A_2$. Next consider $\max A_2$. Let $Q' \in O_1$, and let $Q'' \supseteq Q'$, which is guaranteed to be in $O_2$ by construction. Then, by the monotonicity property of the composition function, we have that $\max A_2 \geq \mathrm{comp}(Q'') \geq \mathrm{comp}(Q')$. Thus, $\max A_2 \geq \max A_1$. From this it follows that $\max A_2 = \max A_1 = \gamma_w(Q)$.
\else
See Appendix~\ref{app:proofs}.
\fi
\end{proof}

Note that no two distinct subsets of $O_2$ overlap, since otherwise their union will be in $O_2$, contradicting the fact that they are maximally overlapping subsets of queries. The proof of the above theorem is similar to the proof of Theorem~\ref{the:par-comp:gdp}. The advantage here is that one can directly compute maximum (weighted) overlap by considering overlapping queries and then use the underlying composition function, as long as the composition function allows parallel composition and the query weights are equal to the privacy parameter associated with each query. This decouples the computational problem from the underlying type of differential privacy.
For instance, if one considers $\mu_i$-GDP mechanisms, then $\mathrm{comp}(Q) = \sqrt{\sum_{q_i} w(q_i)^2}$, where $w(q_i) = \mu_i$. If one considers $\epsilon_i$-DP mechanisms under sequential composition, then $\mathrm{comp}(Q) = \sum_{q_i} w(q_i)$, where $w(q_i) = \epsilon_i$. And for homogeneous mechanisms, $\mathrm{comp}(Q) = \epsilon \cdot |Q|$, under basic composition of standard differential privacy. This last result follows from the following proposition, which can easily be proved by invoking the monotonicity property of the composition function.
\begin{proposition}
\label{prop:max-w-overlap}
Let $Q$ be a set of queries. If all queries in $Q$ have the same weight, then $\gamma_w(Q) = w \cdot \gamma(Q)$.
\end{proposition}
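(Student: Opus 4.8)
The plan is to reduce the claim to pulling the positive constant $w$ out of the maximum that defines $\gamma_w(Q)$, once the composition function is made explicit. Write $w$ also for the common value of the weight function $w(\cdot)$ on $Q$. For the basic (sequential) composition of $\epsilon$-DP mechanisms --- which is the case the statement is about, as indicated by the line $\mathrm{comp}(Q)=\epsilon\cdot|Q|$ preceding it --- one has $\mathrm{comp}(Q')=\sum_{q\in Q'}w(q)=w\,|Q'|$ for every $Q'\subseteq Q$. First I would substitute this into the definition of $\gamma_w$:
\[
\gamma_w(Q)=\max_{Q'\in\mathcal{P}(Q)}\{\,\mathrm{comp}(Q'):C_{Q'}(\mathbb{D})\neq\emptyset\,\}=\max_{Q'\subseteq Q}\{\,w\,|Q'|:C_{Q'}(\mathbb{D})\neq\emptyset\,\}.
\]
Since $w>0$ is a constant it factors out, and by the definition of $\gamma(Q)$,
\[
\gamma_w(Q)=w\cdot\max_{Q'\subseteq Q}\{\,|Q'|:C_{Q'}(\mathbb{D})\neq\emptyset\,\}=w\cdot\gamma(Q),
\]
which is the claim.

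If one prefers to argue from the composition-function axioms rather than from the explicit formula, the step that uses monotonicity goes as follows. Observe that the family $O_1=\{Q'\in\mathcal{P}(Q):C_{Q'}(\mathbb{D})\neq\emptyset\}$ is downward closed, because $C_{Q''}(\mathbb{D})=\bigcap_{q\in Q''}C_q(\mathbb{D})\supseteq C_{Q'}(\mathbb{D})$ whenever $Q''\subseteq Q'$. Hence any $Q^\star\subseteq Q$ with $|Q^\star|=\gamma(Q)$ and $C_{Q^\star}(\mathbb{D})\neq\emptyset$ is maximal, i.e.\ $Q^\star\in O_2$, so $\gamma(Q)=\max\{|Q'|:Q'\in O_2\}$. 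By Proposition~\ref{prop:max-overlap}, $\gamma_w(Q)=\max\{\mathrm{comp}(Q'):Q'\in O_2\}$; in the homogeneous case $\mathrm{comp}$ takes a common value on all subsets of a given cardinality, and monotonicity makes that value non-decreasing in the cardinality, so the maximum over $O_2$ is attained at $Q^\star$ and equals $w\,\gamma(Q)$.

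I expect the only obstacle to be a bookkeeping one: the identity $\gamma_w(Q)=w\cdot\gamma(Q)$ is special to composition functions that scale linearly in the number of composed queries under homogeneous weights, as basic composition does. For homogeneous Gaussian DP, for instance, $\mathrm{comp}(Q')=w\sqrt{|Q'|}$ and one would instead obtain $\gamma_w(Q)=w\sqrt{\gamma(Q)}$. So the write-up should pin down $\mathrm{comp}$ as the sum-of-weights composition function (or, more generally, state the conclusion as $\gamma_w(Q)=g(\gamma(Q))$ for the cardinality profile $g$ of $\mathrm{comp}$ and then specialize $g(n)=w\,n$); after that, nothing beyond positivity of $w$, or monotonicity together with downward-closedness of $O_1$, is required.
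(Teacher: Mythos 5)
Your proof is correct and matches the paper's intent: the paper gives no written proof of this proposition, merely remarking that it "can easily be proved by invoking the monotonicity property of the composition function," and your direct factoring of $w$ out of the maximum (or the equivalent argument via Proposition~\ref{prop:max-overlap} and monotonicity) is exactly that. Your closing caveat is also well taken: as stated the identity holds only for the sum-of-weights composition function implicitly assumed by the surrounding text ($\mathrm{comp}(Q') = w\,|Q'|$ under homogeneous weights), since, e.g., homogeneous $\mu$-GDP with $\mathrm{comp}(Q') = w\sqrt{|Q'|}$ would instead give $\gamma_w(Q) = w\sqrt{\gamma(Q)}$.
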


\subsection{Utility Gain}
\label{sec:util-gain}
Assume the data custodian wishes to release answers to a set $Q$ of $t$ queries via a differentially private mechanism $\mathcal{M}$. Let $Y_i$ denote the random variable representing the noise added to the $i$\textsuperscript{th} query by the differentially private mechanism, i.e., $Y_i = \mathcal{M}(q_i, D) - q_i(D)$. We are interested in the expectation of the absolute value of the total noise added over all $t$ queries. Under sequential composition, this is
\iffull
\[
\mathbb{E}\left(\sum_{i=1}^t |Y_i| \right) = \sum_{i = 1}^{t} \mathbb{E}(|Y_i|).
\]
\else
$\mathbb{E}\left(\sum_{i=1}^t |Y_i| \right) = \sum_{i = 1}^{t} \mathbb{E}(|Y_i|)$.
\fi
Let $Q' \subseteq Q$ be the set such that $\gamma_w(Q) = \mathrm{comp}(Q')$. Under optimal composition, the expectation is
\iffull
\[
\mathbb{E}\left(\sum_{i : q_i \in Q'} |Y_i| \right) = \sum_{i : q_i \in Q'} \mathbb{E}(|Y_i|).
\]
\else
$\mathbb{E}\left(\sum_{i : q_i \in Q'} |Y_i| \right) = \sum_{i : q_i \in Q'} \mathbb{E}(|Y_i|).$
\fi
The \emph{utility gain}, denoted $U$, is defined as:
\begin{equation}
\label{eq:util}
U \coloneqq 1 - \frac{\sum_{i : q_i \in Q'} \mathbb{E}(|Y_i|)}{\sum_{i = 1}^{t} \mathbb{E}(|Y_i|)}
\end{equation}
Thus, e.g., if $\mathcal{M}$ is the Laplace mechanism under basic composition of pure differential privacy, with $w(q_i) = \epsilon$ for all $i$, then Eq.~\ref{eq:util} simplifies to
\begin{equation}
\label{eq:util-simple}
    U = 1 - \frac{\gamma}{t}.
\end{equation}
Similarly, if $\mathcal{M}$ is the Gaussian mechanism with composition under $\mu$-GDP, with $w(q_i) = \mu$ for all $i$, then the utility gain is the same as above. Thus, the above metric is not dependent on the composition function, but only on the optimal use of parallel composition, and compares it directly to sequential composition. 

For example, if a set of $t=100$ queries has $\gamma = 70$, the utility gain is 30\%. For the data custodian, this means 30\% less noise needs to be added to query results while maintaining the same overall privacy budget. In some cases we shall also report the more commonly used average $l_1$-error, for ease of comparison against our utility gain metric. For a set $Q$ of $t$ queries, $q_1, q_2, \ldots, q_t$, answered via a differentially private mechanism $\mathcal{M}$, the average $l_1$-error is defined as follows:
\begin{equation}
\label{eq:l1}
\text{Average $l_1$ Error} \coloneqq \frac{1}{t} \sum_{i = 1}^t |\mathcal{M}(q_i, D) - q_i(D)|
\end{equation}

\section{Hardness of Maximum Overlap}
\label{sec:comp-theo}
Even if one can efficiently check whether two predicate queries overlap, finding the maximum overlap remains a hard problem. This is mainly because one needs to search the powerset of the set of queries $Q$. Indeed, in this section, we show that finding the maximum overlap of a set of predicate queries is
\iffull
NP-complete. 
\else
NP-hard.
\fi

This will be established by linking one instance of maximum weighted overlap with the problem of finding $l_1$-sensitivity of a set of queries $Q$. As mentioned in the introduction,~\cite{xiao-np} have already shown that computing $l_1$-sensitivity is NP-hard. This then implies readily that maximum weight overlap is NP-hard. 
\iffull
We also show an alternate proof that also establishes NP-completeness of the maximum overlap problem. 
\fi

Let $Q$ be a set of queries, and let $q \in Q$. Let $D$ and $D'$ be neighboring databases. The \emph{sensitivity} of the query $q$ is defined as $\Delta q \coloneqq \max_{D \sim D'} \lvert q(D) - q(D') \rvert$. The \emph{$l_1$-sensitivity} of $Q$ is defined as
\iffull
\[
\Delta Q \coloneqq \max_{D \sim D'} \left(\sum_{q \in Q} \lvert q(D) - q(D') \rvert \right).
\]
\else
$\Delta Q \coloneqq \max_{D \sim D'} \left(\sum_{q \in Q} \lvert q(D) - q(D') \rvert \right)$.
\fi
Next define $w(q) \coloneqq \Delta q$ for each $q \in Q$, and define the composition function as $\mathrm{comp}(Q) \coloneqq \sum_{q \in Q} w(q) = \sum_{q \in Q} \Delta q$. We next show that $\gamma_w(Q) = \Delta Q$ under a \emph{consistency condition}. More specifically, note that the notion of generalized query coverage (discussed in Section~\ref{sub:queries}) defines a query $q$ to cover some row $x$ in the domain if its answer on at least one data set $D$ containing $x$ differs from its answer on the neighboring data set $D_{\neg x}$. But this does not say how much the answer changes by, or whether the change is the same for all rows. We say that the set of queries $Q$ \textit{satisfies} the consistency condition if for each $q \in Q$ we have $\lvert q(D) - q(D') \rvert = \Delta q$ whenever $q(D) \neq q(D')$ for all neighboring data sets $D$ and $D'$. In other words, whenever there is a change in query value over two neighboring data sets, it is the same change over any two neighboring data sets, i.e., the maximum possible change. Thus, the equivalence of the two notions may not hold for general queries, i.e., without the consistency condition being satisfied.
\begin{theorem}
\label{the:mo-equals-sen}
For each $q \in Q$, if $\lvert q(D) - q(D') \rvert = \Delta q$ whenever $q(D) \neq q(D')$ for all neighboring data sets $D$ and $D'$, then $\gamma_w(Q) = \Delta Q$. 
\end{theorem}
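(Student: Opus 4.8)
The plan is to prove the two inequalities $\Delta Q \le \gamma_w(Q)$ and $\gamma_w(Q) \le \Delta Q$ separately, using the consistency hypothesis to turn each side into a maximum of $\mathrm{comp}$ over a family of query subsets and then comparing the families. Throughout, recall that here $w(q) = \Delta q$ and $\mathrm{comp}(Q') = \sum_{q \in Q'} \Delta q$.

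For $\Delta Q \le \gamma_w(Q)$, I would take neighboring data sets $D \sim D'$ attaining $\Delta Q$; since $D$ and $D'$ differ in a single row, without loss of generality $D' = D_{\neg x}$ for some row $x$. Let $Q_x \coloneqq \{q \in Q : q(D) \neq q(D_{\neg x})\}$. The consistency condition forces $|q(D) - q(D_{\neg x})| = \Delta q$ for $q \in Q_x$ and $0$ otherwise, so $\Delta Q = \sum_{q \in Q_x} \Delta q = \mathrm{comp}(Q_x)$. Every $q \in Q_x$ covers $x$ in the generalized sense (the pair $D, D_{\neg x}$ is a witness), hence $x \in \bigcap_{q \in Q_x} C_q(\mathbb{D}) = C_{Q_x}(\mathbb{D})$, so $Q_x$ is feasible in the maximization defining $\gamma_w(Q)$ and therefore $\gamma_w(Q) \ge \mathrm{comp}(Q_x) = \Delta Q$. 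This direction is routine.

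For $\gamma_w(Q) \le \Delta Q$, I would let $Q' \subseteq Q$ attain $\gamma_w(Q)$, so that $C_{Q'}(\mathbb{D}) \neq \emptyset$ and $\gamma_w(Q) = \sum_{q \in Q'} \Delta q$, and fix a row $x \in C_{Q'}(\mathbb{D})$. The key step is to produce a \emph{single} neighboring pair recording a change for every query of $Q'$ at once, i.e.\ a data set $D \ni x$ with $q(D) \neq q(D_{\neg x})$ for all $q \in Q'$; given this, consistency yields $\gamma_w(Q) = \sum_{q \in Q'}\Delta q = \sum_{q \in Q'}|q(D) - q(D_{\neg x})| \le \sum_{q \in Q}|q(D) - q(D_{\neg x})| \le \Delta Q$. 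For the predicate queries that drive the NP-hardness reduction this is immediate: $q(D) - q(D_{\neg x})$ equals the conjunction value $q(x) \in \{0,1\}$ independently of $D$, and $x \in C_q(\mathbb{D})$ says exactly $q(x) = 1$, so the pair $(\{x\}, \emptyset)$ works simultaneously for all $q \in Q'$ (with $\Delta q = 1$ throughout).

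The hard part is this simultaneous-witness step in full generality. The definition of generalized coverage gives, for each $q \in Q'$, only \emph{some} witnessing data set $D_q \ni x$ with $q(D_q) \neq q((D_q)_{\neg x})$, and these witnesses need not coincide across $Q'$; moreover for non-linear queries a naive union of the $D_q$ may destroy a change (adding further rows can undo it), so the real work is in merging the individual witnesses into one common neighboring pair. I expect the argument to lean on the query structure actually present in the applications, namely predicate (and more generally linear sum-type) queries, for which the change depends only on the toggled row, so that this step collapses to the trivial choice above and the consistency condition alone closes the proof.
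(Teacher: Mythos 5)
Your proposal follows essentially the same route as the paper's proof: the direction $\Delta Q \le \gamma_w(Q)$ via the subset of queries that change on a maximizing neighboring pair, and the reverse direction via a row $x$ in the common coverage of the maximizing $Q'$ and the pair $(D_x, D_{\neg x})$. The ``simultaneous witness'' issue you isolate as the hard part is real, but the paper does not resolve it either: its proof simply takes ``a data set $D_x$ containing $x$'' and asserts $\sum_{q\in Q'}\Delta q = \sum_{q\in Q'}\lvert q(D_x)-q(D_{\neg x})\rvert$, which tacitly assumes that every $q\in Q'$ registers a change on that one pair --- exactly the step you flag. As you note, this is immediate for predicate (and more generally counting/linear) queries, where $q(D)-q(D_{\neg x})=q(x)$ independently of $D$, and these are the only queries to which the theorem is actually applied; so your account is, if anything, more careful about the scope of the argument than the paper's own.
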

\begin{proof}
\iffull
Let $Q'$ be the subset of $Q$ such that $\gamma_w(Q) = \sum_{q \in Q'} \Delta q$. Let $D$ and $D'$ be the neighboring data sets such that $\Delta Q = \sum_{q \in Q} |q(D) - q(D')|$. Let us assume that the row they differ in is $x$. Let $Q'' \subseteq Q$ be such that for all $q \in Q''$, $q(D) \neq q(D')$. Then, through the consistency condition
\[
\Delta Q = \sum_{q \in Q} |q(D) - q(D')| = \sum_{q \in Q''} |q(D) - q(D')| = \sum_{q \in Q''} \Delta q.
\]
Since all queries in $Q''$ cover $x$, $C_{Q''}(\mathbb{D}) \neq \emptyset$. Therefore, according to the definition of maximum overlap 
\[
\Delta Q = \sum_{q \in Q''} \Delta q \leq  \sum_{q \in Q'} \Delta q = \gamma_w(Q).
\]
Next take $Q'$, and let $x \in C_{Q'}(\mathbb{D})$. Let $D_x$ be a data set containing $x$, and $D_{\neg x}$ be the neighboring data set of $D_x$ with one instance of $x$ removed. Once again, according to the consistency condition and the definition of maximum overlap, we have
\begin{align*}
   \gamma_w(Q) &= \sum_{q \in Q'} \Delta q  \\
               &= \sum_{q \in Q'} \lvert q(D_x) - q(D_{\neg x}) \rvert \\
               &= \sum_{q \in Q} \lvert q(D_x) - q(D_{\neg x}) \rvert \\
               &\leq \max_{D \sim D'} \sum_{q \in Q} \lvert q(D) - q(D') \rvert \\
               &= \Delta Q.
\end{align*}
Hence $\gamma_w(Q) = \Delta Q$.
\else
See Appendix~\ref{app:proofs}.
\fi
\end{proof}
In particular, the predicate queries considered in this paper, and the statistical range queries~\citep{xiao-np, inan-sensitivity} (a proper subset of the former) are examples of queries that obey the consistency condition. Note that Theorem~\ref{the:mo-equals-sen} only shows the equivalence of the two notions under basic composition: weights add up linearly. The notion of maximum weight overlap is more general than $l_1$-sensitivity and encompasses other forms of composition, e.g., composition of Gaussian mechanisms under $f$-DP. The Gaussian mechanism is not $\epsilon$-DP under $l_1$-sensitivity (it is $(\epsilon, \delta)$-DP under $l_2$-sensitivity). Hence, our notion and accompanying results have broader applicability.
We define the maximum overlap problem in terms of predicate queries.

\medskip
\textsc{Maximum Overlap}: Given a set $\mathcal{A}$ of $m$ attributes, a set $\Phi$ containing a predicate $\phi_A$ for each attribute $A\in \mathcal{A}$, a set $Q$ of $t$ predicate queries $q_1, q_2, \dots, q_t$, and a positive integer $k < t$, is there a subset of $k$ or more queries that overlap?
\medskip


The following then follows immediately from Theorem~\ref{the:mo-equals-sen} and the NP-hardness of $l_1$-sensitivity of the statistical range queries~\citep{xiao-np}.
\begin{theorem}[\citep{xiao-np}]
\label{theo:overlap-np}
\textsc{Maximum Overlap} is NP-hard.
\end{theorem}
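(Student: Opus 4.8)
The plan is to derive the result directly from Theorem~\ref{the:mo-equals-sen} together with the known NP-hardness of computing the $l_1$-sensitivity of statistical range queries established in~\cite{xiao-np}. First I would recall that statistical range queries --- conjunctions of range predicates on individual attributes --- form a proper subclass of the predicate queries of Section~\ref{sub:queries}: a range on an attribute $A$ is realized by the predicate $\phi$ with $C_\phi(A)$ equal to that range, and an unconstrained attribute is realized by a tautology. Moreover these queries obey the consistency condition: adding or removing a single row changes a range query's count by either $0$ or $1$, so $\Delta q = 1$ for every non-contradictory range query $q$. Hence, setting $w(q) = \Delta q = 1$ for all $q \in Q$, Theorem~\ref{the:mo-equals-sen} gives $\Delta Q = \gamma_w(Q)$, and Proposition~\ref{prop:max-w-overlap} gives $\gamma_w(Q) = \gamma(Q)$. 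Thus for any set $Q$ of statistical range queries, $\Delta Q = \gamma(Q)$.

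Next I would set up the reduction. The decision version of the sensitivity problem --- given a set $Q$ of statistical range queries and an integer $k$, is $\Delta Q \ge k$? --- is NP-hard by~\cite{xiao-np}. By the identity $\Delta Q = \gamma(Q)$, this is exactly the question of whether there exists a subset of $k$ or more queries of $Q$ with non-empty joint coverage, i.e., a subset of $k$ or more queries that overlap (using the definition of $\gamma$, cf.\ Proposition~\ref{prop:pairwise}). Since the instance produced by the reduction consists of statistical range queries, which are in particular predicate queries of the form in Eq.~\ref{eq:query}, it is a legitimate instance of \textsc{Maximum Overlap}, and the yes/no answer is preserved verbatim. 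One should also check the boundary condition $k < t$ demanded in the problem statement: if a hard instance of~\cite{xiao-np} has $k \ge t$ the answer is trivial (no subset of more than $t$ queries exists, and $k = t$ asks only whether all queries overlap), so such instances may be discarded or normalized, leaving $k < t$ without loss of generality. Therefore \textsc{Maximum Overlap} restricted to statistical range queries is NP-hard, and a fortiori \textsc{Maximum Overlap} is NP-hard.

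The main obstacle I anticipate is bookkeeping rather than mathematical depth: ensuring that the construction in~\cite{xiao-np} indeed outputs queries that fit the finite-domain, predicate-query formalism used here (finite attributes, boolean predicates, the conjunctive form of Eq.~\ref{eq:query}), that the notion of ``sensitivity'' used there coincides with $\Delta Q$ as defined in this paper, and that the equivalence $\Delta Q = \gamma(Q)$ applies on the nose to the reduction's output. Once this dictionary between the two formalisms is pinned down, the reduction is essentially the identity map on instances and correctness follows immediately from Theorem~\ref{the:mo-equals-sen} and Proposition~\ref{prop:max-w-overlap}. (If one additionally wants NP-completeness, as the full version claims, it suffices to note membership in NP: a subset of $k$ overlapping queries is a polynomial-size certificate, and by Proposition~\ref{prop:disjoint} checking that they pairwise --- indeed jointly --- overlap reduces to, for each attribute, intersecting the $k$ coverages, which is polynomial in $m$ and $t$.)
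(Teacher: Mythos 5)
Your proposal is correct and follows essentially the same route as the paper, which derives the result immediately from Theorem~\ref{the:mo-equals-sen} combined with the NP-hardness of $l_1$-sensitivity for statistical range queries from~\cite{xiao-np}; your additional bookkeeping (the consistency condition for range queries, Proposition~\ref{prop:max-w-overlap}, and the $k<t$ normalization) just makes explicit what the paper leaves implicit. The paper does also give a separate, alternative NP-completeness proof by reduction from \textsc{Max Cut}, but that is presented as an independent argument, not the proof of this theorem.
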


It follows that the \textsc{Maximum Weight Overlap} problem is NP-hard as well; for otherwise it could be used to efficiently solve the \textsc{Maximum Overlap} problem with the same weight assigned to all queries (Proposition~\ref{prop:max-w-overlap}). 
\iffull
\else
In the full version of this paper, we also show that this problem is NP-complete with an alternative proof of NP-hardness.
\fi

\iffull


\descr{NP-completeness and alternative proof of NP-hardness.}
The above proof shows only that the problem is NP-hard. We can in fact show that the decision problem \textsc{Maximum Overlap} is NP-complete. The proof from~\cite{xiao-np} uses the \textsc{Max 2Sat} problem.
Our alternative proof establishes a connection with a graph problem, and also proves membership in NP. Our reduction is from the NP-complete \citep{GareyJ79} \textsc{Max Cut} problem.

\medskip
\textsc{Max Cut}: Given a graph $G=(V,E)$ and an integer $k$, is there a vertex subset $S\subseteq V$ such that the number of edges with one endpoint in $S$ and the other endpoint in $V\setminus S$ is at least $k$?
\medskip

\begin{theorem}
	\textsc{Maximum Overlap} is NP-complete.
\end{theorem}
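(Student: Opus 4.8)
The plan is to prove NP-completeness of \textsc{Maximum Overlap} in two parts: membership in NP, then hardness via a reduction from \textsc{Max Cut}. Membership is the routine part. Given a set $Q$ of predicate queries and an integer $k$, a certificate is simply a subset $Q' \subseteq Q$ with $|Q'| \ge k$. To verify it, I would check that the queries in $Q'$ all overlap, which by the generalization of Proposition~\ref{prop:disjoint} reduces to checking, for each attribute $A_i$, that $\bigcap_{q \in Q'} C_{\phi_{i,q}}(A_i) \ne \emptyset$. Each such intersection is computable in time polynomial in $|A_i|$ and $|Q'|$, assuming predicates are evaluated efficiently, so the whole check is polynomial in the input size. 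Hence \textsc{Maximum Overlap} $\in$ NP.

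\textbf{For hardness}, I would reduce from \textsc{Max Cut}. Given a graph $G = (V, E)$ with $V = \{v_1, \dots, v_n\}$ and integer $k$, the idea is to create one attribute per vertex and one query per edge, so that a query (edge) ``overlaps'' with a chosen family of queries precisely when the corresponding edge is cut. Concretely, for each vertex $v_j$ introduce a binary attribute $A_j = \{0, 1\}$; a choice of side in a cut corresponds to a choice of attribute value. For each edge $e = \{v_a, v_b\} \in E$ I would build \emph{two} queries, $q_e^0$ and $q_e^1$: the query $q_e^s$ has the predicate $(A_a = s) \wedge (A_b = 1 - s)$ on the two endpoint attributes and the tautology on every other attribute. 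So $q_e^0$ ``witnesses'' that $v_a, v_b$ landed on opposite sides with $v_a$ on side $0$, and $q_e^1$ the mirror image. A set $S \subseteq V$ (a cut) corresponds to the row $x^S \in \mathbb{D}$ with $x^S_j = 1$ iff $v_j \in S$; then $q_e^s(x^S) = 1$ exactly when $e$ is cut by $S$ and $v_a$ is on side $s$. The remaining step is to show the correspondence between cut size and overlap size, and for that I would also throw in a large ``padding'' set of $|E|$ trivial queries (all tautologies, or queries that every row satisfies) so that any row $x^S$ is satisfied by all padding queries plus exactly the $|E_S|$ edge-queries corresponding to cut edges, where $E_S$ is the cut. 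Setting the target to $k' \coloneqq |E| + k$, a cut of size $\ge k$ yields an overlapping set of size $\ge |E| + k$ (the padding plus the cut witnesses), and conversely an overlapping set $Q'$ has, by Proposition~\ref{prop:pairwise}, a common row $x \in C_{Q'}(\mathbb{D})$; reading off the sides from $x$ gives a cut, and the non-padding members of $Q'$ are all witnesses for edges cut by that side-assignment, so $|Q'| - |E| \le |E_x|$, giving a cut of size $\ge k$.

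\textbf{The main obstacle} I anticipate is getting the bookkeeping exactly right so that overlap size is a faithful proxy for cut size — in particular ensuring that at most one of $q_e^0, q_e^1$ can be in any overlapping set (they are disjoint, since on attribute $A_a$ they demand values $0$ and $1$ respectively, so this is automatic and in fact essential), and that the padding queries genuinely overlap with \emph{every} edge-witness simultaneously (they do, since tautologies overlap with everything by Proposition~\ref{prop:trivial}, so the common row $x^S$ works). One subtlety to handle carefully is that an overlapping set $Q'$ need not be ``maximal'' in the sense of containing all padding queries; but since adding all padding queries to any overlapping set keeps it overlapping (they impose no constraint), we may assume without loss of generality that $Q'$ contains all $|E|$ padding queries, which is what makes the counting $|Q'| = |E| + |E_x|$ clean. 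With these points pinned down, the reduction is clearly polynomial-time computable (we create $n$ attributes and $2|E| + |E| = 3|E|$ queries, each of size $O(n)$), completing the proof.
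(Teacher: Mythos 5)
Your proposal is correct and follows essentially the same route as the paper: membership in NP via the per-attribute intersection check, and hardness by reducing from \textsc{Max Cut} with one binary attribute per vertex and two complementary queries per edge encoding the two orientations of a cut edge. The only difference is your $|E|$ tautology padding queries and shifted threshold $k' = |E| + k$, which are harmless but unnecessary --- since each row of the constructed domain satisfies exactly the witnesses of the edges cut by the corresponding vertex bipartition, the paper simply takes $k' = k$.
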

\begin{proof}
	First, we argue that \textsc{Maximum Overlap} is in NP.
	For an instance $(\mathcal{A}, \Phi, Q, k)$ of the \textsc{Maximum Overlap} problem with $|Q|=t$, a potential certificate $Q'$ is a subset of at least $k$ and at most $t$ queries from $Q$. The size of $Q'$ is polynomial in the input size since it is a subset of the input.
	A potential certificate $Q'$ is a valid certificate if $C_{Q'}(\mathbb{D})\ne \emptyset$. Validity of a potential certificate can be checked in polynomial time by computing the conjunction of the queries in $Q'$, which can be done by considering the conjunction of predicates on each attribute individually.
	
	NP-hardness is established by reduction from \textsc{Max Cut}. Given an instance $(G=(V,E),k)$ of \textsc{Max Cut}, we compute an instance $(\mathcal{A},\Phi,Q,k')$ of \textsc{Maximum Overlap} as follows. For each vertex $v\in V$, we have an attribute $x_v=\{0,1\}$ and a predicate $\phi_v$ acting on the attribute $x_v$. We think of $\phi_v(x_v)=1$ if $v\in S$ in the \textsc{Max Cut} instance and $\phi_v(x_v)=0$ otherwise.
	For each edge $uv\in E$, we add the queries $\overline{\phi_u(x_u)} \wedge \phi_v(x_v)$ and $\phi_u(x_u) \wedge \overline{\phi_v(x_v)}$.
	Set $k' = k$.
	We will show that $G$ has a vertex subset $S$ such that $|\{uv\in E: u\in S, v\in V\setminus S\}|\ge k$ if and only if there is a subset of at least $k'$ queries that overlap.
	
	First, consider any $S\subseteq V$ such that $|\{uv\in E: u\in S, v\in V\setminus S\}|\ge k$. We use predicates where $\phi_v(x_v)=1$ if and only if $v\in S$.
	Consider each edge $uv\in E$. If $|S\cap \{u,v\}|=1$, then we have a contribution of 1 to $|\{uv\in E: u\in S, v\in V\setminus S\}|$ and exactly one query among $\overline{\phi_u(x_u)} \wedge \phi_v(x_v)$ and $\phi_u(x_u) \wedge \overline{\phi_v(x_v)}$ is satisfied. Otherwise, if $|S\cap \{u,v\}|\ne 1$, then we have a contribution of 0 to $|\{uv\in E: u\in S, v\in V\setminus S\}|$ and neither query among $\overline{\phi_u(x_u)} \wedge \phi_v(x_v)$ and $\phi_u(x_u) \wedge \overline{\phi_v(x_v)}$ is satisfied.
	Therefore, at least $k'=k$ queries overlap if and only if $|S|\ge k$.
	
	Second, assume that at least $k'$ queries overlap. By the same reasoning, we conclude that there is a subset $S\subseteq V$ with $|\{uv\in E: u\in S, v\in V\setminus S\}|\ge k$.
	
	Since the construction can be done in polynomial time and produces an equivalent instance of \textsc{Maximum Overlap}, we conclude that \textsc{Maximum Overlap} is NP-hard.
\end{proof}

\fi

\section{Connection to Graphs}
\label{sec:graphtheory}

Inan et al.~\cite{inan-sensitivity} relate the problem of computing $l_1$-sensitivity of a set of queries $Q$ to a graph problem, by modelling $Q$ as a graph. The authors then upper bound computing $l_1$-sensitivity to finding the maximum clique of the graph. The advantage is that we can use well-known graph algorithms to solve the problem. Likewise, in this section, we shall represent the maximum overlap problems in terms of 
\iffull
graphs. Even though maximum overlap (and computing $l_1$-sensitivity) is an NP-hard problem, there are efficient algorithms (in practice) to solve the related graph problems. 
\else
graphs, looking for efficient algorithms in practice.
\fi
There are two key differences between~\citep{inan-sensitivity} and our treatment in this section:
\begin{itemize}
    \item We show that the problem of finding maximum weighted overlap exactly translates to a hypergraph problem, but with a computationally expensive solution. Details appear in Appendix~\ref{app:hypergraph}.
    \item Due to the expensive nature of solving the hypergraph problem, we instead target pairwise overlaps of queries using simple graphs, which allows us to upper bound maximum overlap with the clique number of the graph, in the manner of ~\citep{inan-sensitivity}. However, unlike~\citep{inan-sensitivity}, we show that maximum overlap is further upper bounded by the chromatic number of the graph (to be defined shortly). The advantage here is that whereas here and in~\citep{inan-sensitivity}, one is forced to use exact algorithms to compute the clique number, lest the maximum overlap be underestimated (leading to a potential privacy risk), the chromatic number can be computed using an approximate algorithm that never underestimates the maximum overlap. This allows us to use these approximate algorithms for query set sizes and domain sizes significantly beyond what can feasibly be handled by the exact clique algorithms.
\end{itemize}

A graph is a pair $G \coloneqq (V, E)$, where $V$ is a set of vertices and $E$ is a set of edges such that $E \subseteq V \times V$. We consider the graph formed by pairwise overlaps of queries, rather than all possible subsets of queries. This pairwise query graph takes time $\mathcal{O}(t^2)$ to construct.

\descr{Query graph.} Given a set of queries $Q \coloneqq \{q_1, q_2, \ldots, q_t\}$, their query graph, $\mathcal{G}(Q) \coloneqq (V, E)$, is defined as follows: each query is a vertex (i.e., $V = Q$), and two vertices have an edge connecting them if the queries they represent overlap: 
\[
E \coloneqq \{ (q_i, q_j) : q_i \text{ and } q_j \text{ overlap}\}.
\]
See Appendix~\ref{app:hypergraph} for an example set of queries and its query graph.

\descr{Weighted query graph.} A weighted query graph is a query graph where each vertex is assigned a weight $w : Q \rightarrow \mathbb{R}^+$, and weights compose via a composition function $\mathrm{comp}: \mathcal{P}(Q) \rightarrow \mathbb{R}^+$ function (see Definition~\ref{def:comp-func}). We denote this by $\mathcal{G}_w(Q)$.

Whilst the query graph is far faster to construct than the overlap hypergraph (Appendix~\ref{app:hypergraph}), there is no analogue of Proposition \ref{prop:rankvsgamma} (in the appendix), which exactly links maximum overlap to cardinality of the largest hyperedge of the hypergraph. However, we present two graph metrics that bound the maximum overlap from above --- the clique number and chromatic number.

\descr{Clique number.} A \emph{complete} subgraph $G'$ is a subgraph of $G$ where all vertices are pairwise adjacent~\citep{diestel-gt}. The \emph{clique number} of a graph $G$ is defined as the size of the largest complete subgraph of $G$.

\descr{Weighted clique number.} The \emph{weighted} clique number of the weighted query graph $\mathcal{G}_w(Q)$ is defined as the maximum value of $\mathrm{comp}(Q')$, where $Q' \subseteq Q$ is a complete subgraph.

To avoid notational clutter, we will denote the clique number $\omega$ of the query graph $\mathcal{G}(Q)$ and the weighted query graph $\mathcal{G}_w(Q)$ by $\omega(Q)$ and $\omega_w(Q)$, respectively. 

\begin{proposition}
\label{prop:cliquevscover}
Let $Q$ be a set of queries. Then $\gamma_w(Q) \leq \omega_w(Q)$.
\end{proposition}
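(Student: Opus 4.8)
The plan is to show that every subset of queries with non-empty joint coverage induces a complete subgraph of the weighted query graph, and then to compare the two maxima. First I would unwind the definition: by definition (or equivalently by Proposition~\ref{prop:max-overlap}), $\gamma_w(Q) = \max\{\mathrm{comp}(Q') : Q' \in O_1\}$, where $O_1 = \{Q' \in \mathcal{P}(Q) : C_{Q'}(\mathbb{D}) \neq \emptyset\}$. Let $Q^\star \in O_1$ be a subset attaining this maximum, so $\gamma_w(Q) = \mathrm{comp}(Q^\star)$ and $C_{Q^\star}(\mathbb{D}) \neq \emptyset$.

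The key step is to invoke Proposition~\ref{prop:pairwise}(1): since $C_{Q^\star}(\mathbb{D}) \neq \emptyset$, all queries in $Q^\star$ pairwise overlap. By the definition of the query graph $\mathcal{G}(Q)$, this means that for every pair $q_i, q_j \in Q^\star$ there is an edge $(q_i, q_j) \in E$; equivalently, the vertices corresponding to $Q^\star$ are pairwise adjacent, so $Q^\star$ is a complete subgraph of $\mathcal{G}(Q)$ (and hence of $\mathcal{G}_w(Q)$). By the definition of the weighted clique number as the maximum of $\mathrm{comp}(Q')$ over all complete subgraphs $Q'$, we therefore get $\mathrm{comp}(Q^\star) \le \omega_w(Q)$. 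Combining, $\gamma_w(Q) = \mathrm{comp}(Q^\star) \le \omega_w(Q)$, as required.

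I do not expect any real obstacle here; the statement is essentially a corollary of Proposition~\ref{prop:pairwise}(1). The only thing worth flagging is why the inequality cannot in general be tightened to equality: the converse of Proposition~\ref{prop:pairwise}(1) fails (Proposition~\ref{prop:pairwise}(2)), i.e.\ a set of pairwise overlapping queries need not have non-empty joint coverage, so a maximum clique of $\mathcal{G}_w(Q)$ may correspond to a subset $Q'$ with $C_{Q'}(\mathbb{D}) = \emptyset$ that is not counted in $\gamma_w(Q)$. One might also remark that monotonicity of $\mathrm{comp}$ (Definition~\ref{def:comp-func}) is implicitly what guarantees the maximum on the left is well-behaved, but it is not actually needed for this particular inequality — the argument above is purely set-theoretic on top of the clique characterisation.
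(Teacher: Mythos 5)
Your proof is correct and follows essentially the same route as the paper's: both invoke Proposition~\ref{prop:pairwise}(1) to show that any subset with non-empty joint coverage is a complete subgraph of the query graph, so the maximum defining $\omega_w(Q)$ is taken over a superset of the feasible sets defining $\gamma_w(Q)$. Your side remark that monotonicity of $\mathrm{comp}$ is not actually needed here is a fair (and correct) refinement of the paper's phrasing, which cites monotonicity somewhat superfluously.
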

\begin{proof}
\iffull
Recall the definition of $\gamma_w(Q)$:
\[
\gamma_w(Q) \coloneqq \max_{Q' \subseteq Q} \{\mathrm{comp}(Q') : C_{Q'}(\mathbb{D}) \neq \emptyset\}.
\]
By part (1) of Proposition~\ref{prop:pairwise}, all queries in $Q'$ pairwise overlap, and hence form a complete subgraph of the query graph. By part (2) of Proposition~\ref{prop:pairwise}, it is also possible for a clique $Q''$ on the query graph to contain queries that all pairwise overlap, but have $C_{Q''}(\mathbb{D}) = \emptyset$. 
By the monotonicity property of $\mathrm{comp}$, we must have that $\omega_w(Q)$ is bounded from below by $\gamma_w(Q)$.
\else
See Appendix~\ref{app:proofs}.
\fi
\end{proof}

We also have the same bound in the unweighted case:

\begin{corollary}
Let $Q$ be a set of queries. Then $\gamma(Q) \leq \omega(Q)$.
\end{corollary}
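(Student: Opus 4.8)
The plan is to derive this directly as a special case of Proposition~\ref{prop:cliquevscover}. The corollary concerns the unweighted maximum overlap $\gamma(Q)$ and the unweighted clique number $\omega(Q)$, and the natural move is to recognize that the unweighted quantities are precisely the weighted quantities under a trivial choice of weights and composition function. Specifically, set $w(q) = 1$ for every $q \in Q$ and take $\mathrm{comp}(Q') = |Q'|$ (a sum of unit weights, which is a valid composition function: it satisfies $\mathrm{comp}(\{q\}) = 1 \ge w(q)$ and monotonicity, since adding queries to a subset only increases its cardinality). Under this choice, $\gamma_w(Q) = \gamma(Q)$ directly from the definitions in Section~\ref{sub:max-overlap}, and $\omega_w(Q) = \omega(Q)$ since the weighted clique number maximizes $|Q'|$ over complete subgraphs $Q'$, which is exactly the ordinary clique number. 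Then Proposition~\ref{prop:cliquevscover} gives $\gamma(Q) = \gamma_w(Q) \le \omega_w(Q) = \omega(Q)$.

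First I would state the specialization of weights and the composition function explicitly, and note in one line that it is indeed a composition function in the sense of Definition~\ref{def:comp-func}. Then I would observe the two identifications $\gamma_w(Q) = \gamma(Q)$ and $\omega_w(Q) = \omega(Q)$, each of which is immediate from comparing definitions. Finally I would invoke Proposition~\ref{prop:cliquevscover} to close the chain of (in)equalities. Alternatively, one can give a fully self-contained argument mirroring the proof of the proposition: take $Q'$ achieving $\gamma(Q)$, so $C_{Q'}(\mathbb{D}) \ne \emptyset$; by part (1) of Proposition~\ref{prop:pairwise} all queries in $Q'$ pairwise overlap, hence $Q'$ induces a complete subgraph of $\mathcal{G}(Q)$ of size $|Q'| = \gamma(Q)$, so $\omega(Q) \ge |Q'| = \gamma(Q)$.

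There is essentially no obstacle here — the only thing to be careful about is making sure the "reduction to the weighted case" is legitimate, i.e., that constant unit weights together with cardinality form an admissible composition function so that Proposition~\ref{prop:cliquevscover} actually applies; this is routine but worth stating. Given how short the argument is, in the paper it would likely just read "See Appendix~\ref{app:proofs}." or be dispatched in a single sentence, but the above is the complete reasoning.
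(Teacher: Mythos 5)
Your argument is correct and matches the paper's intent exactly: the corollary is stated as the unweighted specialization of Proposition~\ref{prop:cliquevscover}, obtained by taking unit weights with $\mathrm{comp}(Q') = |Q'|$, and the paper gives no further proof. Your verification that this is an admissible composition function and your alternative direct argument via Proposition~\ref{prop:pairwise} are both sound.
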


Next, we introduce the chromatic number of $\mathcal{G}(Q)$.

\descr{Independent set.} For a graph $\mathcal{G} \coloneqq (V, E)$, an \emph{independent set} is a subset of vertices $S \subseteq V$ such that no two vertices $v_i, v_j \in S$ share an edge \citep{voloshin2009introduction}. Let $S$ be an independent set of queries in the query graph $\mathcal{G}_w(Q)$. We define the \emph{weight} of $S$ as $w(S) \coloneqq \max\{w(q) : q \in S\}$. This is consistent with the fact that these queries do not overlap, and hence can be composed in parallel if given as input to a differentially private mechanism that allows parallel composition.


\descr{Chromatic number and minimum weight coloring.} A \emph{proper coloring} of a graph $\mathcal{G} \coloneqq (V, E)$ is a partition $\mathcal{S} \coloneqq (S_1, S_2, \ldots, S_k)$ of $V$ into $k$ independent sets. The \emph{chromatic number} $\chi(G)$ of the graph is defined as the minimum $k$ over all proper colorings \citep{voloshin2009introduction}. Another (more common) definition of the chromatic number is the minimum number of colors needed to color the vertices, such that no two adjacent vertices have the same color.
As defined above,
each independent set has weight $w(S_i)= \max \{w(q) : q \in S_i\}$. The weight of a coloring $\mathcal{S}$ is then given by the sequential composition of the weights of the independent sets:
\[
w(\mathcal{S}) = \mathrm{comp}(\mathcal{S}) = \sum_{i = 1}^k w(S_i) = \sum_{i = 1}^k \max \{w(q) : q \in S_i\}.
\]
A minimum weight coloring $\chi_w(G)$ is then the coloring $\mathcal{S}$ of $\mathcal{G}_w(Q)$ that minimizes $w(\mathcal{S})$. This is called the chromatic number, which we shall denote in the unweighted case by $\chi(Q)$, and in the weighted case by $\chi_w(Q)$.

It is well known that $\omega(Q) \leq \chi(Q)$, and in fact this gap can be arbitrarily large \citep{mycielski1955coloriage}. Similarly, $\omega_w(G) \leq \chi_w(G)$. From Proposition \ref{prop:cliquevscover}, it follows that:
\begin{equation}
\label{eq:approx}
\gamma_w(Q) \leq \omega_w(Q) \leq \chi_w(Q) \leq \mathrm{comp}(Q).    
\end{equation}
And in the unweighted case, we have 
\iffull
\[
\gamma(Q) \leq \omega(Q) \leq \chi(Q) \leq |Q|.
\]
\else
$
\gamma(Q) \leq \omega(Q) \leq \chi(Q) \leq |Q|.
$
\fi
Thus, computing the (weighted) clique number or chromatic number of the query graph will give an approximation for the (weighted) maximum overlap. We give algorithms for computing these metrics in the next section.

\section{Computing Maximum Overlap}
\label{sec:algorithms}
In this section we present a number of algorithms for computing the maximum overlap, clique number and chromatic number of a set of queries. By Theorem \ref{thm:arbitrarycomposition}, maximum overlap is exactly the privacy loss of the set of queries, and according to Eq.~\ref{eq:approx}, the clique and chromatic number are an approximation (overestimate) of the privacy loss.

\descr{`Safe' approximations for maximum overlap.} The problems of computing $\omega(Q)$ and $\chi(Q)$ are known to be NP-hard \citep{lewis2015guide, voloshin2009introduction}. In Theorem~\ref{theo:overlap-np}, we show that computing $\gamma(Q)$ is also NP-hard. Thus, as the query set grows, computing $\omega(Q), \chi(Q)$ or $\gamma(Q)$ will become infeasible. We therefore consider approximate algorithms.

Since the clique number $\omega(Q)$ is framed as a maximization problem (namely, the problem of finding the largest clique), any approximate $\tilde \omega(Q)$ will be upper bounded by $\omega(Q)$. However, according to Eq.~\ref{eq:approx}, this means it may be possible that $\tilde \omega(Q) \leq \gamma(Q)$, which may lead to a privacy leakage! As such, we say that it is `unsafe' to use an approximate clique number, i.e., $\tilde \omega(Q)$. 

By contrast, any approximate chromatic number $\tilde \chi(Q)$ will satisfy $\chi(Q) \leq \tilde \chi(Q) \leq |Q|$, since it is framed as a minimization problem. Therefore, $\gamma(Q) \leq \tilde \chi(Q)$. This makes it `safe' to compute an approximate chromatic number, as there is never a risk of privacy leakage. Therefore, we are bound to consider exact algorithms for maximum clique, whereas for chromatic number we can use more efficient approximate algorithms.

\iffull
\subsection{Maximum Clique}
\else
\descr{Maximum clique.}
\fi
Computing the maximum clique of an arbitrary graph is an extensively studied problem. A detailed, recent review is given in \citep{wu2015review}, which discusses both approximate and exact computation of $\omega_w(Q)$. Due to the necessity of ensuring safe approximation, we consider only algorithms for exactly computing $\omega_w(Q)$. These algorithms are based on the branch-and-bound framework, which consists of two main aspects --- a search strategy to recursively partition the search space into smaller sub-problems (branching), and a pruning strategy that allows sub-problems with a provably sub-optimal solution to be pruned from the search space (bounding) \citep{clausen1999branch, morrison2016branch}. 
\iffull
Many pruning strategies are very effective at reducing the size of the search space that needs to be explored. As a result, many branch-and-bound algorithms perform well in practice, despite a lack of theoretical results about their performance.

\begin{algorithm}[!ht]
\SetAlgoLined
\SetAlCapSkip{1em}
\DontPrintSemicolon{}
\let\oldnl\nl
\newcommand{\nonl}{\renewcommand{\nl}{\let\nl\oldnl}}
Initialize, $G \leftarrow \mathcal{G}(Q)$, $X \leftarrow \emptyset$, $B \leftarrow \emptyset$, $\mathrm{ub} \leftarrow \mathrm{comp}(Q)$.

\texttt{MaxWeightClique} ($G$, $X$, $B$, $\mathrm{ub}$):\;
\If{$G = \emptyset$}{
    \Return $X$.\;
}
$\tilde \chi_w(G) \leftarrow \text{an approximate coloring of } G$.\;
$\mathrm{ub}' \leftarrow \min(\mathrm{ub}, \mathrm{comp}(X) + \tilde \chi_w(G))$.\;
\If{$\mathrm{ub}' \leq \mathrm{comp}(B)$}{
    \Return $B$.\;
}
$q \leftarrow \text{a max degree vertex of } G$.\;
$G' \leftarrow \text{graph induced by } N(q)$.\;
$X' \leftarrow X \cup \{q\}$.\;
$B' \leftarrow \texttt{MaxWeightClique}(G',X', B, \mathrm{ub}')$.\;
\If{$\mathrm{ub}' = \mathrm{comp}(B')$}{
    \Return $B$.\;
}
$G'' \leftarrow \text{graph induced by } V(G) - \{q\}$.\;
\Return $\texttt{MaxWeightClique}(G'', X, B, \mathrm{ub}')$.\;
\caption{Maximum Clique Algorithm with coloring-based pruning \citep{coudert1997exact}}
\label{alg:maxcliquekcolor}
\end{algorithm}


For our experiments, we implement the maximum clique algorithm presented in \citep{coudert1997exact}. A description of the algorithm is given in Algorithm \ref{alg:maxcliquekcolor}. The algorithm takes as input the query graph $\mathcal{G}(Q)$, a candidate maximum clique $X$, the current best known clique $B$ and an upper bound on the maximum weight of the clique $\mathrm{ub}$. $X$ and $B$ are initialized as empty sets, and $\mathrm{ub}$ is initialized as $\mathrm{comp}(Q)$. In the algorithm $N(q)$ denotes the neighbors of a query $q$ in the query graph. The algorithm recursively builds a maximum clique by selecting maximum degree nodes from the query graph and pruning nodes that are not adjacent to the currently selected nodes in $X$. This strategy quickly finds a candidate maximum clique, which is set to $B$ when no nodes remain in $\mathcal{G}(Q)$. This candidate maximum clique forms a lower bound on the weight of the true maximum clique.

The algorithm then backtracks, recursively exploring the search space to find a larger weighted clique than $B$. To prune the search space, an upper bound on the maximum weight of $X$ is computed by adding the current weight $\mathrm{comp}(X)$ and an approximate coloring of the remaining query graph $\tilde \chi_w(G)$. If this upper bound is smaller than $\mathrm{comp}(B)$, there is no point in searching further, allowing the algorithm to prune and backtrack. This algorithm could be further improved. For example, the authors of \citep{konc2007improved} note that there is a trade-off between the run-time cost of computing $\tilde \chi_w(G)$ and the level of pruning performed at different recursion depths. Several other optimized algorithms for maximum clique are discussed in \citep{wu2015review}.
\else
For our experiments, we implement the maximum clique algorithm presented in \citep{coudert1997exact}. For completeness, the description of the algorithm is given in Appendix~\ref{app:algos}. 
\fi

\iffull
\subsection{Maximum Overlap}
\else
\descr{Maximum overlap.}
\fi
Given the connection between maximum overlap and maximum clique, we can adapt Algorithm \ref{alg:maxcliquekcolor} to compute the (exact) maximum overlap directly. This can be done simply by adding the constraint that the maximum clique returned must have non-empty intersection prior to line 3 in Algorithm \ref{alg:maxcliquekcolor}, i.e., if $C_{X}(\mathbb{D}) = \emptyset$ then return $B$. 
This change ensures we compute the maximum weighted overlap rather than the maximum weighted clique. 
\iffull
It may be possible to develop further optimized algorithms for maximum overlap based on newer maximum clique algorithms, but we defer this task to future work.
\fi

\iffull
\subsection{Approximate Chromatic Number}
\else
\descr{Approximate chromatic number.}
\fi
There is significant literature dedicated to the problem of computing an exact or approximate chromatic number (see, e.g., \citep{lewis2015guide}). Due to the NP-hardness of exactly computing the chromatic number and the `safe' approximation issue discussed earlier, we focus on algorithms for computing an approximate chromatic number $\tilde \chi_w(Q)$. One such algorithm is the DSatur algorithm \citep{brelaz1979new}. For our experiments, we use the implementation of DSatur available in the Python \texttt{networkx} package \citep{SciPyProceedings_11}. The DSatur algorithm is presented as 
\iffull
Algorithm \ref{alg:approxcoloring}.

\begin{algorithm}[!ht]
\SetAlgoLined
\SetAlCapSkip{1em}
\DontPrintSemicolon{}
\let\oldnl\nl
\newcommand{\nonl}{\renewcommand{\nl}{\let\nl\oldnl}}
\texttt{DSatur}($X \leftarrow Q$, $\mathcal{S} \leftarrow \emptyset$).\;
\While{$X \neq \emptyset$}{
    choose $q \in X$ with maximal saturation degree.\;
    \For{$j \leftarrow 1$ to $|\mathcal{S}|$}{
        \eIf{$S_j \cup \{q\}$ is an independent set}{
            $S_j \leftarrow S_j \cup \{q\}$.\;
            \textbf{break}\;
        }
        {
            $j \leftarrow j + 1$
        }
    }
    \If{$j > |\mathcal{S}|$}{
        $S_j \leftarrow \{q\}$.\;
        $\mathcal{S} \leftarrow \mathcal{S} \cup S_j$.\;
    }
    $X \leftarrow X - \{q\}$.\;
}
\Return $\mathcal{S}$
\caption{DSatur Algorithm for the approximate chromatic number of a graph \citep{lewis2015guide}}
\label{alg:approxcoloring}
\end{algorithm}


This algorithm takes as input the set of queries $Q$ (and their query graph $\mathcal{G}(Q)$), as well as an empty partition $\mathcal{S}$. The algorithm returns a valid coloring $\mathcal{S}$. Lines 4-14 of this algorithm comprise a greedy algorithm for finding a coloring of a graph. The algorithm simply chooses vertices one by one, and checks to see if the vertex can be added to any existing independent sets in $\mathcal{S}$ (lines 4-9). If it cannot, the vertex becomes a new independent set (lines 10-12). Once all vertices have been placed into $\mathcal{S}$, the algorithm terminates.

What separates DSatur from a typical greedy algorithm is the heuristic on line 3 for selecting vertices. The saturation degree of an uncolored vertex $v$ is defined as the number of different colors assigned to adjacent vertices. Thus, a vertex with maximal saturation degree can be considered as one that has the fewest available colors from which to choose. In practice, this heuristic works very well, and for certain classes of graphs the DSatur algorithm produces an optimal coloring. In the worst case, the DSatur algorithm has run time $\mathcal{O}(t^2)$, where $t$ is the number of vertices in the graph \citep{lewis2015guide}.
\else
Algorithm \ref{alg:approxcoloring} in Appendix~\ref{app:algos}.
\fi

\section{Experimental Evaluation}
In this section, we demonstrate the effectiveness of our approach in terms of computational efficiency and utility gain as a function of the domain size and number of queries on both synthetic and real-world data sets. Our use case is the setting where an analysts asks queries through an online interface. We therefore fix a cap of 60 seconds on the amount of time it should take for the algorithm to return a solution to the problem, i.e., maximum overlap or approximate maximum overlap. We demonstrate that the approximate chromatic number algorithm can find an approximation to the maximum overlap within this time bound for a much larger set of queries than the exact clique number algorithm and its maximum overlap variant. At the same time, in all cases, we find that the gap between the approximate chromatic number and the maximum overlap is very small. Thus, this demonstrates the feasibility of computing the maximum overlap using the approximate chromatic number algorithm. 

\subsection{Effect of Domain Size and Number of Queries}
\label{subsec:scalingexperiments}

We first experiment with scaling the domain size and number of queries to assess the feasibility of the algorithms discussed in Section \ref{sec:algorithms}. More specifically, we use the exact clique number and its exact maximum overlap variant based on Algorithm~\ref{alg:maxcliquekcolor} and the approximate chromatic number algorithm based on the \texttt{DSatur} algorithm presented in Algorithm~\ref{alg:approxcoloring}. We consider a varying number of queries $t$ and a varying domain size $|\mathbb{D}|$. We consider the algorithm to have completed if it provides a result in under 60 seconds. Otherwise, the algorithm is considered to be too time expensive and recorded as a `time-out'. For each of the experiments, we select, uniformly at random, a number of attributes $m$ ranging from $10$ to $50,000$, and then select for each attribute $10^k$ attribute values, where $k$ is chosen uniformly at random from the set $\{1, 2, \ldots, 6\}$. The domain size, which is capped at $10^{80000}$ for reasons of computational feasibility, is then calculated as the product of the sizes of the sets of attribute values.
\iffull
Thus, the `log10 Domain Size' on the y-axis for each of the following plots can also be considered as the number of attributes $m$. 
\else
\fi

\descr{Uniform distribution.} 
A single query on a given domain $\mathbb{D}$ is generated by selecting a random number of attributes $m' \sim \text{Uniform}(1,m)$. For each of the $m'$ attributes selected, we construct a predicate by randomly selecting a subset of values of size $a' \sim \text{Uniform}(1,|A|)$. The query is then the conjunction of the $m'$ predicates. Using this procedure, we are able to generate sets of $t$ queries. We vary $t$ from 10 to 2,000. Finally, for a given domain $\mathbb{D}$ and number of queries $t$, we attempt to compute the exact clique number, exact maximum overlap and approximate chromatic number using the aforementioned algorithms. The results are presented for each algorithm in Figures \ref{fig:cliquenumbermontecarlo}, \ref{fig:maxoverlapmontecarlo} and \ref{fig:approxchromaticnumbermontecarlo}, respectively.

\begin{figure*}[!ht]
     \centering
     \begin{subfigure}[b]{0.32\textwidth}
         \centering
         \includegraphics[width=\textwidth]{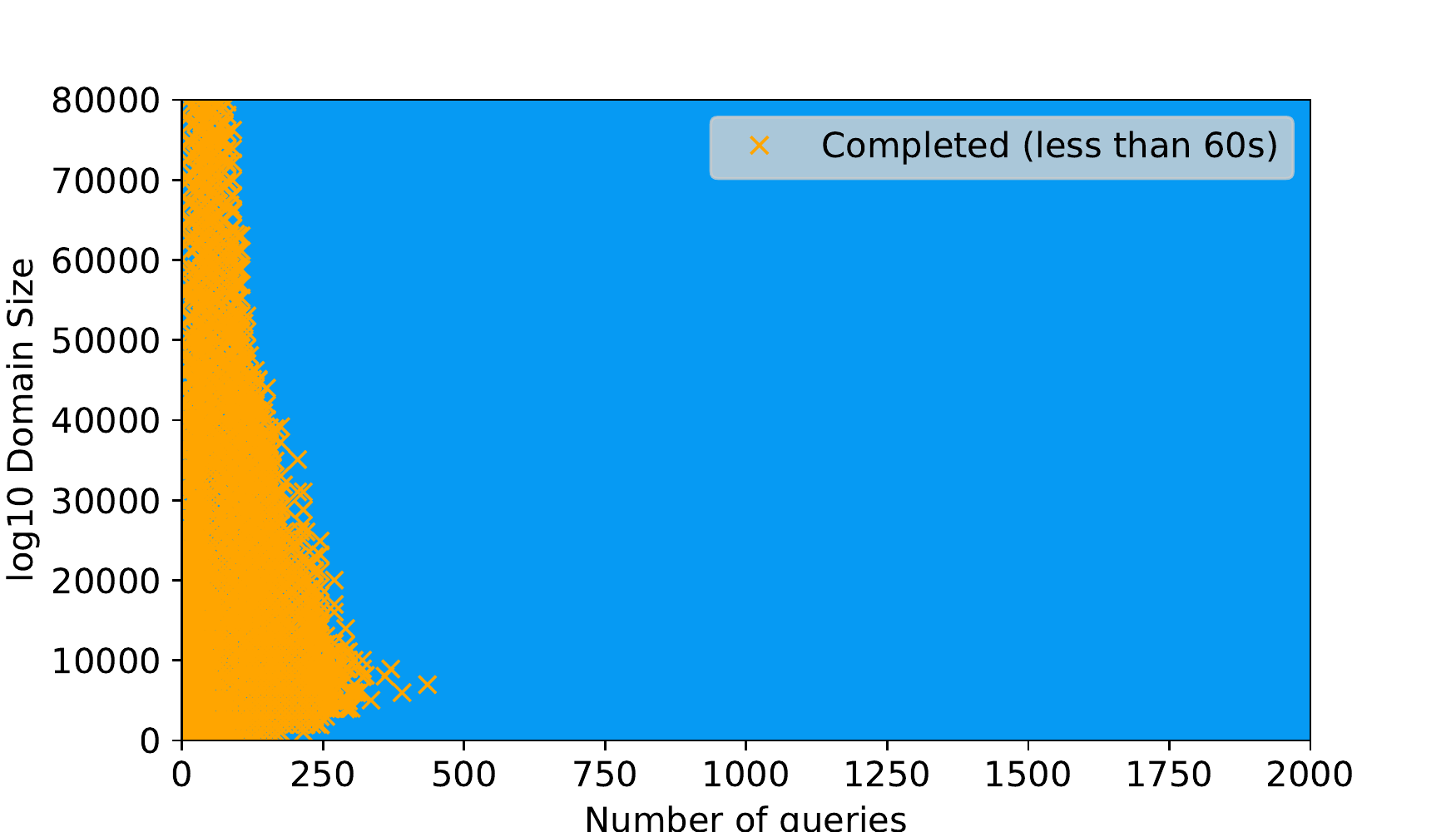}
         \caption{Exact maximum clique}
         \label{fig:cliquenumbermontecarlo}
     \end{subfigure}
     \hfill
     \begin{subfigure}[b]{0.32\textwidth}
         \centering
         \includegraphics[width=\textwidth]{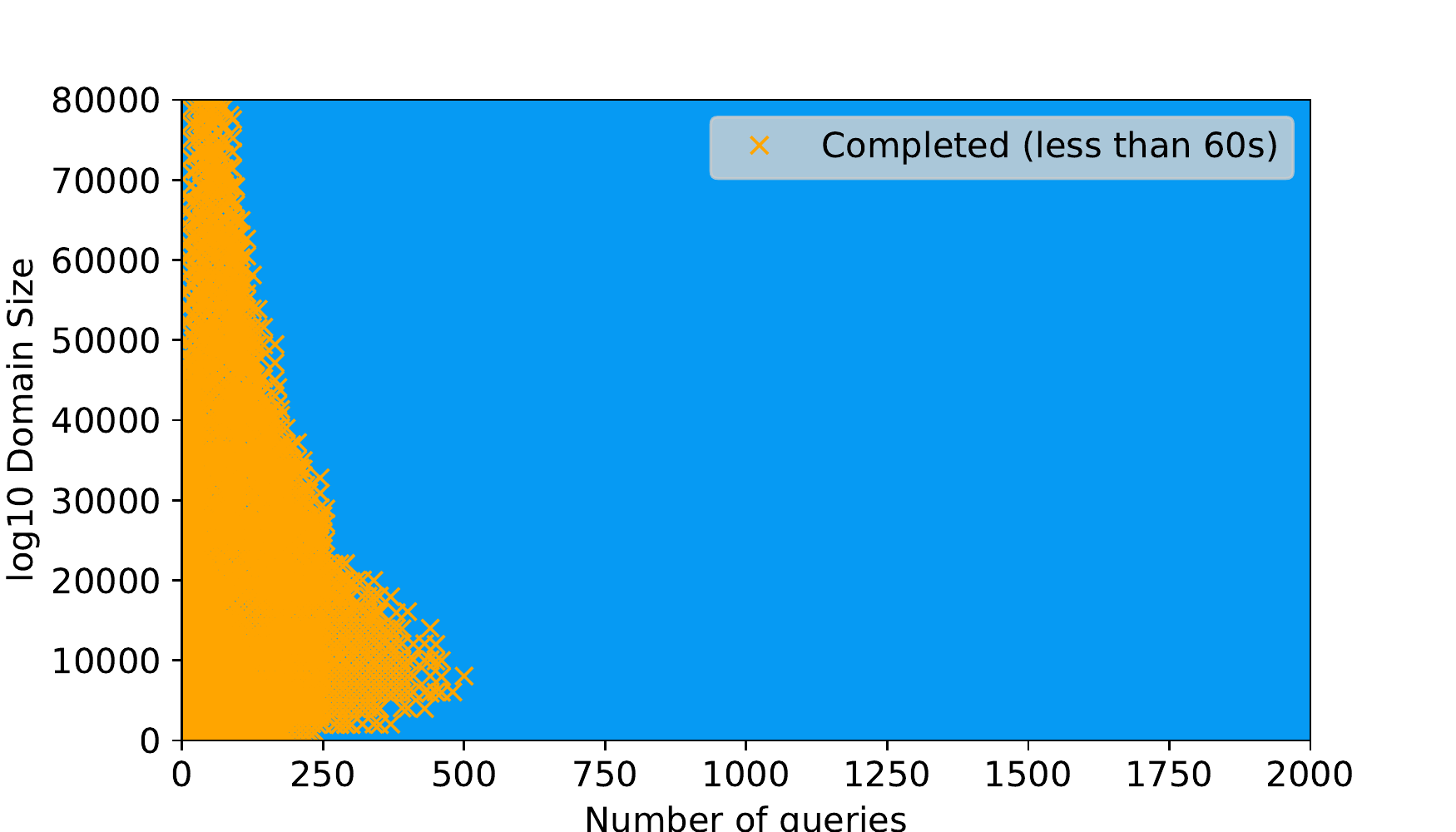}
         \caption{Exact maximum overlap}
         \label{fig:maxoverlapmontecarlo}
     \end{subfigure}
     \hfill
     \begin{subfigure}[b]{0.32\textwidth}
         \centering
         \includegraphics[width=\textwidth]{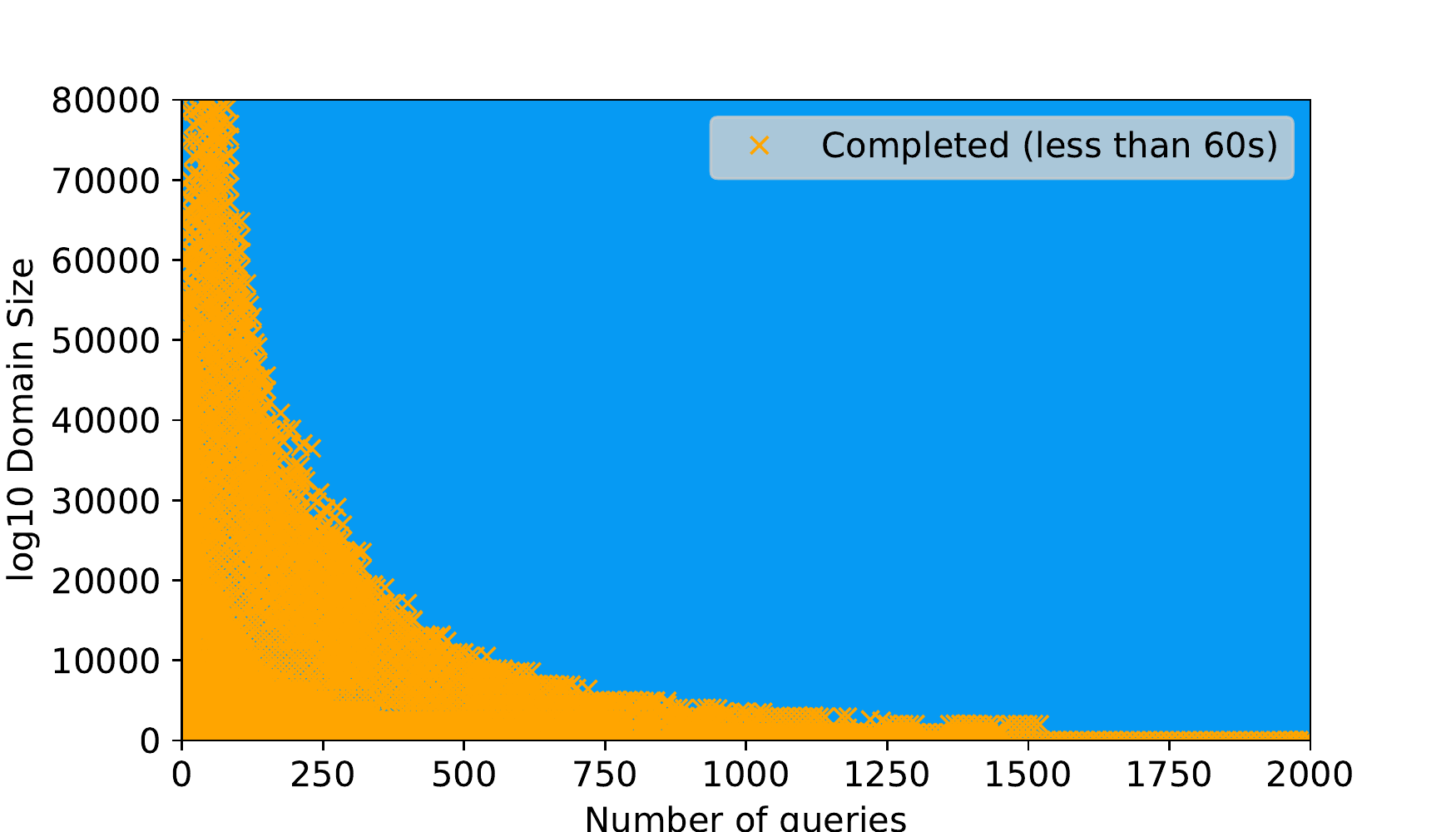}
         \caption{Approximate chromatic number}
         \label{fig:approxchromaticnumbermontecarlo}
     \end{subfigure}
        \caption{Feasible regions for the exact maximum clique, exact maximum overlap and approximate chromatic number algorithms}
        \label{fig:montecarlo}
\end{figure*}

Figures \ref{fig:cliquenumbermontecarlo} and \ref{fig:maxoverlapmontecarlo} show very similar patterns for scaling. This makes sense intuitively, as the algorithm used for computing maximum overlap is based on the algorithm for computing maximum clique. Note that both algorithms time-out for relatively small number of queries on very small domains, i.e., $\log_{10} |\mathbb{D}| \approx 10000$ and number of queries $t \approx 350$. The reason behind this is that due to the query generation process, queries on larger domains are more likely to be disjoint from one another. The peak around $\log_{10} |\mathbb{D}| = 10000$ indicates that the algorithm is most efficient at a certain likelihood of disjointness. This is evident from Figure~\ref{fig:intersection}, where the instances of the maximum clique algorithm are divided into three classes based on the size of the maximum clique. When $\log_{10} |\mathbb{D}| \leq 10000$, there are many instances of maximum cliques of sizes $\ge 10$, and almost no such instances exist for larger domain sizes. Thus, the queries overlap more for smaller domains, and hence the algorithm takes longer to compute the corresponding maximum clique or maximum overlap. 

\begin{figure}[!ht]
\centering
\includegraphics[width=0.4\textwidth]{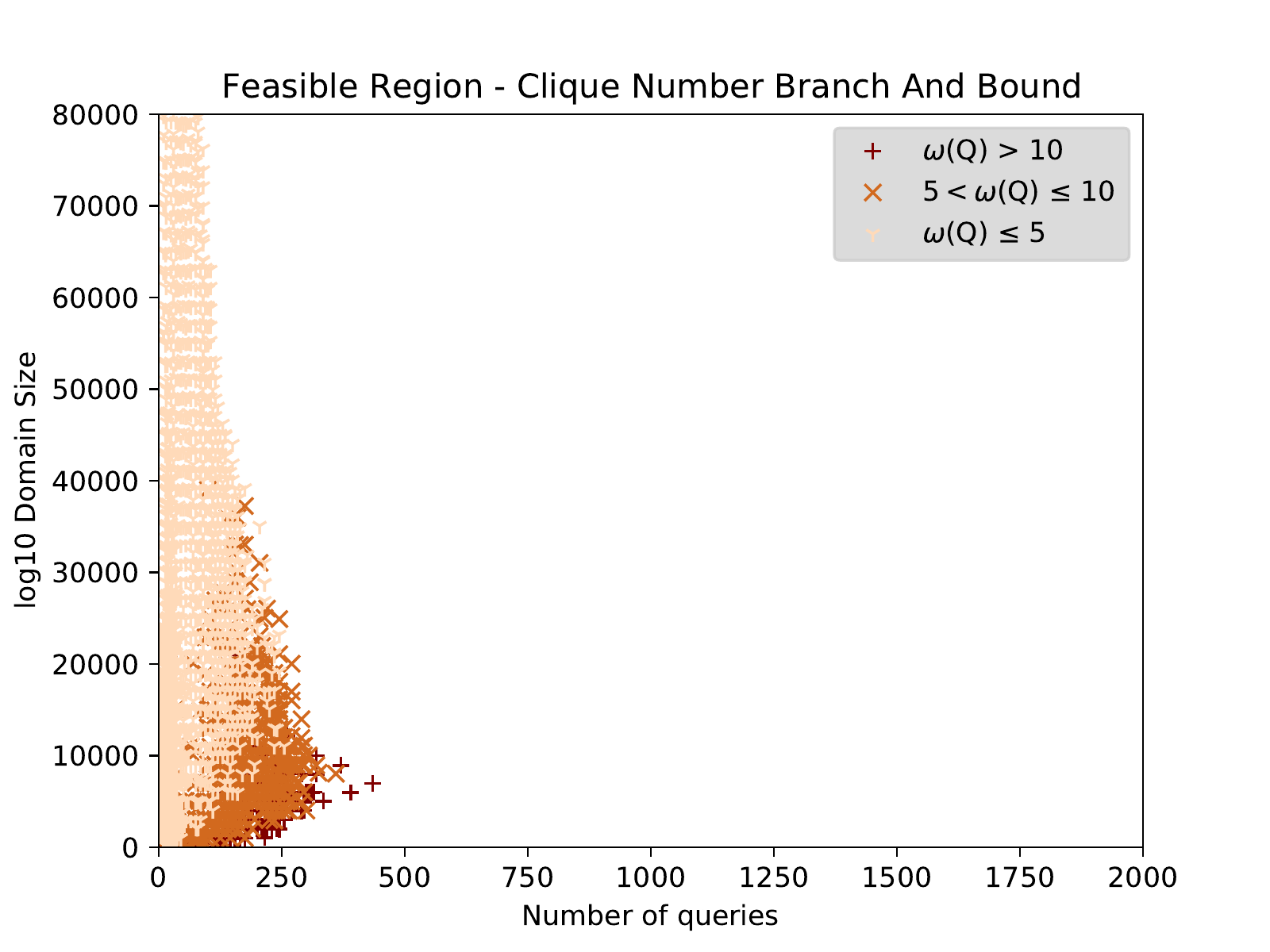}
\caption{Feasible region for different values of the exact clique number $\omega(Q)$}
\label{fig:intersection}
\end{figure}

By contrast, Figure~\ref{fig:approxchromaticnumbermontecarlo} shows a much smoother curve for the feasible region of the approximate chromatic number algorithm. The algorithm is able to handle much larger query sets on smaller domains, and for very small domains is able to handle thousands of queries within the allowed 60-second processing time. The algorithm also runs to completion on almost all cases where the maximum overlap and clique number algorithms run to completion. In Figure~\ref{fig:compare-max-query}, we compare the three algorithms in terms of the maximum number of queries handled as a function of the domain size. As seen from the figure, the approximate chromatic number is able to handle a larger number of queries before being timed out as compared to the other algorithms. This is a significant advantage of the approximate chromatic number algorithm, which we shall return to in the next section. Finally, for very large domains ($\log_{10} |\mathbb{D}| > 20,000$) all three algorithms appear to be limited to between 100 and 200 queries. This may indicate that for larger domains, constructing the query graph itself (common to all three algorithms) dominates the run time. This makes sense, as the construction of the query graph has $\mathcal{O}(m t^2)$ time complexity. Thus, run-time is dominated by the graph algorithms for smaller domains, and by query graph construction for larger domains.

\begin{figure}[!ht]
\centering
\includegraphics[width=0.4\textwidth]{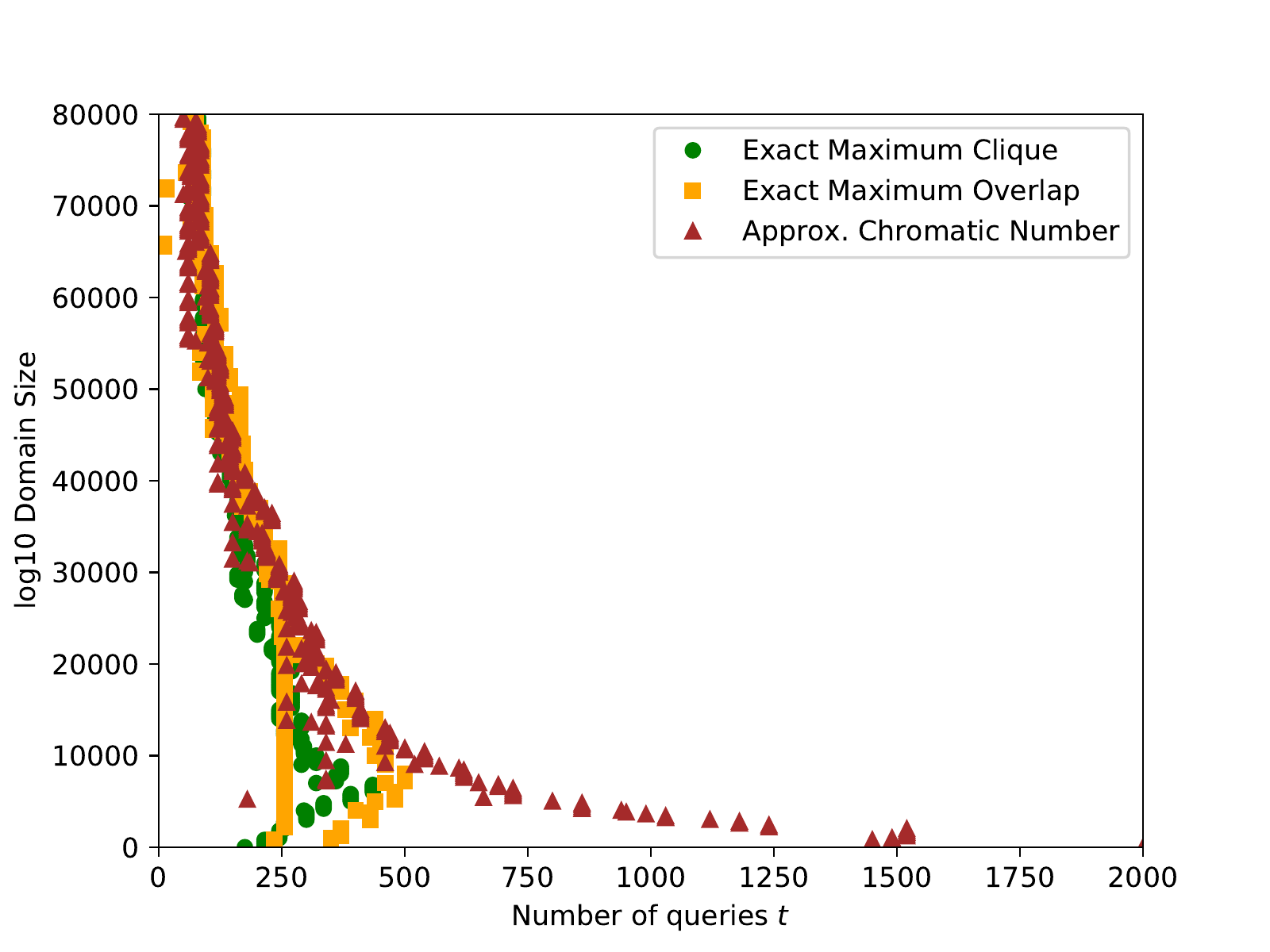}
\caption{The maximum number of queries processed within 60 seconds by the three algorithms, as a function of domain size}
\label{fig:compare-max-query}
\end{figure}

\descr{Other Distributions.} In generating random queries above, we assumed that the following quantities are distributed uniformly:

\begin{enumerate}
\item the number of predicates in a given query;
\item which attributes occur in the predicates in a given query;
\item the number of attribute values occurring in the predicates in a given query; and
\item which attribute values occur in the predicates in a given query.
\end{enumerate}

We generalized our experiments by considering some alternative distributions of the above quantities. For the first and third quantities, we considered the exponential distribution (with different scale parameters). For the second and fourth quantities, we considered the normal distribution (with different standard deviations).

For reasons of computational feasibility, we capped the domain size at $10^{48}$ (instead of $10^{80000}$) and regarded non-termination of an algorithm within 10 seconds (instead of 60 seconds) as a time-out. We focused on the `comfortable query limit', i.e., the maximum number of queries that could with very high probability be processed by an algorithm before time-out.

We varied the distribution for each of the aforementioned four quantities in turn while fixing the uniform distribution for the remaining three. We observed the following discrepancies with the previous experiments:

\begin{itemize}
\item For the approximate chromatic number algorithm, when the numbers of attribute values occurring in predicates were distributed exponentially, the comfortable query limit increased by about 23\%.
\item For the clique number algorithm, when the numbers of predicates occurring in queries were distributed exponentially, the comfortable query limit decreased by about 333\%. Also, when the numbers of attribute values occurring in predicates were distributed exponentially, the limit increased by about 46\%. As the parameter of the distribution increased, the probability of time-out increased modestly. Finally, when attribute values occurred in predicates according to a normal distribution, the limit increased by about 92\%.
\item For the maximum overlap algorithm, when attributes occurred in predicates according to a normal distribution, the comfortable query limit increased by about 20\%. Also, when the numbers of attribute values occurring in predicates were distributed exponentially, the limit increased by about 260\%. As the scale parameter increased, the probability of time-out increased modestly.
\end{itemize}


\subsection{Utility Gain on Random Synthetic Census Queries}
\label{subsec:randomcensusqueries}
In this section, we compare the utility gain obtained via the three graph algorithms as a function of the number of queries, and show that even when the approximate maximum overlap returned by the three algorithms is the same, the approximate chromatic number algorithm has an advantage over the other two in terms of time-outs. For this, we analyze a workload of queries on a census-like data set from \citep[Section 9.2]{zhang2018ektelo} (see also \cite{current-population-survey-data}) and examine the utility gain by taking maximum overlap into account. We assume that each query is allocated the same privacy budget and answered by a homogeneous DP mechanism, e.g., the Gaussian mechanism under $\mu$-GDP (Definition~\ref{def:gauss}), or the Laplace mechanism under $\epsilon$-DP (Definition~\ref{def:laplace}). From Eq.~\ref{eq:util-simple} in Section~\ref{sec:util-gain}, this means that the utility gain (in both cases) is $1 - \tilde{\gamma}/t$, where $\tilde{\gamma}$ is the maximum overlap returned by the algorithm, and $t$ is the number of queries. 
\iffull
By Theorem \ref{thm:arbitrarycomposition}, we know that the true privacy budget usage when considering optimal composition will be given by the maximum overlap of the queries, and approximated by the clique number and chromatic number of the query graph.
\fi

The census data set discussed in \citep{zhang2018ektelo} consists of the following attributes:

\begin{itemize}
    \item \texttt{Income}: 5,000 uniformly sized ranges on the interval (0, 75,0000)
\    \item \texttt{Age}: 5 uniformly sized ranges on the interval (0, 100)
    \item \texttt{Marital status}: 4 discrete values
    \item \texttt{Race}: 7 discrete values
    \item \texttt{Gender}: 2 discrete values
\end{itemize}

This gives a total domain size of $|\mathbb{D}| = 1.4 \times 10^6$. 
\iffull
The authors in \citep{zhang2018ektelo} discuss three query workloads on this census data set. The first two workloads consist solely of queries that cover the entire domain, and hence none of the queries will compose in parallel. The third workload however is very 
\else
Of the three query workloads considered in~\cite{zhang2018ektelo}, the third is
\fi
nicely suited to our problem. The workload consists of all queries of the form $(\texttt{Income} \in (0, i), \texttt{Age} == a, \texttt{Marital status} == m, \texttt{Race} == r, \texttt{Gender} == g)$, where $(0, i)$ is an income range and $a, m, r$ and $g$ are either single elements from the field's domain or all elements from the field's domain. 

\iffull
There are 3,600,000 queries in this workload, which is far larger than our test sets in Section \ref{subsec:scalingexperiments}. To work around this, we design an approach for randomly sampling a set of queries from the workload. Firstly, we assume the possible values of $i, a, m, r$ and $g$ are uniformly distributed, and all independent from one another. Thus, to generate
a random query, we simply sample random values for $i, a, m, r$ and $g$. We can use this process to generate a random set of queries of size $t$. For each set of queries, we compute the maximum overlap, maximum clique and approximate chromatic number to find the (approximate) privacy budget usage. 
In our experiments we vary $t$ from 25 to 2,000.

During our experiments, we found that the utility gained varied significantly depending on the query sets generated. To work around this randomness, for each value of $t$, we generate 30 different query sets instead of 1, and report the mean of the maximum overlap, clique number and approximate chromatic number. 
\else
To generate a random query, we simply sample random values for $i, a, m, r$ and $g$. We can use this process to generate a random set of queries of size $t$. In our experiments we vary $t$ from 25 to 2,000. For each value of $t$, we generate 30 different query sets, and report the mean of the maximum overlap, clique number and approximate chromatic number. 
\fi
These results are given in Figure~\ref{fig:censusmeanbudget}. The figure shows that the utility gain remains within the 85--95\% bracket for all three algorithms, and around 95\% for most of the queries. This utility gain is huge, but easily explained by the highly parallel nature of the workload. Importantly, Figure \ref{fig:censusmeanbudget} indicates that for this data set and method of sampling queries, the gap between the approximate chromatic number and the maximum overlap is very small. This contrasts with the theoretical results in Section \ref{sec:graphtheory}, where the gap between the chromatic number and maximum clique can be arbitrarily large (see \citep{mycielski1955coloriage}).

\begin{figure}[h]
\centering
\includegraphics[width=0.4\textwidth]{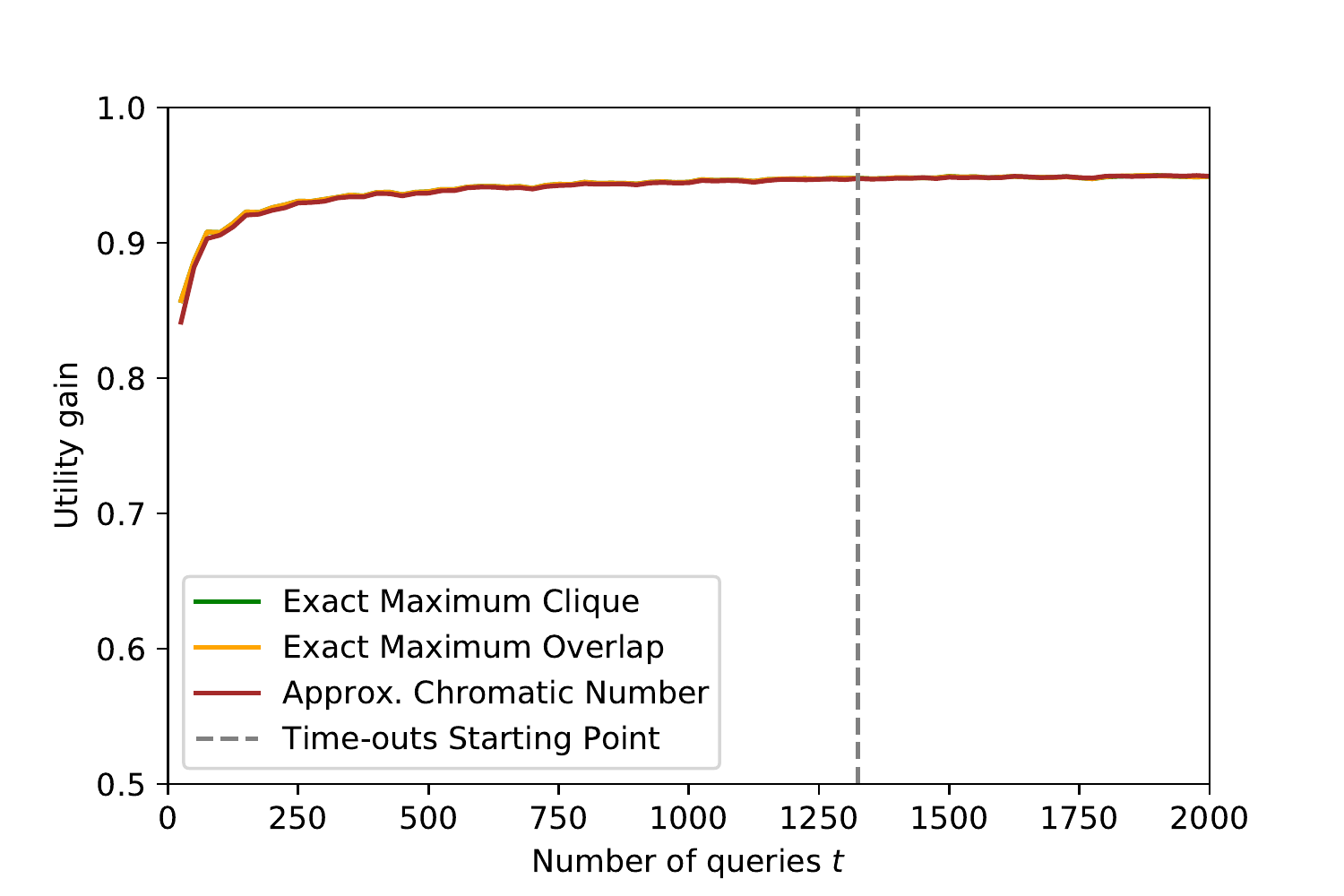}
\caption{Utility gain as a function of number of queries. Maximum overlap and maximum clique always gave the same results, as well as approximate chromatic number in most cases, and hence the lines overlap.}
\label{fig:censusmeanbudget}
\end{figure}

However, there is some difference in the algorithms in terms of execution time. The dotted vertical line in the figure indicates the starting point (as a function of the number of queries) where the maximum overlap and clique number algorithms start to time out (i.e., take more than 60 seconds to execute). This is illustrated in Figure~\ref{fig:censuspercomplete}, where we show the percentage of the 30 randomly sampled query sets completed within 60 seconds for each value of $t$. While the approximate chromatic number algorithm always gives an output within 60 seconds, at $t$ $=$ 1,350 we begin to see that the other two algorithms start to time out, with the percentage of time-outs increasing as $t$ grows.



\begin{figure}[h]
\centering
\includegraphics[width=0.4\textwidth]{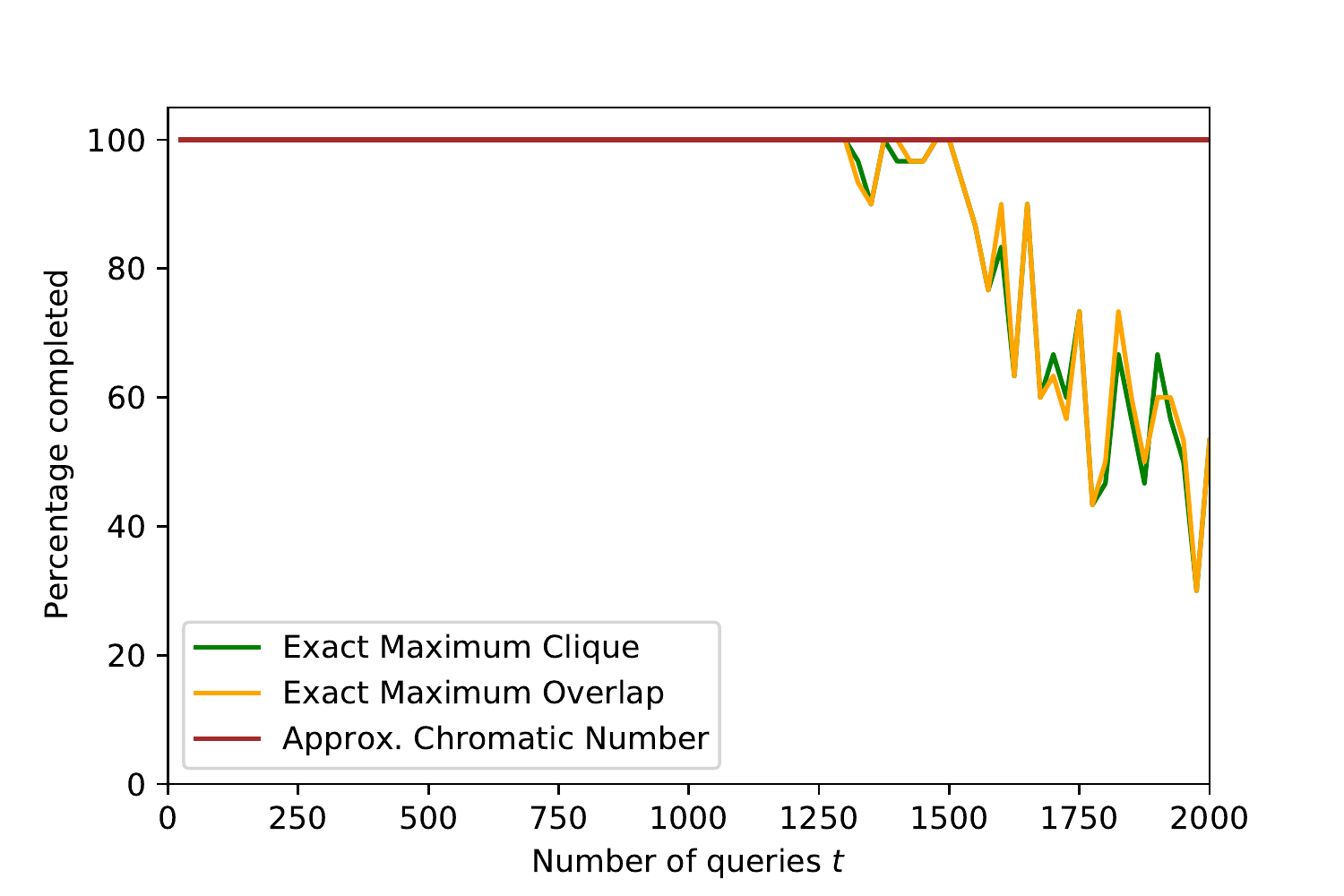}
\caption{The percentage of the query sets for which the three algorithms output a result before time-out.}
\label{fig:censuspercomplete}
\end{figure}

\iffull
In Figure~\ref{fig:censustimetaken}, we illustrate the (mean) time taken by the algorithms as a function of $t$ for all queries executed within 60s. As we see, with increasing $t$, the clique number and maximum overlap algorithms take longer, whereas the performance of the chromatic number algorithm degrades more gracefully.

\begin{figure}[h]
\centering
\includegraphics[width=0.45\textwidth]{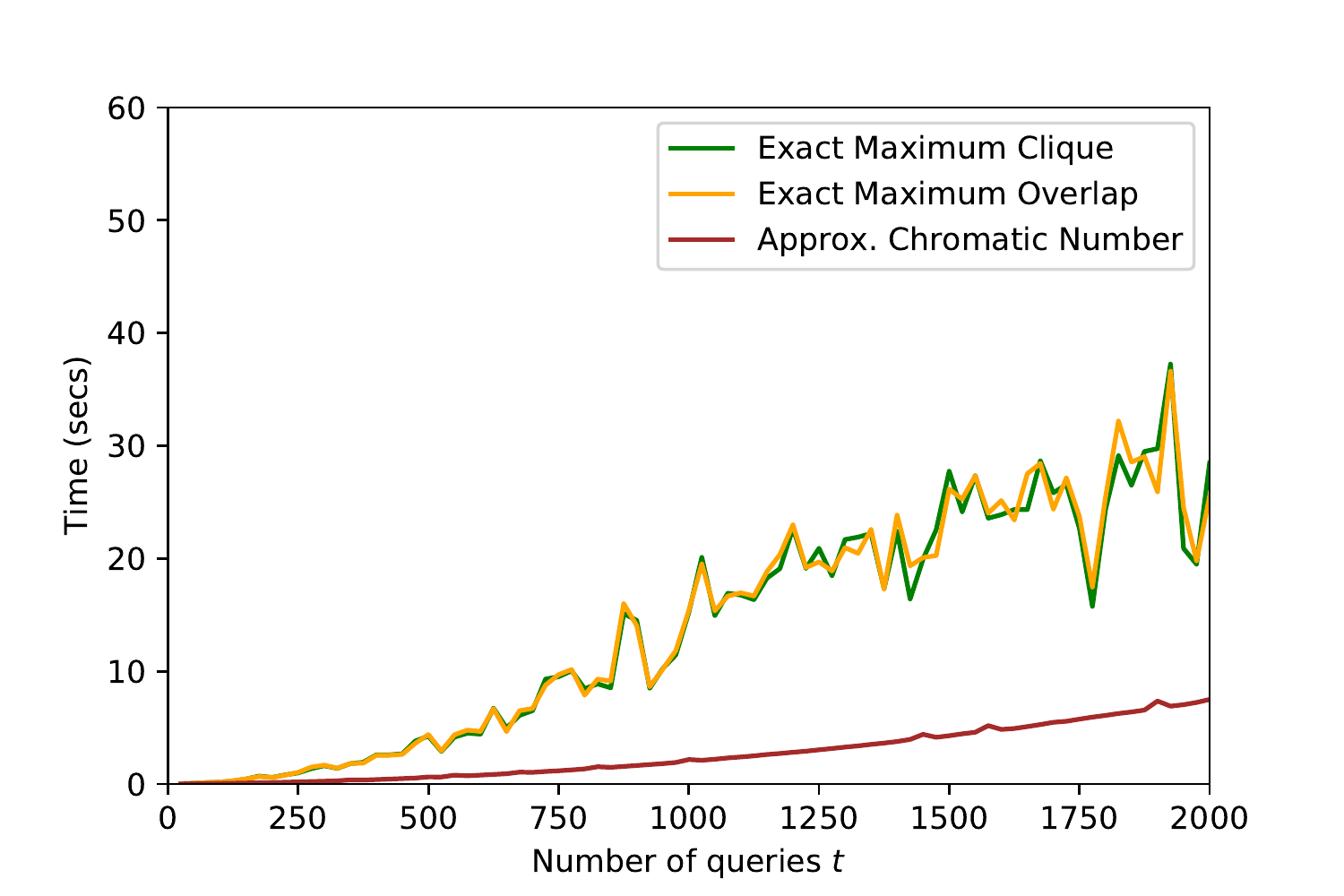}
\caption{The average time taken by each algorithm on query sets completed before time-out.}
\label{fig:censustimetaken}
\end{figure}
\fi




\descr{Other Distributions.} 
Just as was done for the scalability experiments reported in Section \ref{subsec:scalingexperiments}, we generalized the experiments for utility gain by considering non-uniform distributions (in turn) for the four salient quantities controlling the random generation of queries identified in that section. 
\iffull
Since the clique number and maximum overlap algorithms did not scale well for a larger set of queries, i.e., timing out around 50 to 80 queries as shown in Section~\ref{subsec:scalingexperiments}, we focus on the results for the approximate chromatic number algorithm for query sets of size up to $1000$. For each query set size, we repeated the experiment three times, and report the results in Figure~\ref{fig:budgetdistributionschromatic}. For three of the four quantities, the utility gain through the uniform distribution is either comparable or better than the other distributions. The utility gain is highest for the uniform distribution on the number of predicates selected. The uniform distribution in this case can select a larger number of predicates, thus making more queries overlap. On the other hand the utility gain for the uniform distribution on the number of values taken by a predicate is the lowest. This is again explainable, as the exponential distribution on the number of values taken per predicate means that the resulting queries are more likely to overlap. In all cases, we observe considerable utility gain. For completeness, we show the results for the maximum clique and maximum overlap algorithms for a small number of queries in Appendix~\ref{app:mo-cliq-util-gain}. 
\else
The results are detailed in Appendix~\ref{app:mo-cliq-util-gain}. In all cases, we observe considerable utility gain.
\fi

\iffull
\begin{figure*}[!ht]
     \centering
     \begin{subfigure}[b]{0.45\textwidth}
         \centering
         \includegraphics[width=\textwidth]{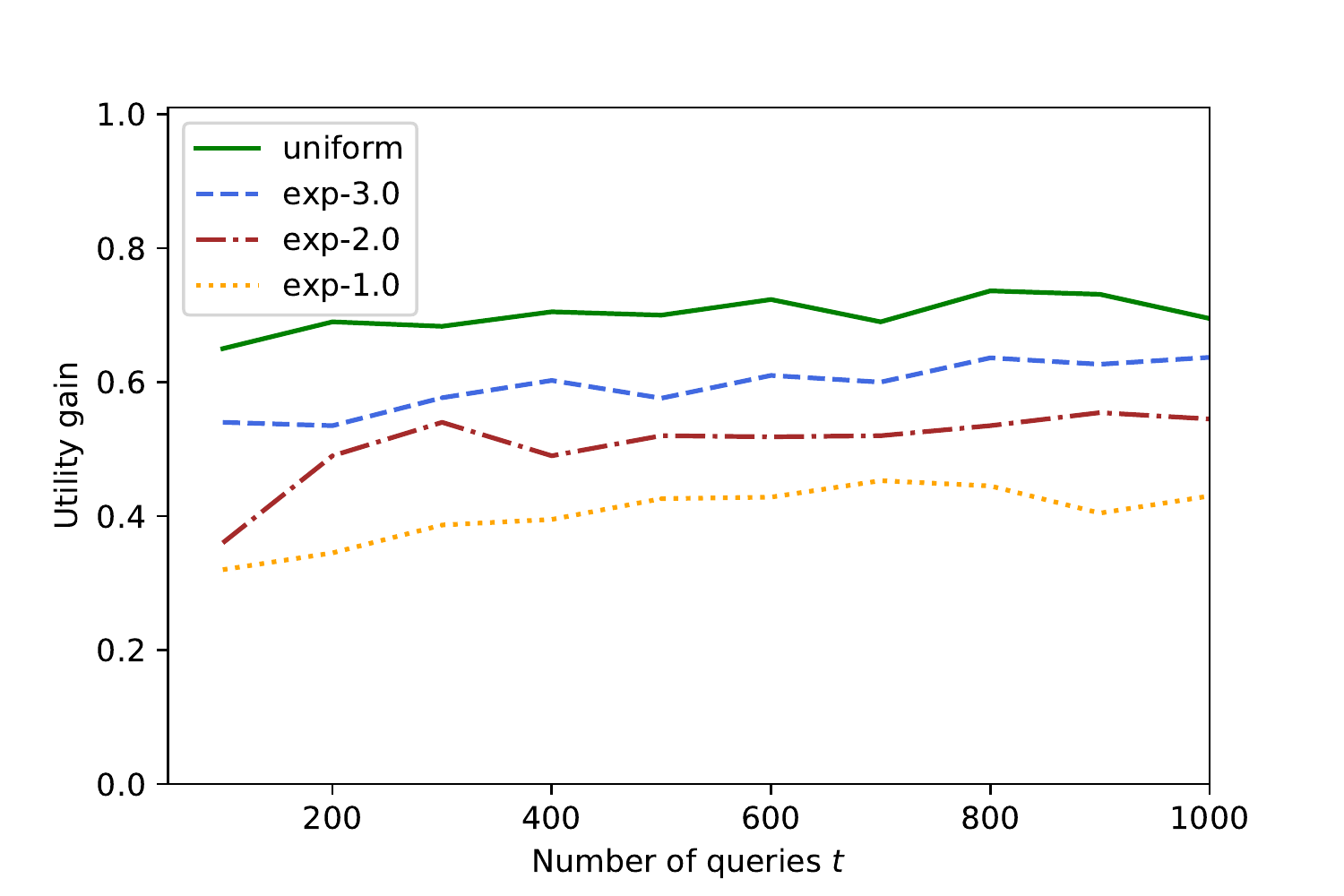}
         \caption{Influence of distribution on the number of predicates}
         \label{fig:chromaticnumpredicates}
     \end{subfigure}
     \begin{subfigure}[b]{0.45\textwidth}
         \centering
         \includegraphics[width=\textwidth]{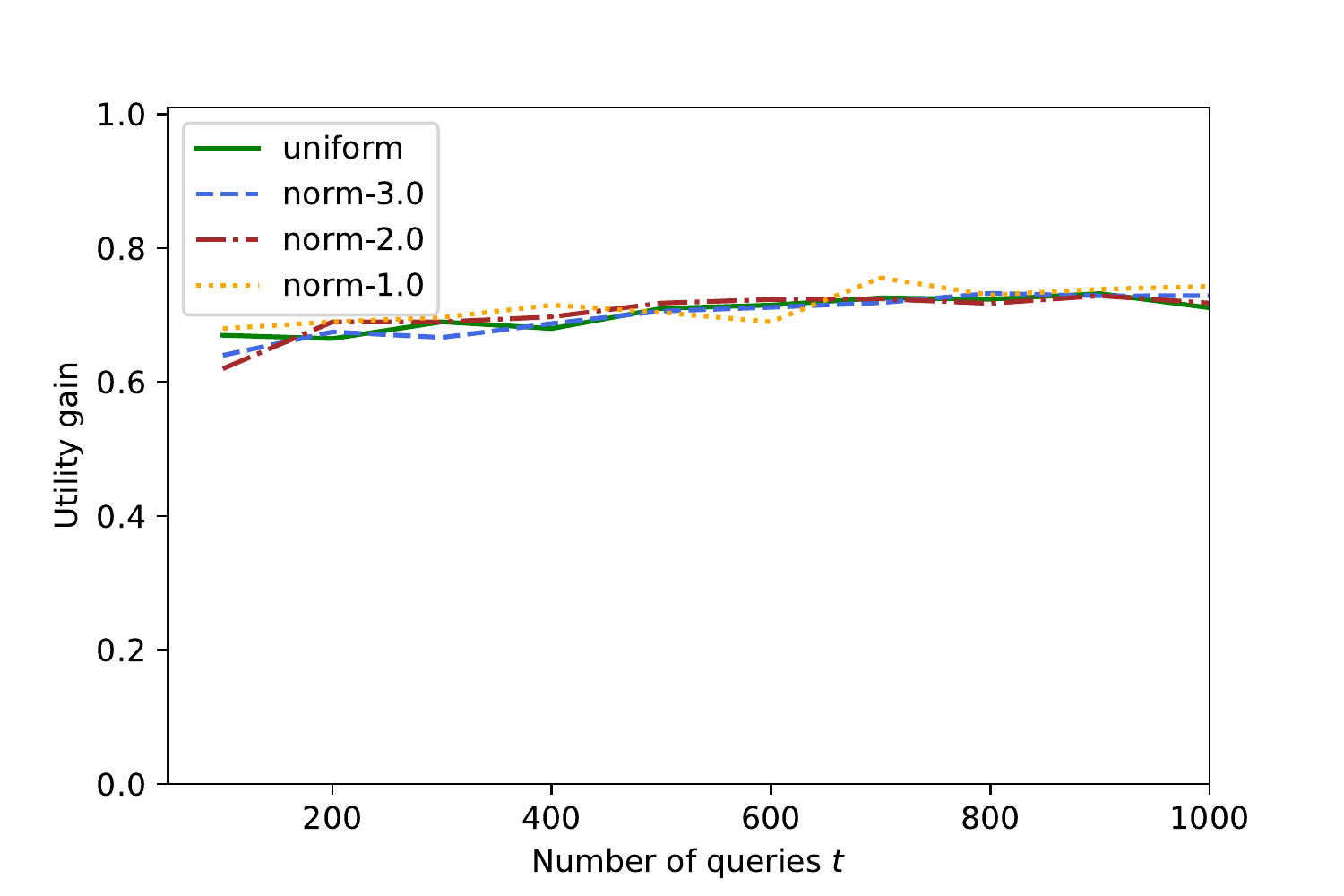}
         \caption{Influence of distribution on the attributes}
         \label{fig:chromaticfields}
     \end{subfigure}
     \begin{subfigure}[b]{0.45\textwidth}
         \centering
        \includegraphics[width=\textwidth]{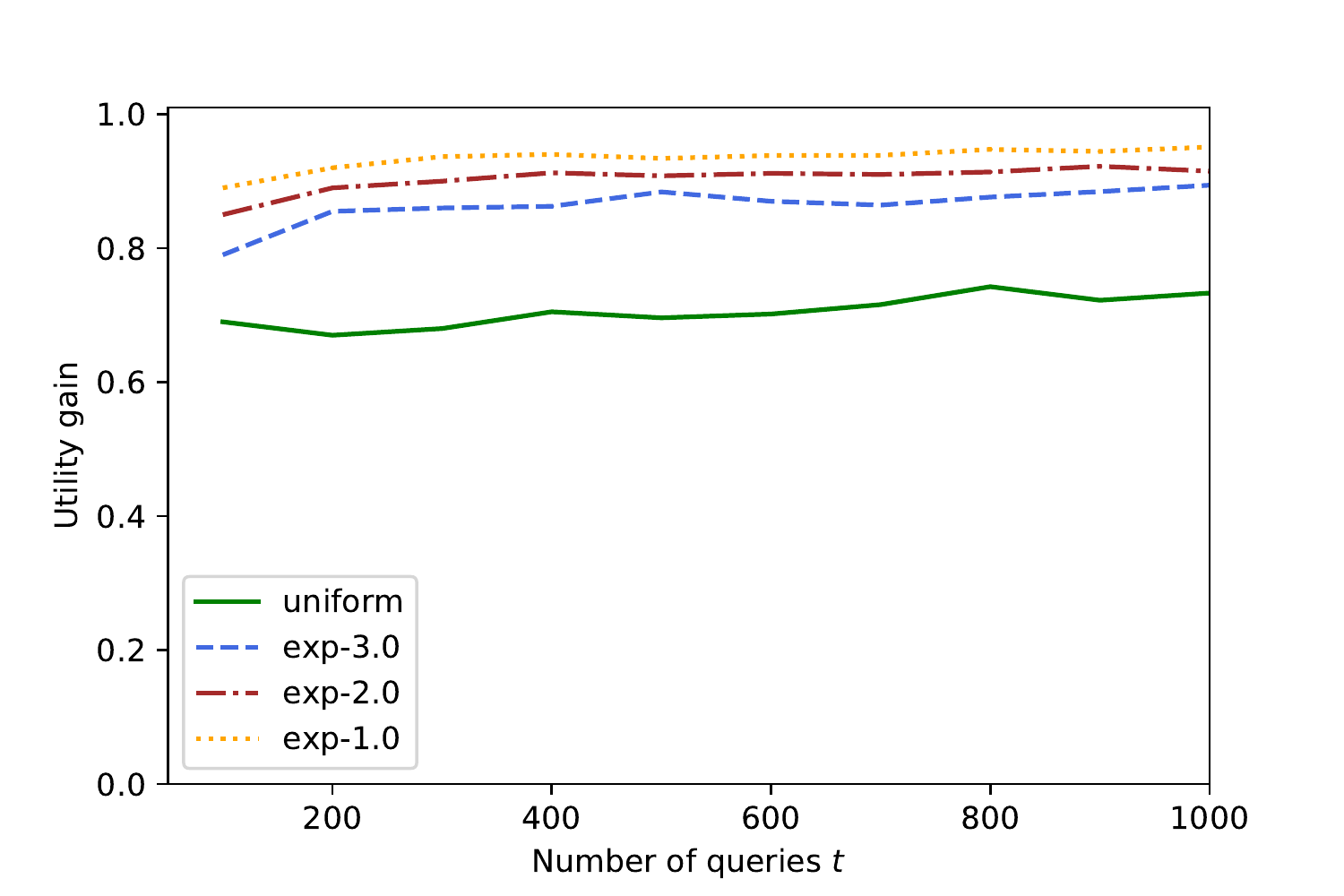}
         \caption{Influence of distribution on the number of values per predicate}
         \label{fig:chromaticnumnumvalues}
     \end{subfigure}
     \begin{subfigure}[b]{0.45\textwidth}
         \centering
         \includegraphics[width=\textwidth]{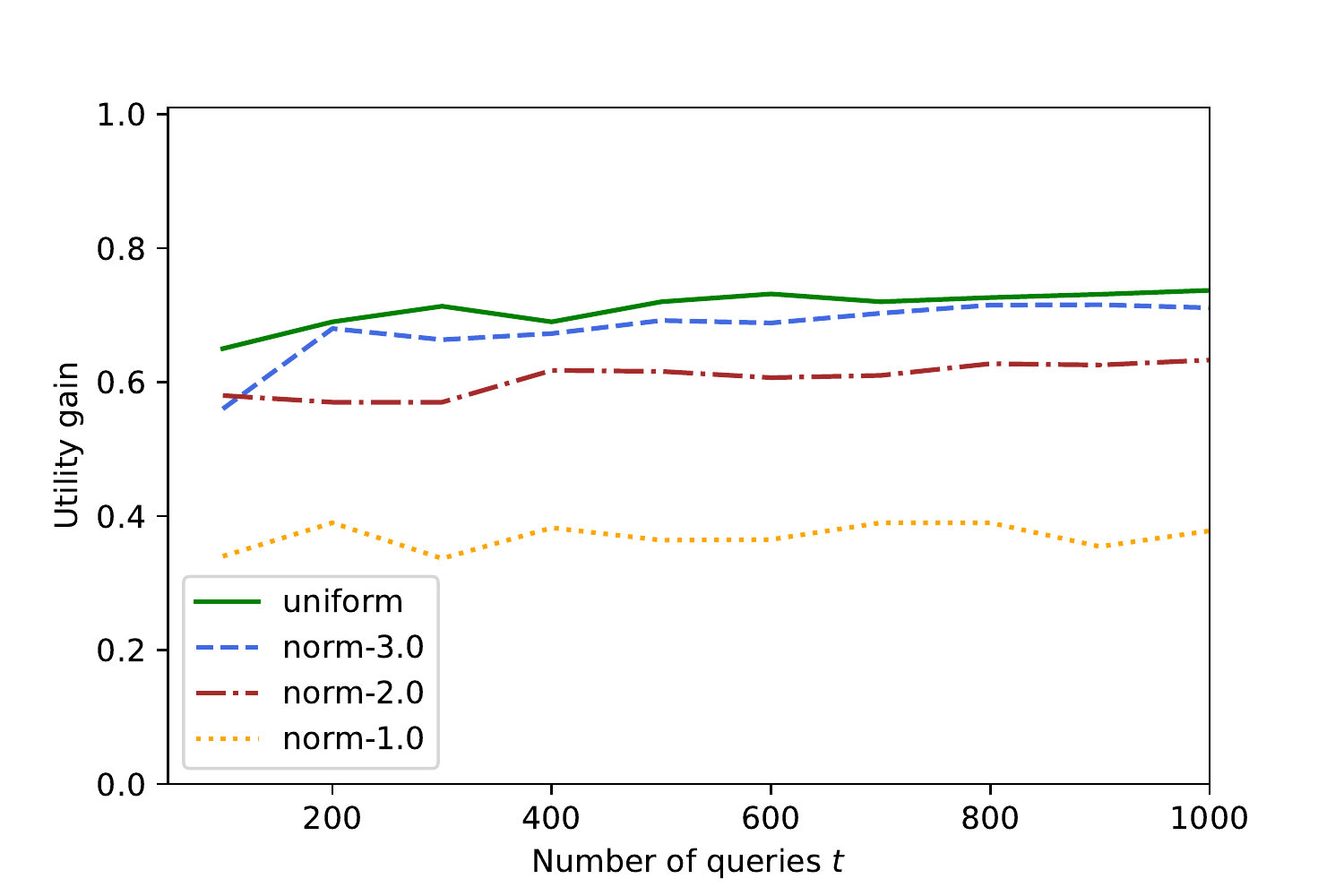}
         \caption{Influence of distribution on the attribute values}
         \label{fig:chromaticnumvalues}
     \end{subfigure}
        \caption{Utility gain versus number of queries, for the \textbf{approximate chromatic number} algorithm, for different distributions on the four quantities (random variables) used in generating queries from Section~\ref{subsec:scalingexperiments}. The label `uniform' refers to the uniform distribution; `exp' refers to the exponential distribution (with three different scale parameters); and `norm' refers to the normal distribution (with three different standard deviations).}
        \label{fig:budgetdistributionschromatic}
\end{figure*}
\fi


\subsection{Real Census Queries Data Set}
\label{sub:absqueryresults}

We also analyze a data set of queries on 
the Australian national census data, logged by Australia's national statistics agency, the Australian Bureau of Statistics.
The data set contains nine separate workloads of queries for a census-like data set. The census-like data set has a domain of size $|\mathbb{D}| \approx 6.8 \times 10^{28}$. We analyze the benefit of running our approach on the query sets using the utility gain metric introduced in Section \ref{sub:max-overlap}. Recall that the utility gain was defined as $U = 1 - \gamma / t$, where $t$ is the number of queries. For this experiment, we compute the maximum overlap and utility gain for each set of queries. We also compute these values for the workload of all queries combined together. The results can be seen in Table \ref{table:absutilityanalysis}.

\begin{table}
\centering
\begin{tabular}{ c|c|c|c|c|c } 
\multirow{2}{*}{Data set} & \multirow{2}{*}{$t$} & \multirow{2}{*}{$\gamma$} & \multirow{2}{*}{Utility gain} & \multicolumn{2}{c}{Average $l_1$ Error}\\
\cline{5-6}
&&&& Seq. & Opt.\\
    \hline
    1 & 9 & 6 & 0.333 & 2.392 & 1.9488\\
    2 & 120 & 107 & 0.108 & 8.747 & 8.254 \\
    3 & 2 & 2 & 0.000 & 1.133 & 1.133\\
    4 & 267 & 216 & 0.191 & 13.040 & 11.734\\
    5 & 54 & 34 & 0.370 & 5.850 & 4.657\\
    6 & 68 & 55 & 0.191 & 6.573 & 5.921\\
    7 & 41 & 17 & 0.585 & 5.116 & 3.286\\
    8 & 38 & 20 & 0.474 & 4.912 & 3.568\\
    9 & 284 & 208 & 0.268 & 13.446 & 11.516\\
    \hline
    Combined & 883 & 563 & 0.362 & 23.709 & 18.940\\
\end{tabular}
\caption{Utility gain on the real census query data set and the comparison of average absolute error through $\mu$-GDP with $\mu = 1$ under sequential composition (Seq.) versus optimal composition (Opt.)}
\label{table:absutilityanalysis}
\end{table}

With the exception of data set 3, improvements in utility range from 10.8\% to 58.5\%. We found that on all data sets (including the combined data set), there was no gap between the approximate chromatic number and the true maximum overlap (and hence $\tilde \chi(Q) = \omega(Q) = \gamma(Q)$). The overall utility gain for the combined data set was 36.2\%. 

We also present the results using the more familiar average $l_1$ error metric (Eq.~\ref{eq:l1}) in Table~\ref{table:absutilityanalysis}. We choose the Gaussian mechanism which is $\mu$-GDP private. We set the overall budget to be $\mu = 1$. By sequential composition of $\mu$-GDP mechanisms, this means that each of the $t$ queries in $Q$ is allocated a budget of $\mu' = 1/\sqrt{t}$. Using optimal composition, we allocate each query a budget of $\mu' = 1/\sqrt{\gamma}$. The results for the two cases are displayed in columns labelled `Seq.' and `Opt.', respectively. Notice that with optimal parallel composition there is a significant decrease in absolute error, resulting in an overall error of less than 4.7; a significant improvement for sensitivity-1 queries.

\section{Related Work}
\iffull
The parallel composition theorem, which is the main theme of this paper, was proven for $\epsilon$-differential privacy by~\cite{mcsherry2009privacy}, who commented on its importance in practical privacy platforms. As compared with sequential composition, researchers have paid little attention to the computational aspects of parallel composition. When it is mentioned at all, the parallel composition theorem is usually applied as part of an analysis of the privacy loss associated with the execution of a specific data release mechanism (see, e.g., \citep{asghar2019differentially, diff-gen}). Often the operation of the mechanism involves the issuing of `measurement' queries (as opposed to queries asked directly by the user), and the ways in which these queries overlap is predetermined so that computational issues do not arise.
\fi

The work most closely related to our own is contained in \cite{xiao-np} and \cite{inan-sensitivity}. The former work was the first to prove that computing the $l_1$-sensitivity of a set of queries is NP-hard. As we have discussed, the notion of maximum overlap under basic sequential composition is equivalent to finding the $l_1$-sensitivity of a set of queries. However, our treatment of the maximum overlap in terms of its $f$-differential privacy characterization is much broader, and includes other mechanisms such as the Gaussian mechanism for which the $l_1$-sensitivity result does not apply. The work from~\cite{inan-sensitivity} formulates the problem of computing $l_1$-sensitivity of a set of  statistical range queries as a graph problem. They then compute $l_1$-sensitivity via exact maximum clique algorithms. As mentioned before, we are bound to use exact maximum clique algorithms, and not their approximate counterparts, because the result may underestimate the privacy budget, hence potentially causing privacy leakage. We have additionally linked finding the chromatic number of the graph to maximum overlap, with the advantage that the approximate chromatic number algorithms never underestimate the privacy budget. As a result, we are able to run the algorithm for a much larger number of queries and domain sizes than was possible in~\citep{inan-sensitivity}. We also remark that our results hold for the set of predicate queries, a large class of queries that properly includes the statistical range queries considered in \citep{xiao-np} and \citep{inan-sensitivity}.

Zhang et al.~\cite{zhang2018ektelo} introduce approaches for computing a reduced workload matrix of queries. Their workload-based `partition selection' operator directly takes advantage of parallel composition, and this results in improved accuracy. The authors also devise ways of exploiting the structure of range queries, and give a modified version of the Multiplicative Weights Exponential Mechanism (MWEM) \citep{hardt2010multiplicative} that selects groups of queries that are pairwise disjoint. Our approach is able to scale to much larger data domains.

The framework of personalized differential privacy \citep{ebadi2015differential} is based on a notion that is closely related to our notion of the maximum overlap of a set of queries. In this framework each individual \emph{actually} in the data set is assigned a separate maximum allowed privacy loss, and every query that accesses the individual's data increases the individual's privacy loss. In contrast, we consider the domain of all individuals that \emph{could} be in the data set and, crucially, discuss the computational aspects of determining how queries cover the data domain.

The designers of the High-Dimensional Matrix Mechanism (HDMM) \citep{mckenna} consider the problem of simultaneously maximising accuracy of query answers and minimising privacy loss for a workload of predicate queries of the same type as that considered by us. However, whereas we focus on this class of queries for the purpose of reducing time complexity, the designers of HDMM do so to reduce space complexity. Specifically, workloads of such queries can be compactly represented through use of the Kronecker product. However, experimental results indicate that the run-time of HDMM scales with the size of the data domain (as opposed to the number of attributes). Thus, in practice HDMM can handle only small domains.

HDMM is a well-known example of a `workload-aware' differentially private mechanism. Such mechanisms execute an optimisation routine (such as a least-squares regression) in order to maximize the expected accuracy of the answers to the queries in a given workload of queries. While it is likely that such mechanisms make use of parallel composition, they do so only implicitly. Indeed, to date, no work has been conducted to determine the extent to which workload-aware mechanisms exploit parallel composition. Our approach addresses the problem of making optimal use of parallel composition directly, thereby avoiding the significant computational overheads associated with most of the optimisation routines used by workload-aware mechanisms.

McKenna et al.~\cite{mckenna2019graphical} use probabilistic graphical models (PGMs) \citep{koller2009probabilistic} to address the problem of inference in high-dimensional data sets. Rather than building an explicit probability vector over all elements of the data domain, the use of PGMs allows for a compact, implicit representation whose size scales with the number of attributes. Our graph-based framework has similar benefits, but is intended for the measurement of privacy loss, rather than the optimisation of inference. Future work in this area could involve combining the two frameworks.

\iffull
Our work is different from work on differential privacy for graph data sets, which often treats the problem of preserving the privacy of users of social media \citep{nissim2007smooth, task2012guide, kasiviswanathan2013analyzing, xiao2014differentially}. In our problem, data sets are represented as tables, not as graphs. It might be possible to extend our framework to allow for analysis of queries on graph data sets.
\fi

\section{Limitations}
We list a few shortcomings of our work which, if addressed, could further improve our work.
\begin{itemize}
    \item Our primary use case is the online setting where the data custodian answers queries on the fly. Our approach is to regenerate the query graph whenever a new batch of queries arrives. However, an approach based on a dynamic query graph~\cite{das2019incremental} could improve the overall query processing time, because it would eliminate the need to regenerate the query graph from scratch. 
    \item A possible method of reducing the size of the query graph is to represent complex queries which have a pre-determined, regular structure as single nodes in the graph. For instance, an SQL `GROUP BY' query across different attribute values of a single attribute $A$ (i.e., a histogram query) has such a structure. Such a query might be representable as a single node, instead of $|A|$ nodes, in the query graph. This idea leads to the more general question of whether it is possible to efficiently pre-process the query graph to reduce its size before executing the graph algorithms.
    \item We evaluated our approach on real-world workloads of queries in Section~\ref{sub:absqueryresults}. Unfortunately, there is a lack of publicly available such workloads. The TPC-H data set~\cite{tpc-h} for database performance benchmarking contains samples of real-world queries over multiple data sets, but not workloads of queries on single data sets.
\end{itemize}

\section{Conclusion and Future Work}
We have shown that making the optimal use of parallel composition amounts to computing the maximum overlap of a set of queries. Although computing the maximum overlap is NP-hard, it is possible to approximate this quantity using well-known graph algorithms, e.g., by using efficient approximate algorithms for the chromatic number of the query 
\iffull
graph. Although in theory the approximation error can be arbitrarily large, our experiments have shown that the error is frequently very small, and hence that the approximate algorithms are useful in practice. Our experiments have also shown that application of our approach leads to significant gains in utility when many of the queries are disjoint from one another. 
\else
graph, showing significant utility gain in practice.
\fi
It would be interesting to broaden the scope of our approach to include additional classes of queries. Range queries form one particularly interesting class of queries. Each predicate in a range query can be represented as an interval, so that the query can be regarded as a box (hyper-rectangle) in $m$-dimensional space (where $m$ is the number of attributes). We remark that the maximum overlap problem for range queries is equivalent to the maximum depth problem in computational geometry \citep{chan2013klee}. 
Another interesting direction is to investigate whether more could be squeezed out of the maximum overlap problem. For instance, we have defined an overlap as a binary function: two queries overlap if they intersect on at least one row in the domain. Is it possible to obtain an even tighter privacy analysis by considering the amount (the number of rows in the domain) by which queries overlap? We leave this as an open question. 

\section*{Acknowledgements}
We thank our shepherd Catuscia Palamidessi and the anonymous reviewers for their suggestions which have helped us significantly improve this paper. This research received no specific grant from any funding agency in the public, commercial, or not-for-profit sectors.

\bibliographystyle{plain}
\bibliography{ref}

\appendix

\section{Hypergraphs and Maximum Overlap}
\label{app:hypergraph}
A \emph{hypergraph} is a generalisation of the concept of a graph, where the elements of $E$ are non-empty subsets of any cardinality of $V$ (i.e., not simply two-element subsets)~\citep{diestel-gt}. In the context of hypergraphs, an element of $E$ shall be called a hyperedge. A simple brute-force algorithm for finding $\gamma(Q)$ would check all $2^t$ subsets of $Q$, check whether each subset has a non-empty coverage, and report the largest subset with a non-empty coverage. Performing such a brute-force search would allow us to construct the following hypergraph:

\descr{Overlap hypergraph.} Given a set of queries $Q \coloneqq \{q_1, q_2, \ldots, q_t\}$, their \emph{overlap hypergraph}, denoted $\mathcal{H}(Q)$, is defined as follows: each query is a vertex (i.e., $V = Q$), and hyperedges are subsets of queries with a non-empty coverage, i.e.,
\[
E \coloneqq \{ Q' \subseteq Q : C_{Q'}(\mathbb{D}) \neq \emptyset\}.
\]

\descr{Down-closed hypergraph.} A hypergraph $\mathcal{H} = (V, E)$ is \emph{down-closed} if $e_1 \in E$ and $e_2 \subseteq e_1$ implies $e_2 \in E$ (i.e., every subset of a hyperedge is also a hyperedge) \citep{down-closed}.

\begin{proposition}
$\mathcal{H}(Q)$ is down-closed.
\end{proposition}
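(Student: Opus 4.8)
The plan is to unwind the definitions and show that membership of a set in the hyperedge set $E$ of $\mathcal{H}(Q)$ is monotone under taking subsets. Recall that $E = \{ Q' \subseteq Q : C_{Q'}(\mathbb{D}) \neq \emptyset\}$, where $C_{Q'}(\mathbb{D}) = \bigcap_{q \in Q'} C_q(\mathbb{D})$. So I would take an arbitrary hyperedge $e_1 \in E$ and an arbitrary subset $e_2 \subseteq e_1$, and aim to conclude $e_2 \in E$, i.e., $C_{e_2}(\mathbb{D}) \neq \emptyset$.

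The key step is the elementary set-theoretic observation that intersecting fewer sets can only enlarge (or preserve) the intersection: since $e_2 \subseteq e_1$, we have $C_{e_1}(\mathbb{D}) = \bigcap_{q \in e_1} C_q(\mathbb{D}) \subseteq \bigcap_{q \in e_2} C_q(\mathbb{D}) = C_{e_2}(\mathbb{D})$. Because $e_1 \in E$ means $C_{e_1}(\mathbb{D}) \neq \emptyset$, pick any $x \in C_{e_1}(\mathbb{D})$; then $x \in C_{e_2}(\mathbb{D})$ as well, so $C_{e_2}(\mathbb{D}) \neq \emptyset$, hence $e_2 \in E$. One minor edge case worth a remark: if $e_2 = \emptyset$, the convention $C_\emptyset(\mathbb{D}) = \mathbb{D} \neq \emptyset$ makes $\emptyset$ a hyperedge, which is consistent (and in any case the definition of down-closedness is vacuous there if one instead excludes the empty set from hyperedges).

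There is essentially no obstacle here — the statement is a direct consequence of the antitonicity of intersection with respect to the index set, combined with the definition of a hyperedge as a subset-of-queries with nonempty joint coverage. The only thing to be careful about is to state clearly that $C_{Q'}(\mathbb{D})$ is defined as an intersection (as set up just before Proposition~\ref{prop:disjoint}), since that is exactly what drives the monotonicity.

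\begin{proof}
Let $e_1 \in E$ and let $e_2 \subseteq e_1$. By definition, $C_{e_1}(\mathbb{D}) = \bigcap_{q \in e_1} C_q(\mathbb{D}) \neq \emptyset$, so there is some $x \in C_{e_1}(\mathbb{D})$, i.e., $x \in C_q(\mathbb{D})$ for every $q \in e_1$. Since $e_2 \subseteq e_1$, in particular $x \in C_q(\mathbb{D})$ for every $q \in e_2$, so $x \in \bigcap_{q \in e_2} C_q(\mathbb{D}) = C_{e_2}(\mathbb{D})$. Hence $C_{e_2}(\mathbb{D}) \neq \emptyset$, so $e_2 \in E$. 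Therefore $\mathcal{H}(Q)$ is down-closed.
\end{proof}
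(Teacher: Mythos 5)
Your proof is correct and follows essentially the same route as the paper's: both argue that a witness of nonempty joint coverage for a hyperedge remains a witness for any subset, since intersecting over fewer queries can only enlarge the coverage. Your version just spells out the set-theoretic step (and the empty-set convention) more explicitly than the paper does.
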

\begin{proof}
If a set of queries $Q' \subseteq Q$ has a non-empty coverage $C_{Q'}(\mathbb{D}) \neq \emptyset$, then every subset of $Q'' \subseteq Q'$ also has $C_{Q''}(\mathbb{D}) \neq \emptyset$. Thus, every subset of a hyperedge in $\mathcal{H}(Q)$ is also a hyperedge in $\mathcal{H}(Q)$.
\end{proof}

\descr{Rank.} The \emph{rank} $r(\mathcal{H})$ of a hypergraph $\mathcal{H} \coloneqq (V, E)$ is the cardinality of the largest hyperedge, i.e., $r(\mathcal{H}) \coloneqq \max_{e \in E} |E|$ \citep{voloshin2009introduction}.
\begin{proposition}
\label{prop:rankvsgamma}
Given a set of queries $Q$, one has $\gamma(Q) = r(\mathcal{H}(Q))$.
\end{proposition}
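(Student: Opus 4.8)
The plan is to recognize that this statement is essentially a matter of unwinding definitions: the family of query subsets that witness the maximum overlap is literally the edge set of the overlap hypergraph, so the largest such witness is precisely the largest hyperedge. Concretely, recall that $\gamma(Q) = \max_{Q' \subseteq Q}\{|Q'| : C_{Q'}(\mathbb{D}) \neq \emptyset\}$, whereas the edge set of $\mathcal{H}(Q)$ is $E = \{Q' \subseteq Q : C_{Q'}(\mathbb{D}) \neq \emptyset\}$ and its rank is $r(\mathcal{H}(Q)) = \max_{e \in E}|e|$. Since the optimization defining $\gamma(Q)$ ranges over exactly the members of $E$, the two quantities must coincide, and I would make this precise by proving the two inequalities separately.

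For $\gamma(Q) \le r(\mathcal{H}(Q))$: I would take $Q^\star \subseteq Q$ attaining the maximum in the definition of $\gamma(Q)$, so that $|Q^\star| = \gamma(Q)$ and $C_{Q^\star}(\mathbb{D}) \neq \emptyset$; by definition of $E$ this makes $Q^\star$ a hyperedge of $\mathcal{H}(Q)$, hence $|Q^\star| \le r(\mathcal{H}(Q))$. For $r(\mathcal{H}(Q)) \le \gamma(Q)$: I would take a hyperedge $e$ of maximum cardinality, so that $|e| = r(\mathcal{H}(Q))$ and, again by definition of $E$, $C_e(\mathbb{D}) \neq \emptyset$; thus $e$ is a feasible point in the optimization defining $\gamma(Q)$, giving $|e| \le \gamma(Q)$. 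Combining the two yields $\gamma(Q) = r(\mathcal{H}(Q))$. The only point requiring any care, and it is genuinely minor, is the treatment of the empty set: the hypergraph convention takes hyperedges to be non-empty, whereas $C_\emptyset(\mathbb{D}) = \mathbb{D}$ is (typically) non-empty, so $\emptyset$ would formally satisfy the coverage condition; but this discrepancy is immaterial to a maximum over cardinalities whenever $\gamma(Q) \ge 1$ (equivalently, whenever at least one query has non-empty coverage), since any singleton with non-empty coverage already dominates $\emptyset$ in size. I do not anticipate any substantive obstacle; the content of the proposition is that the hypergraph $\mathcal{H}(Q)$ is the right combinatorial object, and the identification with rank is then immediate.
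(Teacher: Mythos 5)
Your proof is correct and follows essentially the same route as the paper's, which simply observes that the edge set of $\mathcal{H}(Q)$ is by definition the feasible set of the optimization defining $\gamma(Q)$, so the two maxima coincide. Your extra care with the two inequalities and the empty-set convention is fine but not needed beyond what the paper records.
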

\begin{proof}
For the overlap hypergraph $\mathcal{H}(Q)$, the edges are defined by $E \coloneqq \{Q' \subseteq Q : C_{Q'}(\mathbb{D}) \neq \emptyset\}$. It follows that $\max_{e \in E} |E| = \max_{Q' \subseteq Q} \{ |Q'| : C_{Q'}(\mathbb{D}) \neq \emptyset \} = \gamma(Q)$.
\end{proof}
Thus, computing $\gamma(Q)$ exactly translates to finding the rank of the overlap hypergraph. However, the $\mathcal{O}(2^t)$ running time means it is not very useful in practice. Therefore, we focus on the query graph instead to obtain a lower bound.

\iffull
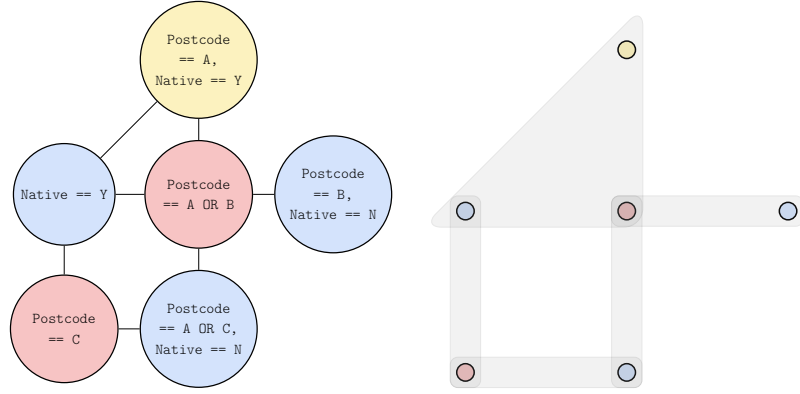
\begin{figure*}[!ht]
\centering
\resizebox{0.3\textwidth}{!}{%
\begin{tikzpicture}
\definecolor{yellow1}{RGB}{250, 240, 190}
\definecolor{blue1}{RGB}{210, 225, 250}
\definecolor{red1}{RGB}{245, 195, 195}

\begin{scope}[every node/.style={circle,thick,draw}]
    \node[align=center, text width=2cm, fill=yellow1] (A) at (3,6) {\texttt{Postcode == A, Native == Y}};
    \node[align=center, text width=2cm, fill=red1] (B) at (3,3) {\texttt{Postcode == A OR B}};
    \node[align=center, text width=2cm, fill=blue1] (C) at (3,0) {\texttt{Postcode == A OR C, Native == N}};
    \node[align=center, text width=2cm, fill=blue1] (D) at (0,3) {\texttt{Native == Y}};
    \node[align=center, text width=2cm, fill=red1] (E) at (0,0) {\texttt{Postcode == C}};
    \node[align=center, text width=2cm, fill=blue1] (F) at (6,3) {\texttt{Postcode == B, Native == N}};
\end{scope}

    \draw (A) -- (B);
    \draw (A) -- (D);
    \draw (B) -- (C);
    \draw (B) -- (D);
    \draw (B) -- (F);
    \draw (D) -- (E);
    \draw (C) -- (E);
    
\end{tikzpicture}
}
\;
\resizebox{0.3\textwidth}{!}{%
\begin{tikzpicture}
[
    he/.style={draw, rounded corners, inner sep = 3pt},   
]
\definecolor{yellow1}{RGB}{250, 240, 190}
\definecolor{blue1}{RGB}{210, 225, 250}
\definecolor{red1}{RGB}{245, 195, 195}

\node[style={circle,thick,draw}, fill=yellow1] (A) at (3,6) {};
\node[style={circle,thick,draw}, fill=red1] (B) at (3,3) {};
\node[style={circle,thick,draw}, fill=blue1] (C) at (3,0) {};
\node[style={circle,thick,draw}, fill=blue1] (D) at (0,3) {};
\node[style={circle,thick,draw}, fill=red1] (E) at (0,0) {};
\node[style={circle,thick,draw}, fill=blue1] (F) at (6,3) {};
    
\node[he, fit=(B) (F), fill = gray, opacity = 0.1] {};
\node[fit=(A) (B) (D)] (fd) {};
\draw [rounded corners=10pt, fill = gray, opacity = 0.1] ($(fd.north east)+(0.0,0.5)$) -- ($(fd.south east)+(0,0)$) -- ($(fd.south west)+(-0.5,0)$)--cycle;

\node[he, fit=(B) (C), fill = gray, opacity = 0.1] {};
\node[he, fit=(D) (E), fill = gray, opacity = 0.1] {};
\node[he, fit=(C) (E), fill = gray, opacity = 0.1] {};

\end{tikzpicture}
}
\caption{The query graph and the overlap hypergraph of the set of queries from Example~\ref{ex:query-graph}.}
\label{fig:query-graph}
\end{figure*}
\else
\begin{figure}[!ht]
\centering
\resizebox{0.23\textwidth}{!}{%
\begin{tikzpicture}
\definecolor{yellow1}{RGB}{250, 240, 190}
\definecolor{blue1}{RGB}{210, 225, 250}
\definecolor{red1}{RGB}{245, 195, 195}

\begin{scope}[every node/.style={circle,thick,draw}]
    \node[align=center, text width=2cm, fill=yellow1] (A) at (3,6) {\texttt{Postcode == A, Native == Y}};
    \node[align=center, text width=2cm, fill=red1] (B) at (3,3) {\texttt{Postcode == A OR B}};
    \node[align=center, text width=2cm, fill=blue1] (C) at (3,0) {\texttt{Postcode == A OR C, Native == N}};
    \node[align=center, text width=2cm, fill=blue1] (D) at (0,3) {\texttt{Native == Y}};
    \node[align=center, text width=2cm, fill=red1] (E) at (0,0) {\texttt{Postcode == C}};
    \node[align=center, text width=2cm, fill=blue1] (F) at (6,3) {\texttt{Postcode == B, Native == N}};
\end{scope}

    \draw (A) -- (B);
    \draw (A) -- (D);
    \draw (B) -- (C);
    \draw (B) -- (D);
    \draw (B) -- (F);
    \draw (D) -- (E);
    \draw (C) -- (E);
    
\end{tikzpicture}
}
\;
\resizebox{0.23\textwidth}{!}{%
\begin{tikzpicture}
[
    he/.style={draw, rounded corners, inner sep = 3pt},   
]
\definecolor{yellow1}{RGB}{250, 240, 190}
\definecolor{blue1}{RGB}{210, 225, 250}
\definecolor{red1}{RGB}{245, 195, 195}

\node[style={circle,thick,draw}, fill=yellow1] (A) at (3,6) {};
\node[style={circle,thick,draw}, fill=red1] (B) at (3,3) {};
\node[style={circle,thick,draw}, fill=blue1] (C) at (3,0) {};
\node[style={circle,thick,draw}, fill=blue1] (D) at (0,3) {};
\node[style={circle,thick,draw}, fill=red1] (E) at (0,0) {};
\node[style={circle,thick,draw}, fill=blue1] (F) at (6,3) {};
    
\node[he, fit=(B) (F), fill = gray, opacity = 0.1] {};
\node[fit=(A) (B) (D)] (fd) {};
\draw [rounded corners=10pt, fill = gray, opacity = 0.1] ($(fd.north east)+(0.0,0.5)$) -- ($(fd.south east)+(0,0)$) -- ($(fd.south west)+(-0.5,0)$)--cycle;

\node[he, fit=(B) (C), fill = gray, opacity = 0.1] {};
\node[he, fit=(D) (E), fill = gray, opacity = 0.1] {};
\node[he, fit=(C) (E), fill = gray, opacity = 0.1] {};

\end{tikzpicture}
}
\caption{The query graph and the overlap hypergraph of the set of queries from Example~\ref{ex:query-graph}.}
\label{fig:query-graph}
\end{figure}
\fi

\begin{example}
\label{ex:query-graph}
Consider the set $Q$ of six queries:
\begin{align*}
    q_1 &: \texttt{Postcode == A, Native == Y} \\
    q_2 &: \texttt{Postcode == A OR B} \\
    q_3 &: \texttt{Postcode == A OR C, Native == N} \\
    q_4 &: \texttt{Native == Y} \\
    q_5 &: \texttt{Postcode == C} \\
    q_6 &: \texttt{Postcode == B, Native == N}
\end{align*}
Then the query graph $\mathcal{G}(Q)$ and overlap hypergraph $\mathcal{H}(Q)$ are shown in Figure~\ref{fig:query-graph}. Note that in the hypergraph shown in the figure, every subset of the hyperedge shaded via a triangle is also a hyperedge. These hyperedges are not explicitly illustrated in the figure.
\qed
\end{example}

\section{Proofs}
\label{app:proofs}
\iffull
\else
\descr{Proof of Lemma~\ref{lem:lce}.}
By definition $\breve{f}$ is convex. It is also non-increasing since it is less than or equal to $\min\{f_1, f_2\}$, both of which are non-increasing. Also, by definition, $\breve{f}(x)\le \min\{f_1, f_2\} \le 1 - x$ for all $x \in [0, 1]$. Since $\breve{f}$ is convex, it is continuous over $(0, 1)$. Since $f_1(1) = f_2(1) = 1$, we have $\breve{f}(1) = 1$. Then, in the half neighborhood of $(1, \breve{f}(1))$, the graph of $\breve{f}$ coincides with that of $f_1$ or $f_2$ or both~\citep[Theorem 2.5]{lce-phd}. Therefore, $\breve{f}$ is continuous at $1$, due to the continuity of both $f_1$ and $f_2$ at 1. At $x = 0$, if $f_1(0) = f_2(0)$, then the continuity of $\breve{f}$ follows due to a similar argument as above. So let us assume that is not the case, and without loss of generality, let $f_1(0) < f_2(0)$. Then the graph of $\breve{f}$ in the half neighborhood of $(0, \breve{f}(0))$ is either a straight line~\citep[Theorem 2.5]{lce-phd}, or coincides with that of $f_1$. In either case, it is continuous at 0. It follows that $\breve{f}$ is a trade-off function.\qed 

\descr{Proof of Corollary~\ref{cor:parallelcompositiongdp}.}
From Theorem~\ref{the:par-comp:gdp}, $M$ is $\text{lce}\{G_{\mu_1}, G_{\mu_2}, \ldots, G_{\mu_k}\}$-DP. From the definition of $G_\mu$~\citep{dong2019gaussian}, $G_{\mu} = \Phi(\Phi^{-1}(1 - \alpha) - \mu)$, where $\Phi$ is the standard normal CDF, $\mu \ge 0$, and $0 \leq \alpha \leq 1$. Fix any $\mu_i$ and $ \mu_j$ such that $\mu_i \neq \mu_j$. Equating $G_{\mu_i}$ and $G_{\mu_j}$, and noting that $\Phi$ is a strictly increasing function, we get
\[
\Phi^{-1}(1 - \alpha) - \mu_i = \Phi^{-1}(1 - \alpha) - \mu_j,
\]
which implies $\mu_i = \mu_j$, a contradiction. Thus, $G_{\mu_i}$ and $G_{\mu_j}$ do not intersect for all real numbers in $[0, 1]$. Assume that $G_{\mu_i} < G_{\mu_j}$. From Corollary~\ref{cor:lce}, $\text{lce}\{G_{\mu_i}, G_{\mu_j}\} = G_{\mu_i}$, and
\[
\Phi^{-1}(1 - \alpha) - \mu_i < \Phi^{-1}(1 - \alpha) - \mu_j,
\]
implies that $\mu_i > \mu_j$. The result follows.\qed 
\fi

\iffull
\descr{Proof of Proposition~\ref{prop:trivial}.}
For part (1), assume that $\phi$ is not a contradiction. Then for at least one $a \in A$, we have $\phi(a) = 1$. By definition, for this $a$ we have $I(a) = 1$. Thus, $I$ and $\phi$ overlap. On the other hand if $\phi$ is a contradiction, then 
\[
C_{\phi}(A) \cap C_{I}(A) = C_{I^c}(A) \cap A = \emptyset \cap A = \emptyset.
\]
Therefore, in this case $I$ and $\phi$ are disjoint. For part (2), we have
\[
C_{I^c}(A) \cap C_{\phi}(A) = \emptyset \cap C_{\phi}(A) = \emptyset.
\]
Thus, $I^c$ and $\phi$ are disjoint.\qed
\fi

\descr{Proof of Proposition~\ref{prop:disjoint}.}
First consider that $q_1$ and $q_2$ are disjoint, and assume to the contrary that for all $i$, $\phi_{i, 1}$ and $\phi_{i, 2}$ overlap on $A_i$. Let $x$ be a row whose $i$\textsuperscript{th} coordinate is a member of the set $C_{\phi_{i, 1}}(A_i) \cap C_{\phi_{i,2}}(A_i)$, which by assumption is non-empty. Then $x$ satisfies both $q_1$ and $q_2$, contradicting the fact that they are disjoint.

Next assume that for some $i$, $\phi_{i, 1}$ and $\phi_{i, 2}$ are disjoint on $A_i$. Let $x \in \mathbb{D}$, then its $i$\textsuperscript{th} coordinate can not simultaneously satisfy $\phi_{i, 1}$ and $\phi_{i, 2}$. If $x_i$ satisfies neither, then it does not satisfy both $q_1$ and $q_2$. If $x_i$ satisfies $\phi_{i, 1}$, then it does not satisfy $\phi_{i, 2}$ and hence $x$ does not satisfy $q_2$. If $x_i$ satisfies $\phi_{i, 2}$, then it does not satisfy $\phi_{i, 1}$, and hence $x$ does not satisfy $q_1$. In all cases, $q_1$ and $q_2$ are disjoint.\qed

\descr{Proof of Proposition~\ref{prop:pairwise}.}
Part (1) follows from the properties of set intersection. If the intersection of $|Q|$ sets is non-empty, then the intersection of each pair of sets is necessarily non-empty. 
For part (2), we present a counterexample. Consider a domain $\mathbb{D}$ with just three rows $\{x_1, x_2, x_3\}$. Let $Q \coloneqq \{q_1, q_2, q_3\}$, which are defined such that the coverage of $q_1$ is $\{x_2, x_3\}$, that of $q_2$ is $\{x_1, x_3\}$, and that of $q_3$ is $\{x_1, x_2\}$. Then the three pairwise overlap, yet $C_Q(\mathbb{D}) = \emptyset$. \qed

\iffull
\else
\descr{Proof of Proposition~\ref{prop:max-overlap}.} The first equality follows immediately from the definitions of maximum weight overlap and the set $O_1$. We consider the second equality. Let $A_1 \coloneqq \{\mathrm{comp}(Q') : Q' \in O_1\}$, and let $A_2 \coloneqq \{\mathrm{comp}(Q') : Q' \in O_2\}$. Since $A_2 \subseteq A_1$, we have that $\max A_1 \geq \max A_2$. Next consider $\max A_2$. Let $Q' \in O_1$, and let $Q'' \supseteq Q'$, which is guaranteed to be in $O_2$ by construction. Then, by the monotonicity property of the composition function, we have that $\max A_2 \geq \mathrm{comp}(Q'') \geq \mathrm{comp}(Q')$. Thus, $\max A_2 \geq \max A_1$. From this it follows that $\max A_2 = \max A_1 = \gamma_w(Q)$.\qed

\descr{Proof of Theorem~\ref{the:mo-equals-sen}.} 
Let $Q'$ be the subset of $Q$ such that $\gamma_w(Q) = \sum_{q \in Q'} \Delta q$. Let $D$ and $D'$ be the neighboring data sets such that $\Delta Q = \sum_{q \in Q} |q(D) - q(D')|$. Let us assume that the row they differ in is $x$. Let $Q'' \subseteq Q$ be such that for all $q \in Q''$, $q(D) \neq q(D')$. Then, through the consistency condition
\[
\Delta Q = \sum_{q \in Q} |q(D) - q(D')| = \sum_{q \in Q''} |q(D) - q(D')| = \sum_{q \in Q''} \Delta q.
\]
Since all queries in $Q''$ cover $x$, $C_{Q''}(\mathbb{D}) \neq \emptyset$. Therefore, according to the definition of maximum overlap 
\[
\Delta Q = \sum_{q \in Q''} \Delta q \leq  \sum_{q \in Q'} \Delta q = \gamma_w(Q).
\]
Next take $Q'$, and let $x \in C_{Q'}(\mathbb{D})$. Let $D_x$ be a data set containing $x$, and $D_{\neg x}$ be the neighboring data set of $D_x$ with one instance of $x$ removed. Once again, according to the consistency condition and the definition of maximum overlap, we have
\begin{align*}
   \gamma_w(Q) &= \sum_{q \in Q'} \Delta q  \\
               &= \sum_{q \in Q'} \lvert q(D_x) - q(D_{\neg x}) \rvert \\
               &= \sum_{q \in Q} \lvert q(D_x) - q(D_{\neg x}) \rvert \\
               &\leq \max_{D \sim D'} \sum_{q \in Q} \lvert q(D) - q(D') \rvert \\
               &= \Delta Q.
\end{align*}
Hence $\gamma_w(Q) = \Delta Q$.\qed

\descr{Proof of Proposition~\ref{prop:cliquevscover}.} 
Recall the definition of $\gamma_w(Q)$:
\[
\gamma_w(Q) \coloneqq \max_{Q' \subseteq Q} \{\mathrm{comp}(Q') : C_{Q'}(\mathbb{D}) \neq \emptyset\}.
\]
By part (1) of Proposition~\ref{prop:pairwise}, all queries in $Q'$ pairwise overlap, and hence form a complete subgraph of the query graph. By part (2) of Proposition~\ref{prop:pairwise}, it is also possible for a clique $Q''$ on the query graph to contain queries that all pairwise overlap, but have $C_{Q''}(\mathbb{D}) = \emptyset$. 
By the monotonicity property of $\mathrm{comp}$, we must have that $\omega_w(Q)$ is bounded from below by $\gamma_w(Q)$.\qed
\fi

\iffull
\else
\section{Maximum Clique and Approximate Chromatic Number Algorithms}
\label{app:algos}
\descr{Maximum clique.} For our experiments, we implement the maximum clique algorithm presented in \citep{coudert1997exact}. A description of the algorithm is given in Algorithm \ref{alg:maxcliquekcolor}. The algorithm takes as input the query graph $\mathcal{G}(Q)$, a candidate maximum clique $X$, the current best known clique $B$ and an upper bound on the maximum weight of the clique $\mathrm{ub}$. $X$ and $B$ are initialized as empty sets, and $\mathrm{ub}$ is initialized as $\mathrm{comp}(Q)$. In the algorithm $N(q)$ denotes the neighbors of a query $q$ in the query graph. The algorithm recursively builds a maximum clique by selecting maximum degree nodes from the query graph and pruning nodes that are not adjacent to the currently selected nodes in $X$. This strategy quickly finds a candidate maximum clique, which is set to $B$ when no nodes remain in $\mathcal{G}(Q)$. This candidate maximum clique forms a lower bound on the weight of the true maximum clique.

The algorithm then backtracks, recursively exploring the search space to find a larger weighted clique than $B$. To prune the search space, an upper bound on the maximum weight of $X$ is computed by adding the current weight $\mathrm{comp}(X)$ and an approximate coloring of the remaining query graph $\tilde \chi_w(G)$. If this upper bound is smaller than $\mathrm{comp}(B)$, there is no point in searching further, allowing the algorithm to prune and backtrack. This algorithm could be further improved. For example, the authors of \citep{konc2007improved} note that there is a trade-off between the run-time cost of computing $\tilde \chi_w(G)$ and the level of pruning performed at different recursion depths. Several other optimized algorithms for maximum clique are discussed in \citep{wu2015review}.

\begin{algorithm}[!ht]
\SetAlgoLined
\SetAlCapSkip{1em}
\DontPrintSemicolon{}
\let\oldnl\nl
\newcommand{\nonl}{\renewcommand{\nl}{\let\nl\oldnl}}
Initialize, $G \leftarrow \mathcal{G}(Q)$, $X \leftarrow \emptyset$, $B \leftarrow \emptyset$, $\mathrm{ub} \leftarrow \mathrm{comp}(Q)$.

\texttt{MaxWeightClique} ($G$, $X$, $B$, $\mathrm{ub}$):\;
\If{$G = \emptyset$}{
    \Return $X$.\;
}
$\tilde \chi_w(G) \leftarrow \text{an approximate coloring of } G$.\;
$\mathrm{ub}' \leftarrow \min(\mathrm{ub}, \mathrm{comp}(X) + \tilde \chi_w(G))$.\;
\If{$\mathrm{ub}' \leq \mathrm{comp}(B)$}{
    \Return $B$.\;
}
$q \leftarrow \text{a max degree vertex of } G$.\;
$G' \leftarrow \text{graph induced by } N(q)$.\;
$X' \leftarrow X \cup \{q\}$.\;
$B' \leftarrow \texttt{MaxWeightClique}(G',X', B, \mathrm{ub}')$.\;
\If{$\mathrm{ub}' = \mathrm{comp}(B')$}{
    \Return $B$.\;
}
$G'' \leftarrow \text{graph induced by } V(G) - \{q\}$.\;
\Return $\texttt{MaxWeightClique}(G'', X, B, \mathrm{ub}')$.\;
\caption{Maximum Clique Algorithm with coloring-based pruning \citep{coudert1997exact}}
\label{alg:maxcliquekcolor}
\end{algorithm}


\descr{Approximate chromatic number.} The DSatur algorithm given in Algorithm~\ref{alg:approxcoloring} takes as input the set of queries $Q$ (and their query graph $\mathcal{G}(Q)$), as well as an empty partition $\mathcal{S}$. The algorithm returns a valid coloring $\mathcal{S}$. Lines 4-14 of this algorithm comprise a greedy algorithm for finding a coloring of a graph. The algorithm simply chooses vertices one by one, and checks to see if the vertex can be added to any existing independent sets in $\mathcal{S}$ (lines 4-9). If it cannot, the vertex becomes a new independent set (lines 10-12). Once all vertices have been placed into $\mathcal{S}$, the algorithm terminates.

What separates DSatur from a typical greedy algorithm is the heuristic on line 3 for selecting vertices. The saturation degree of an uncolored vertex $v$ is defined as the number of different colors assigned to adjacent vertices. Thus, a vertex with maximal saturation degree can be considered as one that has the fewest available colors from which to choose. In practice, this heuristic works very well, and for certain classes of graphs the DSatur algorithm produces an optimal coloring. In the worst case, the DSatur algorithm has run time $\mathcal{O}(t^2)$, where $t$ is the number of vertices in the graph \citep{lewis2015guide}.

\begin{algorithm}[!ht]
\SetAlgoLined
\SetAlCapSkip{1em}
\DontPrintSemicolon{}
\let\oldnl\nl
\newcommand{\nonl}{\renewcommand{\nl}{\let\nl\oldnl}}
\texttt{DSatur}($X \leftarrow Q$, $\mathcal{S} \leftarrow \emptyset$).\;
\While{$X \neq \emptyset$}{
    choose $q \in X$ with maximal saturation degree.\;
    \For{$j \leftarrow 1$ to $|\mathcal{S}|$}{
        \eIf{$S_j \cup \{q\}$ is an independent set}{
            $S_j \leftarrow S_j \cup \{q\}$.\;
            \textbf{break}\;
        }
        {
            $j \leftarrow j + 1$
        }
    }
    \If{$j > |\mathcal{S}|$}{
        $S_j \leftarrow \{q\}$.\;
        $\mathcal{S} \leftarrow \mathcal{S} \cup S_j$.\;
    }
    $X \leftarrow X - \{q\}$.\;
}
\Return $\mathcal{S}$
\caption{DSatur Algorithm for the approximate chromatic number of a graph \citep{lewis2015guide}}
\label{alg:approxcoloring}
\end{algorithm}


\fi

\section{Further Results on Random Synthetic Census Queries}
\label{app:mo-cliq-util-gain}
\iffull
\else
\begin{figure*}[t!]
     \centering
     \begin{subfigure}[b]{0.45\textwidth}
         \centering
         \includegraphics[width=\textwidth]{figures/chrom-pred.pdf}
         \caption{Influence of distribution on the number of predicates}
         \label{fig:chromaticnumpredicates}
     \end{subfigure}
     \begin{subfigure}[b]{0.45\textwidth}
         \centering
         \includegraphics[width=\textwidth]{figures/chrom-fields.pdf}
         \caption{Influence of distribution on the attributes}
         \label{fig:chromaticfields}
     \end{subfigure}
     \begin{subfigure}[b]{0.45\textwidth}
         \centering
        \includegraphics[width=\textwidth]{figures/chrom-numvals.pdf}
         \caption{Influence of distribution on the number of values per predicate}
         \label{fig:chromaticnumnumvalues}
     \end{subfigure}
     \begin{subfigure}[b]{0.45\textwidth}
         \centering
         \includegraphics[width=\textwidth]{figures/chrom-vals.pdf}
         \caption{Influence of distribution on the attribute values}
         \label{fig:chromaticnumvalues}
     \end{subfigure}
        \caption{Utility gain versus number of queries, for the \textbf{approximate chromatic number} algorithm, for different distributions on the four quantities (random variables) used in generating queries from Section~\ref{subsec:scalingexperiments}. The label `uniform' refers to the uniform distribution; `exp' refers to the exponential distribution (with three different scale parameters); and `norm' refers to the normal distribution (with three different standard deviations).}
        \label{fig:budgetdistributionschromatic}
\end{figure*}
Since the clique number and maximum overlap algorithms did not scale well for a larger set of queries, i.e., timing out around 50 to 80 queries as shown in Section~\ref{subsec:scalingexperiments}, we focus on the results for the approximate chromatic number algorithm for query sets of size up to $1000$. For each query set size, we repeated the experiment three times, and report the results in Figure~\ref{fig:budgetdistributionschromatic}. For three of the four quantities, the utility gain through the uniform distribution is either comparable or better than the other distributions. The utility gain is highest for the uniform distribution on the number of predicates selected. The uniform distribution in this case can select a larger number of predicates, thus making more queries overlap. On the other hand the utility gain for the uniform distribution on the number of values taken by a predicate is the lowest. This is again explainable, as the exponential distribution on the number of values taken per predicate means that the resulting queries are more likely to overlap. 
\fi
\iffull
Figures \ref{fig:budgetdistributionsclique} and \ref{fig:budgetdistributionsoverlap} show the utility gain via the clique number and maximum overlap algorithms on the random synthetic census queries of Section~\ref{subsec:randomcensusqueries}. These results are obtained by considering non-uniform distributions for the four quantities controlling the random generation of queries identified in Section~\ref{subsec:scalingexperiments}. Since the two algorithms frequently time-out after a small number of queries under non-uniform distribution of these quantities (see Section~\ref{subsec:scalingexperiments}), we only show the results for up to 80 queries for the maximum clique algorithm and up to 50--60 queries for the maximum overlap algorithm. Both algorithms show similar trends. 

\begin{figure*}[!ht]
     \centering
     \begin{subfigure}[b]{0.45\textwidth}
         \centering
         \includegraphics[width=\textwidth]{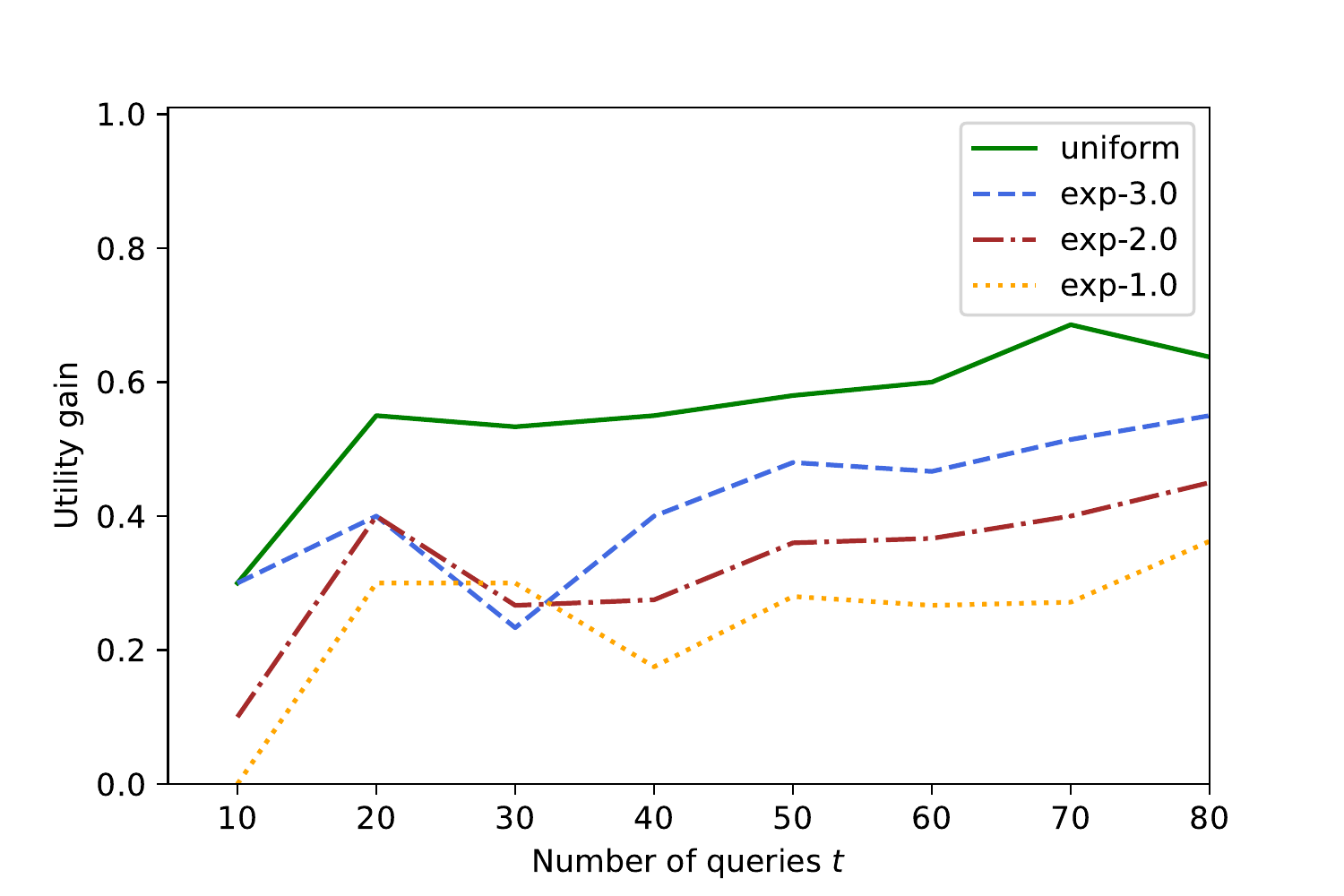}
         \caption{Influence of distribution on the number of predicates}
         \label{fig:cliquenumpredicates}
     \end{subfigure}
     \hfill
     \begin{subfigure}[b]{0.45\textwidth}
         \centering
         \includegraphics[width=\textwidth]{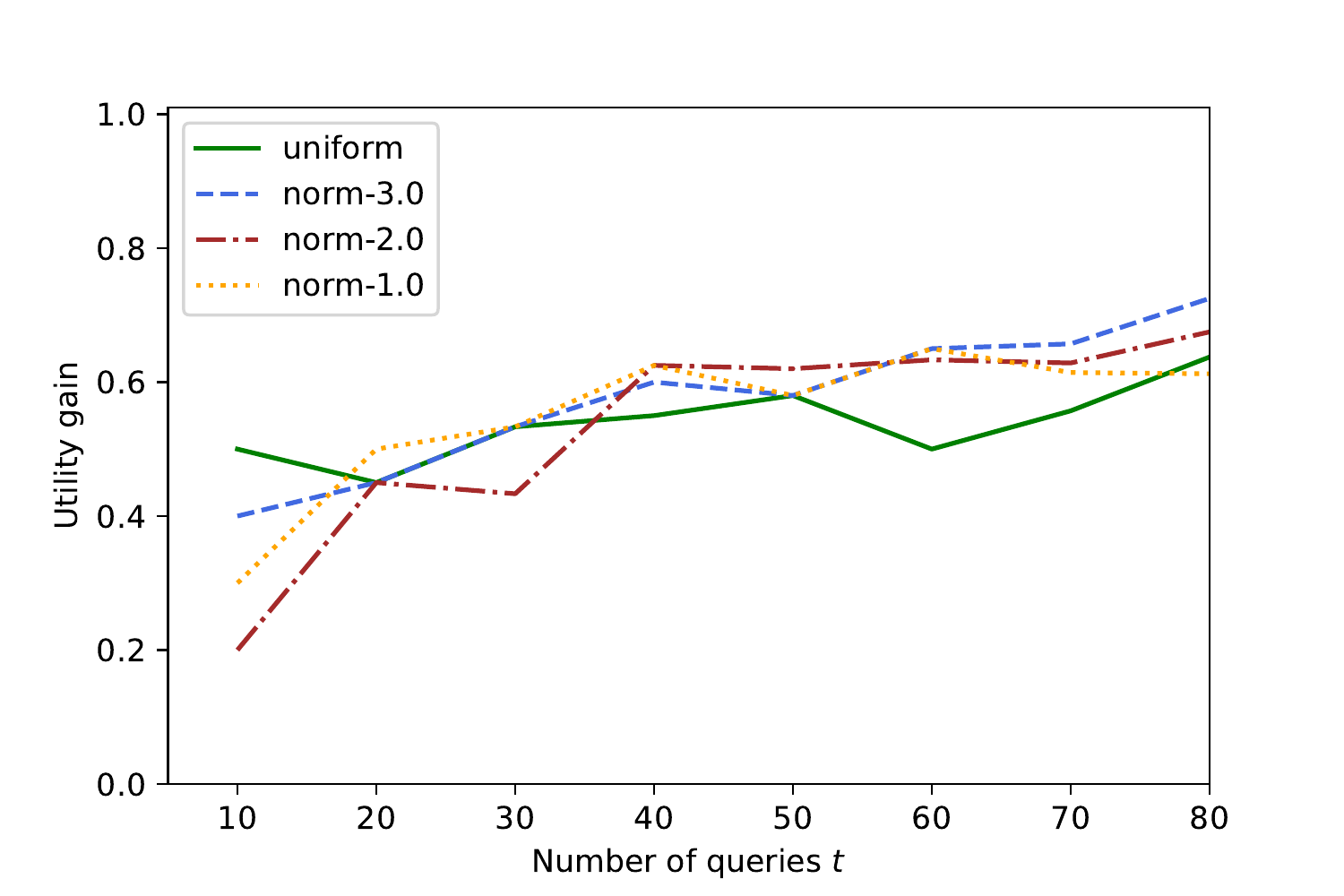}
         \caption{Influence of distribution on the attributes}
         \label{fig:cliquefields}
     \end{subfigure}
     \hfill
     \begin{subfigure}[b]{0.45\textwidth}
         \centering
         \includegraphics[width=\textwidth]{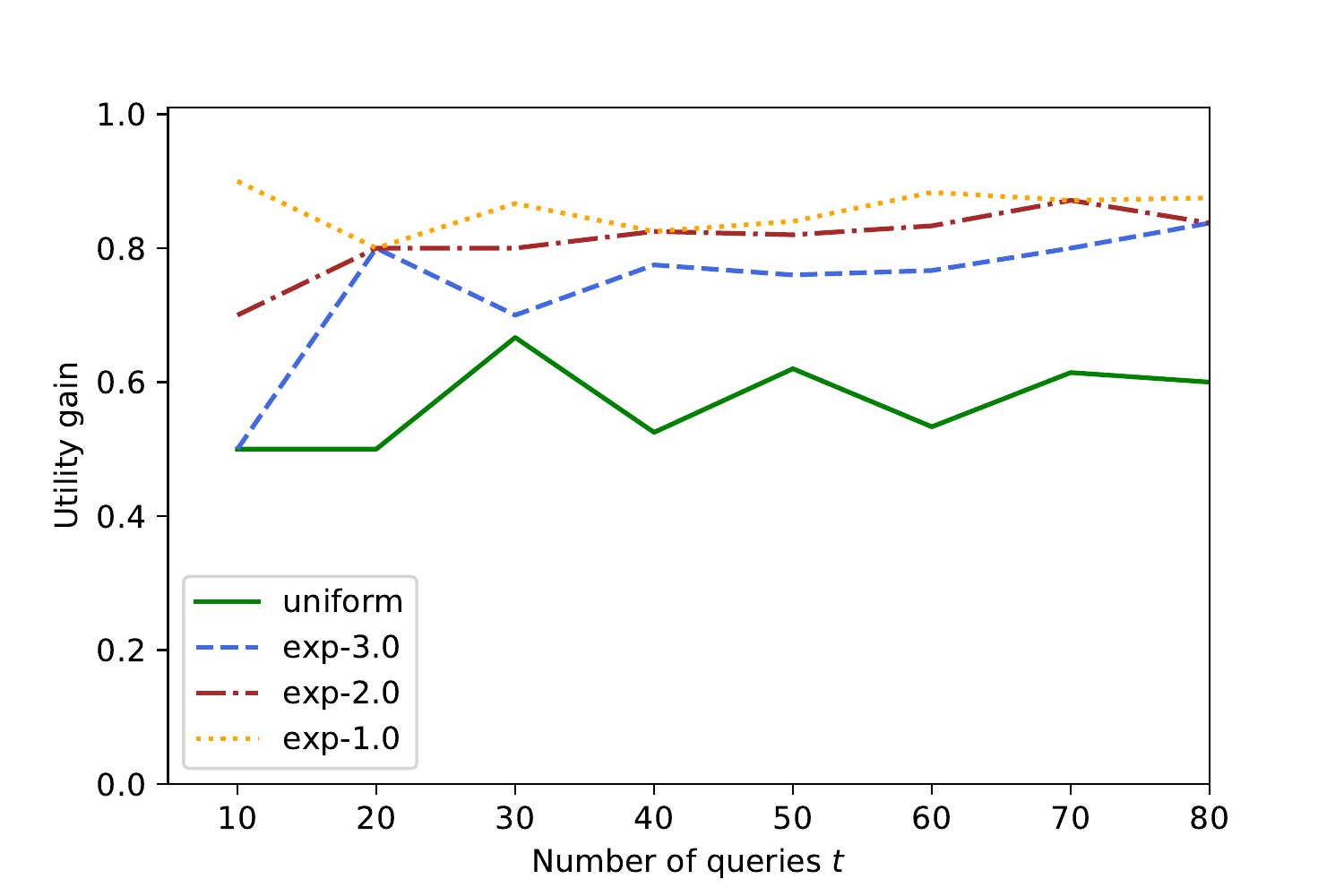}
         \caption{Influence of distribution on the number of values per predicate}
         \label{fig:cliquenumnumvalues}
     \end{subfigure}
     \hfill
     \begin{subfigure}[b]{0.45\textwidth}
         \centering
         \includegraphics[width=\textwidth]{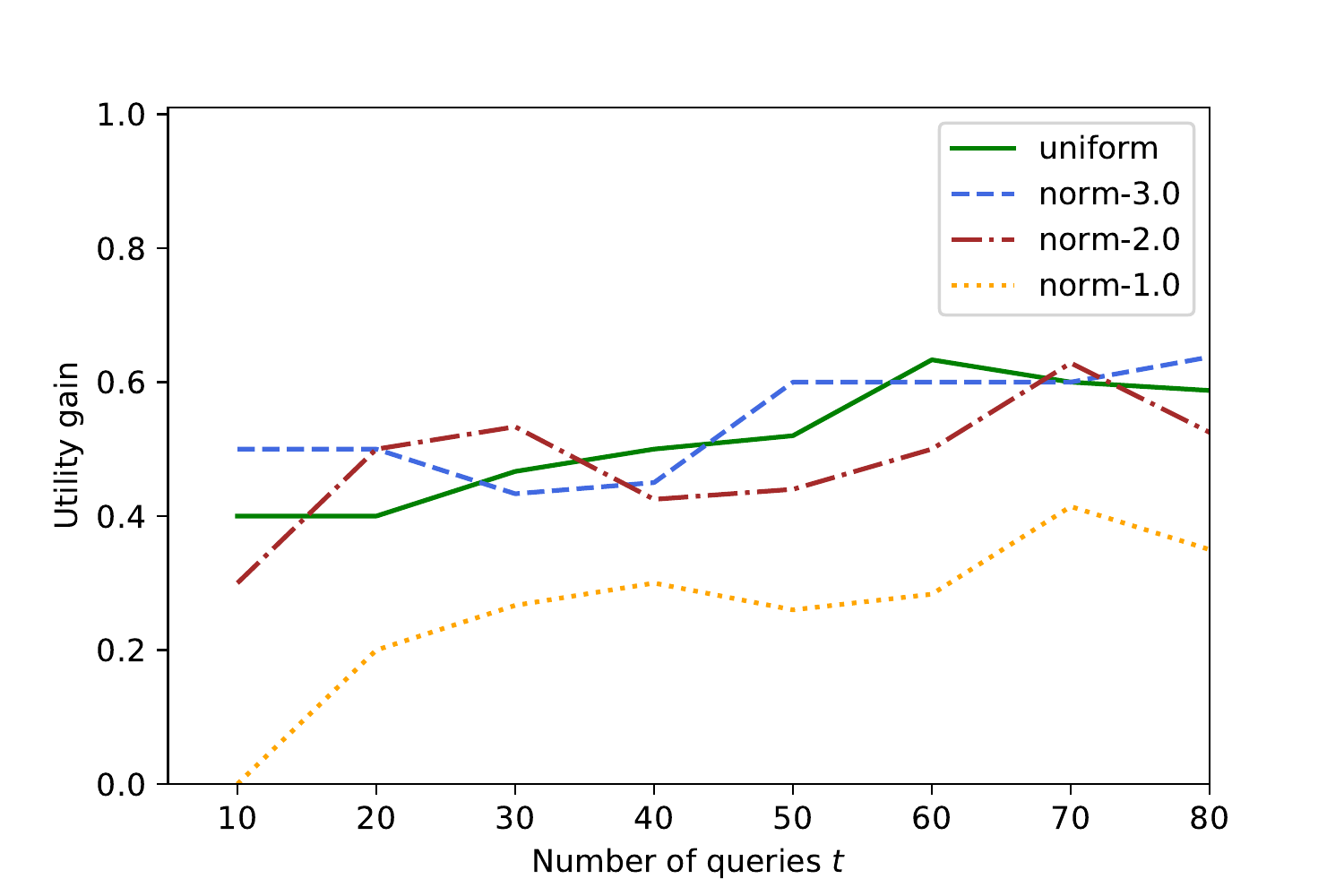}
         \caption{Influence of distribution on the attribute values}
         \label{fig:cliquenumvalues}
     \end{subfigure}
        \caption{Utility gain versus number of queries, for the \textbf{clique number} algorithm, for different distributions on the four quantities (random variables) used in generating queries from Section~\ref{subsec:scalingexperiments}. The label `uniform' refers to the uniform distribution; `exp' refers to the exponential distribution (with three different scale parameters); and `norm' refers to the normal distribution (with three different standard deviations).}
        \label{fig:budgetdistributionsclique}
\end{figure*}

\begin{figure*}[!ht]
     \centering
     \begin{subfigure}[b]{0.45\textwidth}
         \centering
         \includegraphics[width=\textwidth]{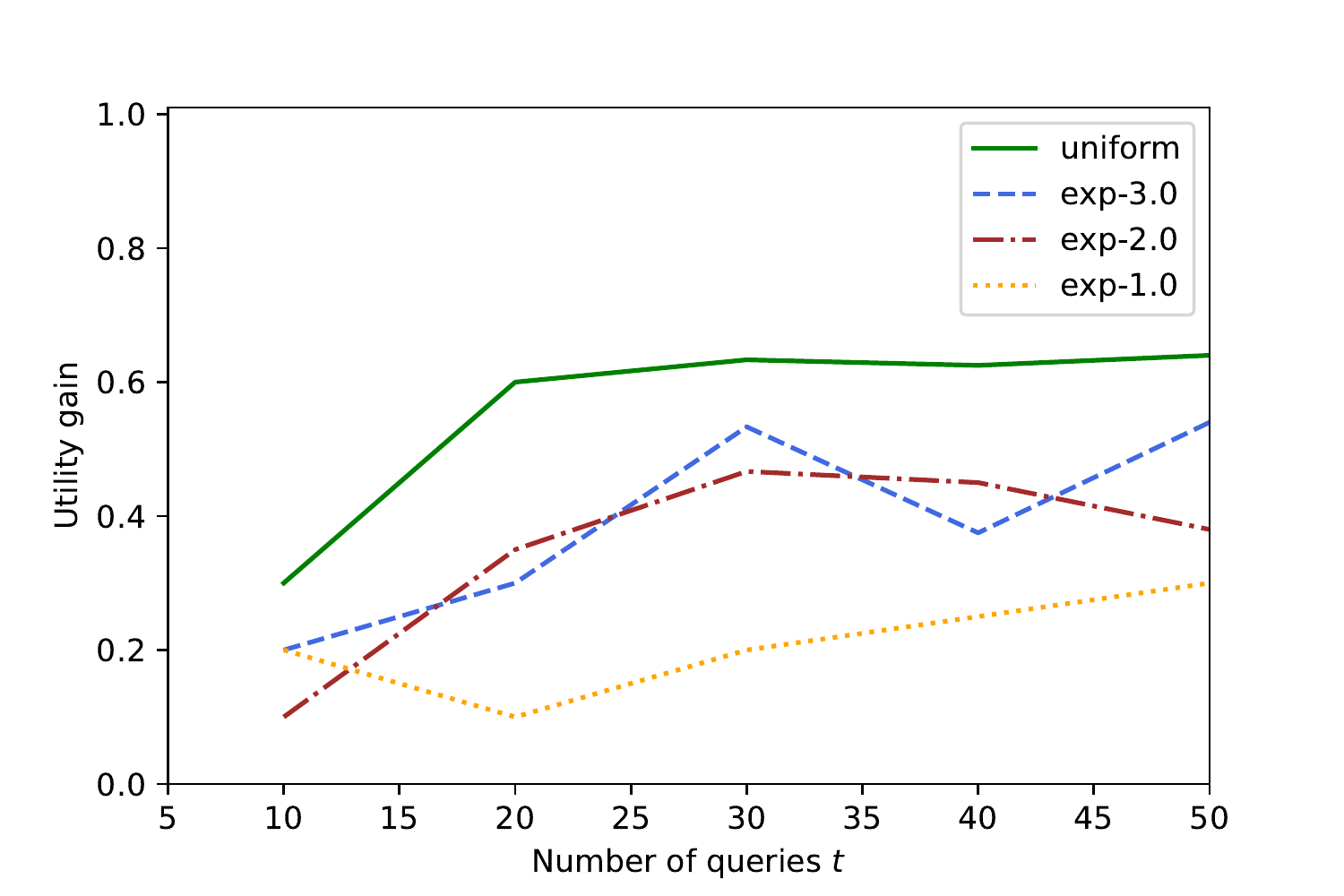}
         \caption{Influence of distribution on the number of predicates}
         \label{fig:overlapnumpredicates}
     \end{subfigure}
     \hfill
     \begin{subfigure}[b]{0.45\textwidth}
         \centering
         \includegraphics[width=\textwidth]{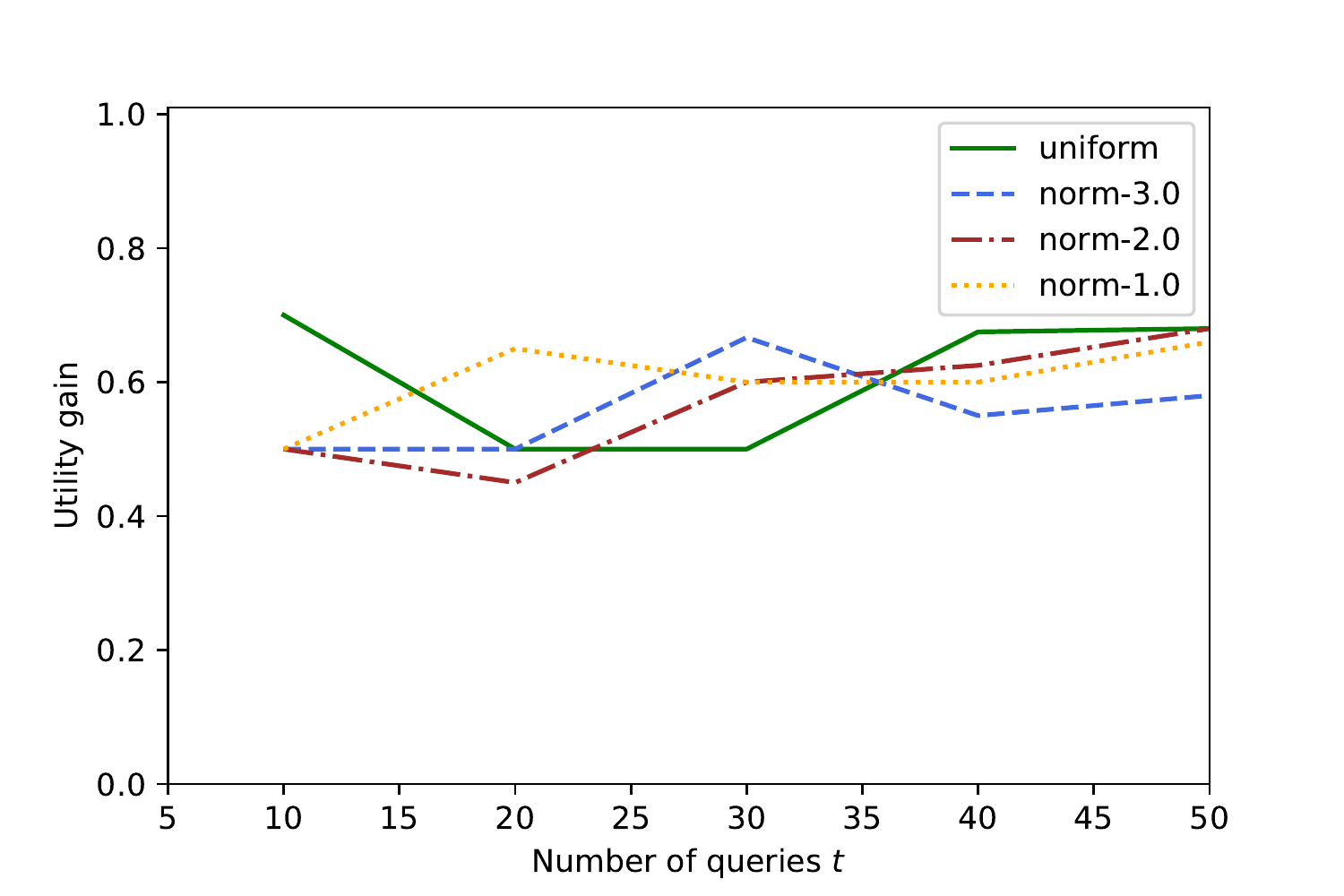}
         \caption{Influence of distribution on the attributes}
         \label{fig:overlapfields}
     \end{subfigure}
     \hfill
     \begin{subfigure}[b]{0.45\textwidth}
         \centering
         \includegraphics[width=\textwidth]{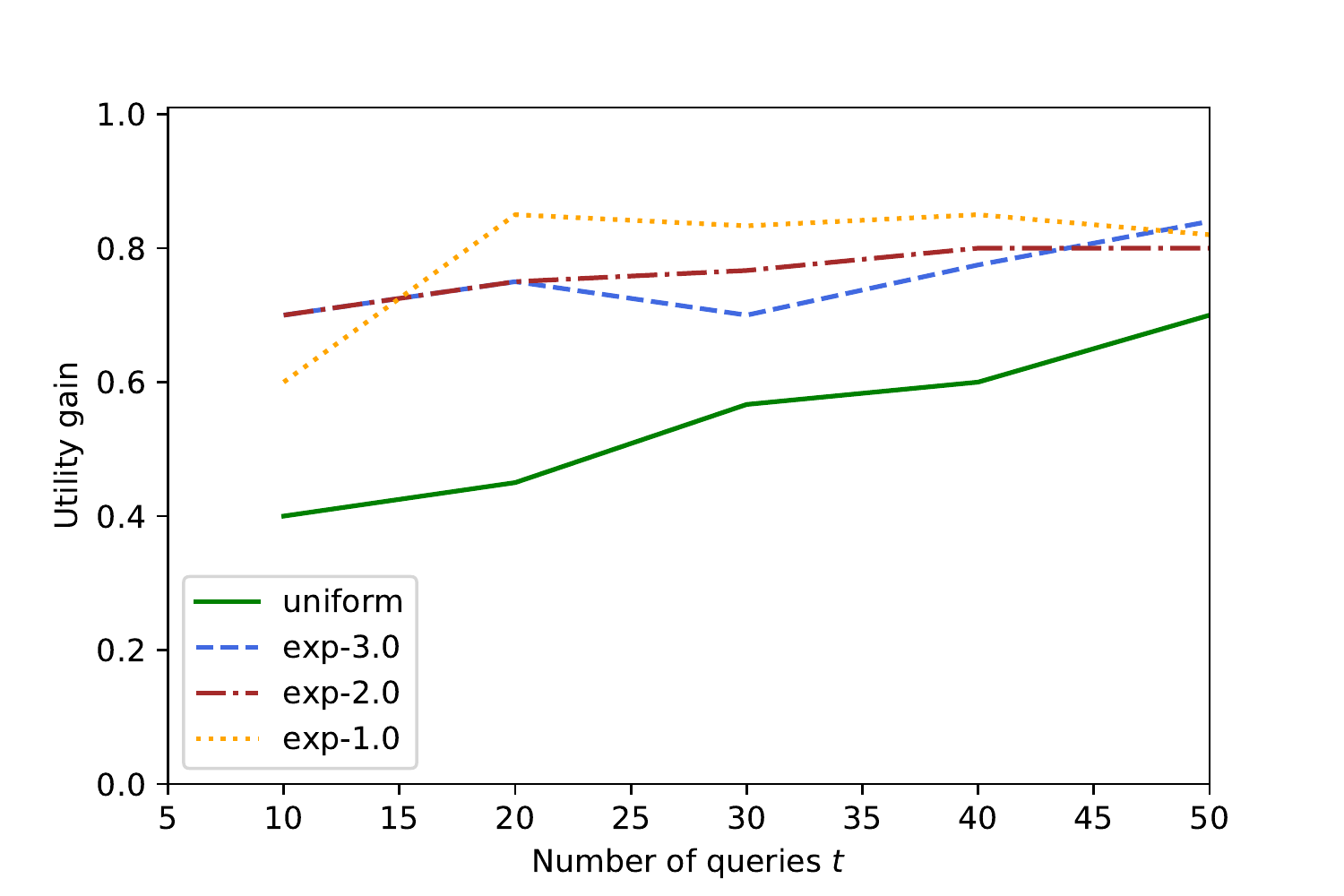}
         \caption{Influence of distribution on the number of values per predicate}
         \label{fig:overlapnumnumvalues}
     \end{subfigure}
     \hfill
     \begin{subfigure}[b]{0.45\textwidth}
         \centering
         \includegraphics[width=\textwidth]{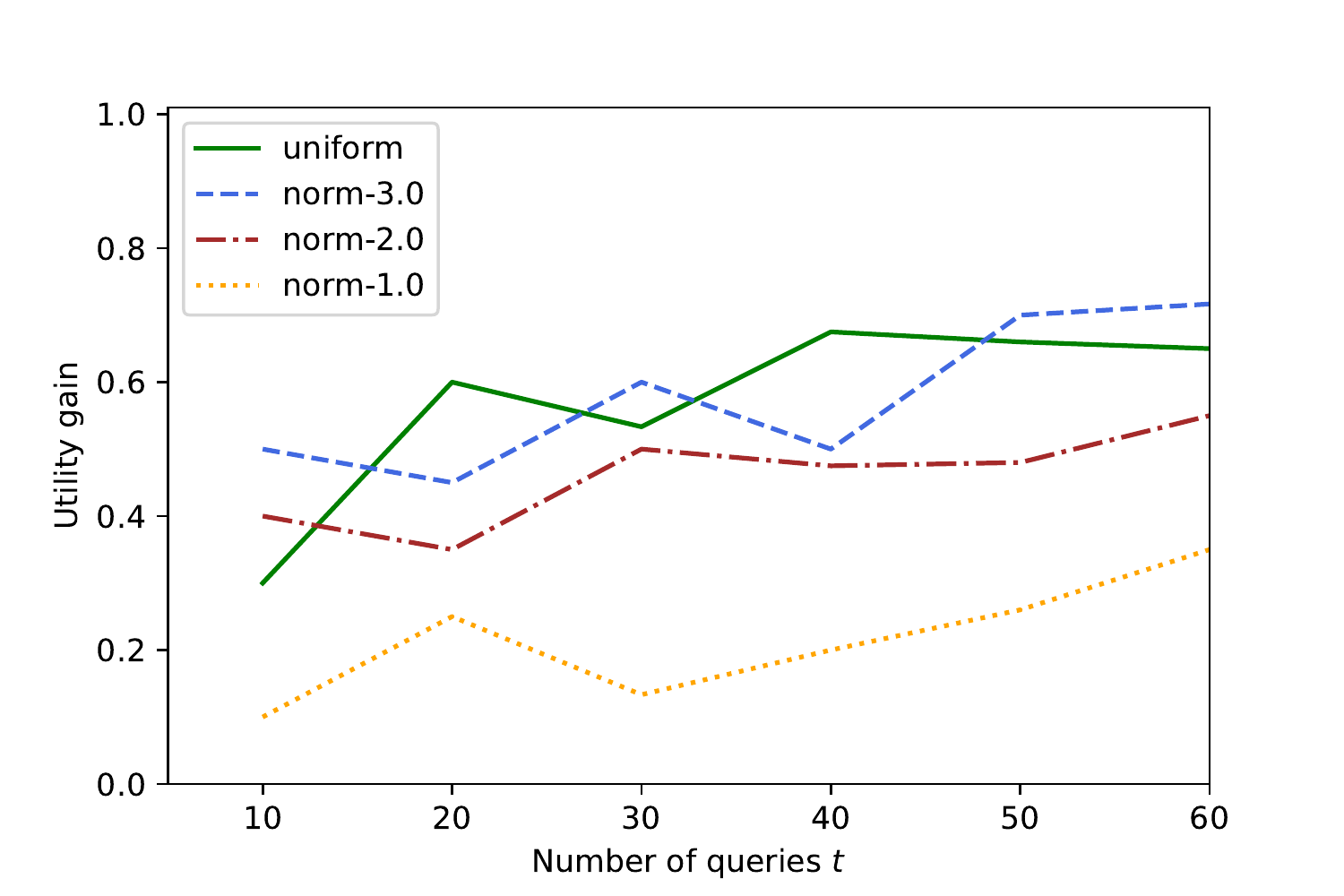}
         \caption{Influence of distribution on the attribute values}
         \label{fig:overlapnumvalues}
     \end{subfigure}
        \caption{Utility gain versus number of queries, for the \textbf{maximum overlap} algorithm, for different distributions on the four quantities (random variables) used in generating queries from Section~\ref{subsec:scalingexperiments}. The label `uniform' refers to the uniform distribution; `exp' refers to the exponential distribution (with three different scale parameters); and `norm' refers to the normal distribution (with three different standard deviations).}
        \label{fig:budgetdistributionsoverlap}
\end{figure*}
\fi

\end{document}